\documentclass[12pt]{article}

\usepackage[margin=1in]{geometry}  
\usepackage{epsfig}
\usepackage{graphics,graphicx,color}
\usepackage{amssymb,amsfonts,amsthm,amsmath}
\usepackage{bbm}
\usepackage{multirow, booktabs}
\usepackage{mathrsfs}
\usepackage{natbib}
\usepackage{slashbox}
\usepackage{longtable,booktabs}
\usepackage{arydshln}
\usepackage{paralist}
\usepackage{times}
\usepackage{fancyhdr}
\usepackage[ruled,vlined]{algorithm2e}
\usepackage{titling}
\usepackage{setspace}
\usepackage{booktabs,makecell,threeparttable}
\usepackage{pdflscape} 
\usepackage{rotating}

\newtheorem{lemma}{Lemma}
\newtheorem{definition}{Definition}
\newtheorem{proposition}{Proposition}

\newtheorem{theorem}{Theorem}

\newtheoremstyle{nonitalic} 
  {\topsep}                 
  {\topsep}                 
  {\normalfont}             
  {}                        
  {\bfseries}               
  {.}                       
  {.5em}                    
  {}                        

\theoremstyle{nonitalic}
\newtheorem{example}{Example}

\newcommand{\tr}{\mbox{tr}}

\begin{document}

\title{TCARD: Nearly Balanced Two-Level Designs with Treatment Cardinality Constraints with an Application to LLM Prompt Engineering}
\author{Kexin Xie$^{a}$, Ryan Lekivetz$^{b}$, and Xinwei Deng$^{a*}$}
\date{%
    $^a$Department of Statistics, Virginia Tech, USA\\%
    $^b$SAS Institute, USA\\[2ex]%
}
\maketitle
\def\thefootnote{*}\footnotetext{Address for correspondence: 
Xinwei Deng, Professor of Statistics, Co-Director of Statistics and Artificial Intelligence Laboratory at Virginia Tech (Email: xdeng@vt.edu).}

\vspace{-12pt}
\begin{abstract}
Modern experimental designs often face the so-called treatment cardinality constraint, which is the constraint on the number of included factors in each treatment. Experiments with such constraints are commonly encountered in engineering simulation, AI system tuning, and large-scale system verification. This calls for the development of adequate designs to enable statistical efficiency for modeling and analysis within feasible constraints. In this work, we study two-level designs under this $k$-treatment cardinality constraint (TCARD), where the design matrix $\mathbf{X} \in \{0,1\}^{n \times p}$ has constant row sums equal to $k$. Although TCARDs are closely related to balanced incomplete block designs (BIBDs), exact BIBD structure is unavailable for many practical $(n,p,k)$ combinations. This leads to the notion of nearly balanced TCARDs, which we prove minimize the first two components of the generalized word-length pattern. We also show that good projection behavior in this setting is governed by two count-based regularities: balanced factor replications and uniform pairwise concurrences. Motivated by this characterization, we then propose the Balanced Concurrence Deviation ($\Phi_{\mathrm{BCD}}$), a model-free objective that jointly penalizes replication imbalance and concurrence dispersion. We further show that this criterion is closely connected to classical optimality principles, including $(M,S)$-optimality, centered $\mathrm{UE}(s^2)$ criterion, and Bayesian $D$-optimality. To construct designs minimizing $\Phi_{\mathrm{BCD}}$, we develop a coordinate-exchange (CE) algorithm with efficient incremental updates, together with a simulation-based procedure for calibrating the criterion weights to the intended downstream task. Numerical experiments confirm that the proposed method compares favorably with existing alternatives across a range of problem sizes and constraint strengths.
\end{abstract}
\noindent\textbf{Keywords:} Experimental Design;  Treatment Cardinality Constraint; Row constraint Space Balanced Incomplete Block Design; Coordinate-Exchange Algorithm; Bayesian $D$-Optimality.





\section{Introduction}\label{sec:intro}

Many statistical problems involve structural constraints that limit how experimental factors may be simultaneously activated. In one common setting, each experimental run is restricted to activating only a fixed number of factors, so that only a subset may depart from baseline levels at any one time. We refer to this restriction as the $k$-treatment cardinality constraint (TCARD), under which each of $n$ experimental runs activates exactly $k$ of the $p$ available factors. This constraint
leads to binary design matrices whose rows have fixed cardinality, and excludes the full factorial structure typically assumed in classical design theory.

Such cardinality restrictions arise across a striking range of applications. In sensor scheduling, energy and bandwidth considerations limit how many sensors can be powered simultaneously \citep{mo2011sensor, wang2009optimization}. In feature-flag experiments on digital platforms, only a small number of candidate features can be enabled in each user session to control risk and interaction complexity \citep{ternava2022interaction}. In high-throughput screening for drug
discovery, physical and biochemical constraints---well volume, minimum detectable concentration, solution ionic strength---restrict how many compounds can be pooled in a single experimental unit \citep{kainkaryam2009pooling, smucker2025large}. Similar limitations appear in mixture and subsystem screening in engineering, where safety, cost, or throughput considerations bound the number of active components. A newer and rapidly growing class of TCARD problems has emerged at the
interface of statistics and artificial intelligence, in the design of prompts for large language models (LLMs). A modern prompt is assembled from reusable components such as reasoning triggers, worked examples, and self-verification instructions, each of which can measurably shift the
model's accuracy on the downstream task. Yet practitioners do not concatenate every component at their disposal: transformer attention is finite, long prompts suffer measurable reasoning degradation well before any technical context limit \citep{liu2024lost, levy2024same}, and even semantically irrelevant additions reliably reduce accuracy on arithmetic reasoning benchmarks \citep{shi2023large}. Identifying which components carry signal, under a small per-prompt budget and a limited experimental budget, is a screening problem whose operational structure is exactly that of TCARD design. We return to this scenario as an empirical anchor throughout the paper.

Formally, let $\mathbf X = (x_{ij}) \in \{0,1\}^{n \times p}$ denote a design matrix in which entry $x_{ij} = 1$ indicates that factor $j$ is activated in run $i$. The $k$-treatment cardinality constraint requires that each row of $\mathbf X$ sums to exactly $k$:
\begin{equation}\label{eq:tcard}
  \sum_{j=1}^{p} \mathbf{1}\{x_{ij} = 1\} = k, \quad i = 1, \ldots, n,
\end{equation}
so that every experimental run activates the same number of factors. We denote the class of all such two-level designs by $\mathcal{D}(n, 2^p, k)$.

The need for principled experimental design under such TCARD constraints has been reinforced by several recent developments. In the screening literature, the principle of effect sparsity, namely that only a small number of factors drive the response, has long guided the construction and analysis of fractional factorial and Plackett--Burman designs \citep{box1986analysis, box1961multi, plackett1946design}. More recently, definitive screening designs \citep{jones2011class} and supersaturated designs \citep{booth1962some, wu1993construction, jones2014optimal} have extended the reach of screening to settings with many more factors than runs. In particular, \citet{smucker2025large} introduced row-constrained supersaturated designs for high-throughput screening, where the number of active factors per row is physically bounded. Their work demonstrates both the practical prevalence of per-run activation limits and the statistical consequences of ignoring them. However, their construction targets the supersaturated regime ($p > n$) and the $\mathrm{UE}(s^2)$ criterion, leaving open the question of optimal design under cardinality constraints when $n \geq p$ and broader projection-based criteria are of interest. 


From a statistical perspective, the TCARD constraint has important consequences for information structure and inference. Fixed cardinality restrictions induce intrinsic dependencies among the columns of the design matrix, leading to singular or nearly singular information matrices after intercept removal. As a result, many classical notions of balance, orthogonality, and optimality must be reconsidered in this setting. In particular, the quality of projected subdesigns becomes central, since statistical analyses often rely on reduced models obtained through factor screening or model simplification. These considerations motivate a careful examination of how replication patterns and pairwise co-activation structure govern the distribution of information under cardinality-constrained experimental designs.

Several fundamental challenges arise when one attempts to construct designs with good projection properties under the TCARD constraint. Algebraically, the $k$-slice of the hypercube is an irregular subset of $\{0,1\}^{n \times p}$, so strength-$t$ orthogonal arrays are typically infeasible for $t \geq 2$, and classical full or fractional factorial constructions do not apply directly. Balanced incomplete block designs (BIBDs) provide a natural reference by equalizing factor replications and pairwise co-occurrences when they exist. However, BIBD existence is sporadic in $(p, k, n)$ and, even when available, governs only the binary support structure rather than broader projection behavior. On the algorithmic side, existing design criteria were not built with these quantities in mind. Distance-based heuristics such as the maximin criterion promote inter-run separation but do not directly control column replication or pairwise concurrence. Information-based $A$-, $D$-, and $E$-criteria, together with the eigenvalue-spread surrogates derived from them, are typically formulated for unconstrained or orthogonal-array-feasible spaces and,  under the row-sum restriction~\eqref{eq:tcard}, no longer correspond to the screen-and-refit inference they are meant to support. More broadly, no existing criterion is built around the two quantities identified above---replication balance and pairwise concurrence balance---or offers a constructive route to designs that control both simultaneously under a fixed cardinality budget. This is the gap the present paper addresses.

This observation motivates a single weighted criterion, the Balanced Concurrence Deviation ($\Phi_{\mathrm{BCD}}$), whose relative emphasis is controlled by a tuning parameter $w_1/w_2$. The question of how to set such weights is not new. Weighted and compound design criteria have a long history in optimal design, combining multiple inferential goals into a single weighted objective \citep{nathanson1985multiple, cook1994equivalence, gilmour2012optimum, mcgree2008compound, morgan2017optimal, lu2011optimization}. 
However, despite this extensive literature on compound and multiobjective design criteria, there is little principled guidance on how the weights themselves should be chosen. In practice, they are often selected through visual inspection of trade-off curves \citep{nathanson1985multiple}, ad hoc rules of thumb \citep{gilmour2012optimum}, or subjective specification \citep{morgan2017optimal}. More importantly, existing approaches do not link weight selection to the performance of the downstream analysis that the experiment is intended to support. The present work addresses this gap by treating the weight in the proposed criterion not as a fixed universal constant but as a task-dependent calibration parameter, tuned offline via simulation against downstream inferential performance before any responses are collected. 

In this work, we develop the theory and algorithms for optimal two-level TCARD designs, organized around a counts-only perspective on projection quality. Our contributions are threefold: (1) First, we establish the connection between TCARD designs and BIBDs and formalize the notion of \emph{nearly balanced} TCARDs, i.e., designs in which factor replications and pairwise concurrences are as uniform as arithmetic permits, extending the classical nearly balanced framework of \citet{cheng1981nearly} to the TCARD setting. (2) Second, we introduce the Balanced Concurrence Deviation criterion $\Phi_{\mathrm{BCD}}$ (defined precisely in~\eqref{eq:Phi_BCD}) that penalizes replication imbalance and concurrence dispersion. We show that this model-free objective is sufficient for $(M, S)$-optimality under the main-effects model. We further establish statistical optimality connections between $\Phi_{\mathrm{BCD}}$ and the centered $\mathrm{UE}(s^2)$ criterion of \citet{jones2014optimal} and Bayesian $D$-optimality. (3) Finally, we develop a coordinate-exchange algorithm that preserves the fixed-row-sum constraint via within-row swaps, admits $O(k)$ incremental updates per candidate edit. Moreover, we propose a simulation-based tuning procedure for the criterion weights that aligns the design with the intended downstream analysis task before any data are collected.  
Numerical experiments across a range of problem sizes and constraint strengths confirm that the proposed criterion and algorithm compare favorably with existing alternatives in both design diagnostics and downstream variable-selection performance. A case study on prompt-component screening for a large language model evaluated on the GSM8K mathematical-reasoning benchmark \citep{cobbe2021training} using the open-weight Llama~3.1~8B model \citep{dubey2024llama} also confirms these findings in a real-world setting.


The remainder of this article is organized as follows. Section~\ref{sec:algebraic} develops the algebraic foundations: Section~\ref{subsec:bibd} presents the relationship between TCARD designs and BIBDs, and Section~\ref{subsec:nearly_balanced_design} formalizes the notion of nearly balanced TCARDs and establishes existence results along with necessary conditions. Section~\ref{sec:opt_criterion} proposes a new criterion and develops the spectral and information-theoretic links connecting the counts-only criterion to $(M, S)$-optimality, centered $\mathrm{UE}(s^2)$, and Bayesian $D$-optimality. Section~\ref{sec:algorithm} introduces the algorithmic construction framework: Section~\ref{subsec:CE} describes the coordinate-exchange algorithm, Section~\ref{subsec:tuning} presents the simulation-based weight-tuning procedure. Section~\ref{sec:simulation} reports the simulation study and empirical investigation. Section~\ref{sec:case_study} presents the case study on prompt-component engineering of LLM. Proofs of the main theoretical results are collected in the Appendix.

\section{Algebraic Structure of TCARD Designs}\label{sec:algebraic}

This section investigates the algebraic structure, which will underpin our criterion in Section~\ref{sec:opt_criterion}. We study the two-level $k$-treatment cardinality constrained design $\mathcal{D}(n,2^p,k)$ defined in~\eqref{eq:tcard}. Our goals are twofold: first, to identify the TCARD that is simultaneously balanced in replications and pairwise concurrences. And second, to formalize how far a given TCARD can deviate from this fully balanced structure while retaining the properties we care about. In Section~\ref{subsec:bibd} we establish an exact algebraic correspondence between TCARDs and balanced incomplete block designs (BIBDs), showing that a TCARD whose supports form a BIBD is simultaneously balanced in replications and pairwise concurrences. Because exact BIBDs do not exist for all triples $(p,k,n)$, in Section~\ref{subsec:nearly_balanced_design} we introduce the notion of a \emph{nearly balanced TCARD}, which retains those two balances to the extent that arithmetic permits, and establish existence conditions together with an optimality result showing that nearly balanced TCARDs simultaneously minimize the two lowest-order components of the generalized word-length pattern.

Throughout the paper, projection behavior of a TCARD will be measured through two families of count summaries. For a design $\mathbf{X}\in\{0,1\}^{n\times p}$ satisfying the cardinality constraint~\eqref{eq:tcard}, the \emph{replication} of factor $j$ and the \emph{pairwise concurrence} of factors $j,\ell$
are
\begin{equation}
r_j \;=\; \sum_{i=1}^n x_{ij},
\qquad
\lambda_{j\ell} \;=\; \sum_{i=1}^n x_{ij}x_{i\ell}
\qquad (j\ne\ell),
\label{eq:counts}
\end{equation}
i.e., the number of runs in which factor $j$ is active, and the
number of runs in which factors $j$ and $\ell$ are simultaneously
active. We denote $\bar r$ and $\bar\lambda$ for the average
replication and the average pairwise concurrence,
\begin{equation}
\bar r \;:=\; \frac{1}{p}\sum_{j=1}^p r_j,
\qquad
\bar\lambda \;:=\; \frac{1}{\binom{p}{2}}\sum_{j<\ell}\lambda_{j\ell}.
\label{eq:targets-def}
\end{equation}
Under the cardinality constraint these averages are determined by
$(n,p,k)$ alone and take the values
\begin{equation}
\bar r \;=\; \frac{nk}{p},
\qquad
\bar\lambda \;=\; \frac{nk(k-1)}{p(p-1)},
\label{eq:targets-value}
\end{equation}
independently of the specific design $\mathbf{X}$, so $\bar r$ and $\bar\lambda$ serve as fixed arithmetic targets against which individual counts can be compared.

\begin{definition}[Replication and concurrence balance]
\label{def:balance}
A TCARD $\mathbf{X}\in\mathcal{D}(n,2^p,k)$ is
\emph{balanced in replications} if $r_j = \bar r$ for all $j=1,\dots,p$, and \emph{balanced in pairwise concurrences} if $\lambda_{j\ell} = \bar\lambda$ for all $1\le j<\ell\le p$. A TCARD that is balanced in both senses is called
\emph{fully balanced}.
\end{definition}

Under the $k$-cardinality constraint, the admissible set is an irregular slice of $\{0,1\}^{n\times p}$, so neither the full nor fractional factorial families are available. When $n=\tilde\lambda\binom{p}{k}$ is a multiple of the number of $k$-subsets, the full $k$-combination design $\mathbf{X}_{\mathrm{full}}$ (obtained by taking all $\binom{p}{k}$ supports, each replicated $\tilde\lambda$ times) serves as a natural benchmark: it is fully balanced according to Definition~\ref{def:balance} and equalizes the positive eigenvalues of the main-effects information matrix. For $n\ne\tilde\lambda\binom{p}{k}$, the question becomes how to select $n$ supports that are as close to fully balanced as arithmetic permits while maintaining good projection properties. To measure projection quality we adopt the $J$-characteristics and the generalized word-length pattern (GWLP) of \cite{deng1999minimum}.
Encoding each entry of $\mathbf{X}$ in the $\{\pm 1\}$ convention via $x_{ij}\mapsto 2x_{ij}-1$, so that baseline maps to $-1$ and active to $+1$, the $J$-characteristic of a subset $u\subset\{1,\dots,p\}$ of cardinality $|u|=j$ is $J_u(\mathbf{X}) \;=\; \sum_{i=1}^n \prod_{\ell\in u}(2x_{i\ell}-1)$, and the GWLP component of order $j$ is
\begin{equation}
B_j(\mathbf{X}) \;=\; \frac{1}{n^2}\sum_{|u|=j}J_u(\mathbf{X})^2,
\qquad j=1,\dots,p.
\label{eq:GWLP}
\end{equation}
For a TCARD $\mathbf{X}\in\mathcal{D}(n,2^p,k)$, the $J$-characteristics of order one and two are determined by the replication and concurrence counts via $J_{\{j\}}(\mathbf{X}) = 2r_j - n$ and $J_{\{j,\ell\}}(\mathbf{X}) = 4\lambda_{j\ell}- 2(r_j+r_\ell) + n$ $(j\ne\ell)$. Hence
\begin{equation}\label{eq:B1B2_counts}
B_1(\mathbf{X}) = \frac{1}{n^2}\sum_{j=1}^p(2r_j-n)^2,
\qquad
B_2(\mathbf{X}) = \frac{1}{n^2}\!\!\sum_{1\le j<\ell\le p}\!\!
\bigl(4\lambda_{j\ell}-2(r_j+r_\ell)+n\bigr)^2.
\end{equation}
Thus $B_1$ is a squared-deviation penalty on the replications
$\{r_j\}$ alone, and $B_2$ is a squared-deviation penalty on the
replications and pairwise concurrences $\{r_j,\lambda_{j\ell}\}$.
When the counts attain their arithmetic targets
$r_j\equiv\bar r$ and $\lambda_{j\ell}\equiv\bar\lambda$, then \eqref{eq:B1B2_counts} reduces to
\begin{equation}\label{eq:B1B2_fully_balanced}
B_1 = \frac{p}{n^2}(2\bar r - n)^2,
\qquad
B_2 = \frac{1}{n^2}\binom{p}{2}(4\bar\lambda-4\bar r+n)^2,
\end{equation}
quantities determined by $(n,p,k)$ alone that are the attainable minima of $B_1$ and $B_2$ whenever full balance is feasible. The
canonical example is the full $k$-combination design
$\mathbf{X}_{\mathrm{full}}$ with $n=\tilde\lambda\binom{p}{k}$ introduced above, which attains the closed-form reference values
\begin{equation}\label{eq:B_full}
B_1(\mathbf{X}_{\mathrm{full}}) =
\frac{p\bigl(2\tilde\lambda\binom{p-1}{k-1}-n\bigr)^2}{n^2},
\qquad
B_2(\mathbf{X}_{\mathrm{full}}) =
\frac{\binom{p}{2}\bigl(4\tilde\lambda\binom{p-2}{k-2}
-4\tilde\lambda\binom{p-1}{k-1}+n\bigr)^2}{n^2}.
\end{equation}
Note that these two values serve as the reference $B_1$ and $B_2$ in the efficiency metrics of Section~\ref{sec:simulation}.

\subsection{Relationship between TCARD and BIBD}\label{subsec:bibd}
 
The structure of a two-level $k$-TCARD $\mathcal{D}(n,2^p,k)$ is closely related to, but formally distinct from, that of a balanced incomplete block design. Recall that a \emph{balanced incomplete block design} (BIBD) with parameters $(p,b,r_{\mathrm{B}},k, \lambda_{\mathrm{B}})$ is a treatment-assignment scheme for experimental units within blocks, in which each of the $p$ treatments appears in exactly $r_{\mathrm{B}}$ blocks, each block contains exactly $k$ treatments, and every unordered pair of treatments
co-occurs in exactly $\lambda_{\mathrm{B}}$ blocks. A TCARD, by contrast, specifies the factor combination used in each experimental run: in every run exactly $k$ of the $p$ factors depart from a
baseline, while the remaining factors stay at baseline. Thus, BIBD theory concerns balanced treatment allocation across blocks, whereas TCARD theory concerns balanced factor activation and co-activation
across runs. 

Despite this difference in framing, both objects share the same combinatorial skeleton: a collection of $k$-subsets of a ground set of size $p$. We make this shared structure explicit and then show that the fully balanced TCARDs are exactly those whose supports form a BIBD. Specifically, for any TCARD $\mathbf{X}\in\mathcal{D}(n,2^p,k)$, each row
$\mathbf{x}_i\in\{0,1\}^p$ is uniquely identified with the $k$-subset
\begin{equation*}
S_i \;:=\; \{\,j\in\{1,\ldots,p\} : x_{ij}=1\,\}
\;\subset\; \{1,\ldots,p\}
\end{equation*}
of factors active in run $i$. The design is therefore equivalent to
the multiset $\mathcal{S}(\mathbf{X}) := \{S_1,\ldots,S_n\}$ of $k$-subsets of the factor index set $\{1,\ldots,p\}$. Conversely, given any such
collection $\mathcal{S}$ of $n$ $k$-subsets, we denote by $\mathbf{X}(\mathcal{S})$ the corresponding TCARD, recovered by row-indicators. This correspondence lets us move freely between the matrix view $\mathbf{X}$ and the set-system view $\mathcal{S}(\mathbf{X})$. Under this identification, a BIBD $(p,b,r_{\mathrm{B}},k, \lambda_{\mathrm{B}})$ on $p$ treatments with block size $k$ and $b=n$ blocks is itself a collection of $n$ $k$-subsets of
$\{1,\ldots,p\}$. The map $\mathcal{S}\mapsto\mathbf{X}(\mathcal{S})$
therefore produces an associated TCARD $\mathbf{X}(\mathcal{S}_{\mathrm{BIBD}})\in\mathcal{D}(n,2^p,k)$, so-called BIBD-induced TCARD. The defining count properties of the BIBD translate directly into count summaries of the induced TCARD: every factor appears in
$r_{\mathrm{B}}=\bar r$ runs and every pair of factors co-appears
in $\lambda_{\mathrm{B}}=\bar\lambda$ runs, so the BIBD-induced TCARD satisfies the fully balanced conditions. The following proposition records this characterization: fully balanced TCARDs and BIBD-induced TCARDs are the same objects.


\begin{proposition}[BIBD characterization of full balance]
\label{prop:bibd-full-balance}
Let $\mathbf{X}\in\mathcal{D}(n,2^p,k)$. The following are equivalent:
\begin{enumerate}
\item[(a)] $\mathbf{X}$ is fully balanced according to Definition~\ref{def:balance};
\item[(b)] The support collection $\mathcal{S}(\mathbf{X})$ is a BIBD with
parameters $(p,b,r_{\mathrm{B}},k,\lambda_{\mathrm{B}}) =
(p,n,\bar r,k,\bar\lambda)$;
\item[(c)] $\mathbf{X}^\top\mathbf{X} = (\bar r - \bar\lambda)\,\mathbf{I}_p
+ \bar\lambda\,\mathbf{J}_p$.
\end{enumerate}
\end{proposition}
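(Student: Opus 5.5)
The plan is to prove the cycle of equivalences by relating everything to the Gram matrix $\mathbf{X}^\top\mathbf{X}$. The key observation is that, since $\mathbf{X}$ is a $0/1$ matrix, $x_{ij}^2=x_{ij}$. Hence the diagonal entries of $\mathbf{X}^\top\mathbf{X}$ are $\sum_i x_{ij}^2=r_j$ and the off-diagonal entries are $\lambda_{j\ell}$, as defined in~\eqref{eq:counts}. Every statement in the proposition is therefore a statement about the counts $\{r_j,\lambda_{j\ell}\}$ read off in different languages.

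For (a)$\Leftrightarrow$(c), note that the matrix $(\bar r-\bar\lambda)\mathbf{I}_p+\bar\lambda\mathbf{J}_p$ has diagonal entries $\bar r$ and off-diagonal entries $\bar\lambda$. Entrywise comparison with $\mathbf{X}^\top\mathbf{X}$ therefore shows that (c) holds if and only if $r_j=\bar r$ for all $j$ and $\lambda_{j\ell}=\bar\lambda$ for all $j\ne\ell$. This is exactly full balance in the sense of Definition~\ref{def:balance}.

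For (a)$\Leftrightarrow$(b), I use the set-system identification $S_i=\{j:x_{ij}=1\}$. Under it, $r_j=\#\{i: j\in S_i\}$ and $\lambda_{j\ell}=\#\{i:\{j,\ell\}\subset S_i\}$, and each $|S_i|=k$ by~\eqref{eq:tcard}. A BIBD with parameters $(p,n,\bar r,k,\bar\lambda)$ requires $n$ blocks, each of size $k$, each treatment in $\bar r$ blocks, and each pair in $\bar\lambda$ blocks. The first two requirements hold automatically for any TCARD. The last two are precisely the two balance conditions of (a). So (a) and (b) are the same statement after translation.

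I expect the only real subtlety to be definitional rather than mathematical. In (b), the parameters $\bar r$ and $\bar\lambda$ must be the specific values in~\eqref{eq:targets-value}, and for (b)$\Rightarrow$(a) one must check that a BIBD on these supports cannot have some other constant replication or concurrence. Double counting settles this: a BIBD with $n$ blocks of size $k$ forces $p\,r_{\mathrm{B}}=nk$ and $\binom{p}{2}\lambda_{\mathrm{B}}=n\binom{k}{2}$. Hence any constant values must equal $\bar r$ and $\bar\lambda$, consistent with the averages in~\eqref{eq:targets-def}.

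The remaining definitional point is that the support collection $\mathcal{S}(\mathbf{X})$ may be a multiset with repeated blocks. I will note that the usual BIBD definition allows repeated blocks, or else state explicitly that the correspondence is read with repeated blocks permitted.
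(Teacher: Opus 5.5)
Your proposal is correct and follows the paper's own justification, which is the translation paragraph preceding the proposition: the BIBD count conditions become $r_j\equiv\bar r$ and $\lambda_{j\ell}\equiv\bar\lambda$ under the support identification, and (c) is the entrywise restatement via $(\mathbf{X}^\top\mathbf{X})_{jj}=r_j$ and $(\mathbf{X}^\top\mathbf{X})_{j\ell}=\lambda_{j\ell}$. The double-counting check is not needed for (b)$\Rightarrow$(a), since (b) already fixes the parameters as $(p,n,\bar r,k,\bar\lambda)$, but it harmlessly shows that any BIBD structure on $\mathcal{S}(\mathbf{X})$ must have exactly these parameters.
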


\subsection{Nearly Balanced TCARD}\label{subsec:nearly_balanced_design}
 
When the arithmetic targets $\bar r, \bar\lambda$ in~\eqref{eq:targets-value} are non-integer, full balance is infeasible and Proposition~\ref{prop:bibd-full-balance} does not apply. We therefore weaken full balance to its integer relaxation: replications are as equal as arithmetic permits, and pairwise
concurrences are as uniform as possible. This section makes that relaxation precise, establishes existence results, and shows that the resulting \emph{nearly balanced} TCARDs simultaneously minimize
the two lowest-order GWLP components $B_1$ and $B_2$. The formulation parallels the classical nearly balanced block designs of~\citet{cheng1981nearly}, tailored here to the TCARD setting.

For a TCARD $\mathbf{X}\in\mathcal{D}(n,2^p,k)$, summing the cardinality constraint $\sum_\ell x_{t\ell}=k$ against $x_{tj}$ over $t$ gives the row identity
\begin{equation}\label{eq:row_lambda}
\sum_{\ell\ne j}\lambda_{j\ell} = r_j(k-1),
\qquad j=1,\dots,p,
\end{equation}
which ties the concurrences involving factor $j$ to its
replication. To track per-factor deviations from the arithmetic
targets, we denote
\begin{equation}\label{eq:f-s-defs}
f := \lfloor\bar r\rfloor,
\qquad
s := p - (nk - pf),
\qquad
\boldsymbol\Lambda_i := \{\lambda_{ij} : j\ne i\},
\end{equation}
so that exactly $s$ factors have replication $f$ and the remaining $p-s$ have replication $f+1$, and $\boldsymbol\Lambda_i$ collects the concurrence counts involving factor~$i$. The following definition formalizes the notion of a nearly balanced TCARD, which relaxes the full balance conditions to their integer approximations.

 
\begin{definition}\label{def:nearly-balanced}
A TCARD $\mathcal{D}(n,2^p,k)$ is \emph{nearly balanced} if it satisfies:
\begin{enumerate}\itemsep2pt
\item[\textnormal{(NB1)}] $r_j\in\{f,\,f+1\}$ for all $j$ \textnormal{(}replications differ by at most one\textnormal{)}.
\item[\textnormal{(NB2)}] For each fixed $j$, the multiset $\boldsymbol{\Lambda}_j$ is as uniform as arithmetic permits:
\[
|\lambda_{j\ell}-\lambda_{j\ell'}|\le 1 \quad \text{for all } \ell,\ell'\neq j.
\]
Equivalently, every $\lambda_{j\ell}$ lies in $\bigl\{\lfloor r_j(k-1)/(p-1)\rfloor,\;\lceil r_j(k-1)/(p-1)\rceil\bigr\}$.
\end{enumerate}
\end{definition}
 
Condition NB1 is always achievable by distributing replications as evenly as possible. The main structural difficulty lies in NB2, which requires the concurrence pattern around each factor to be as uniform as arithmetic permits. Under NB2, each $\lambda_{j\ell}$ takes one of at most two values,
differing by one. This dichotomy admits a natural graph-theoretic encoding. Let
$\kappa := \lfloor (k-1)f/(p-1) \rfloor$ and
\begin{equation}\label{eq:remainder_form}
\omega \;:=\; p - 1 - \bigl((k-1)f - \kappa(p-1)\bigr)
\;\in\; \{0,1,\dots,p-1\}.
\end{equation}
For factors with $r_j = f$, $\kappa$ is the smaller concurrence level and $\kappa+1$ the larger. Exactly $\omega$ of the $p-1$ partners of factor $j$ appear with it $\kappa$ times and the remaining $p-1-\omega$ appear $\kappa+1$ times. Define the \emph{concurrence-excess graph} $G(\mathbf{X})$ to be the simple graph on vertex set $\{1,\dots,p\}$ in which $\{j,\ell\}$ is an edge if and only if $\lambda_{j\ell} = \kappa+1$. That is, $G$ records the pairs of factors whose concurrence exceeds the lower level.
Under NB1 and NB2, the degree $d_j$ of vertex $j$ is then forced:
\begin{equation}\label{eq:deg-forced}
d_j \;=\;
\begin{cases}
p-1-\omega, & \text{if } r_j=f,\\[2pt]
p-1-\omega+(k-1), & \text{if } r_j=f+1,
\end{cases}
\end{equation}
with one caveat: when $\omega < k-1$, the shift from $r_j=f$ to $r_j=f+1$ forces $d_j$ to exceed $p-1$, which is impossible in a simple graph. Resolving this case requires $\boldsymbol\Lambda_j$ to admit a third concurrence level, not merely two. Following \citet{cheng1981nearly}, we call the regime $\omega \ge k-1$ where the two-level structure is feasible \emph{Type~I}, and the regime $\omega < k-1$ \emph{Type~II}. Existence of a nearly balanced TCARD is thus equivalent to the existence of a simple graph
$G$ with the prescribed degree sequence (Type~I) or to a closely related combinatorial condition (Type~II).

\begin{theorem}\label{thm:NB_TCARD_exist}
Let $n,p,k$ be positive integers with $1\le k\le p$. Define $f,s,\kappa,\omega$ as above.
 
\noindent\textbf{(A) Type~I ($\omega\ge k-1$).}
If a nearly balanced TCARD of Type~I exists, then the degree sequence
\[
\underbrace{(p-\omega-1,\;\ldots,\;p-\omega-1)}_{s\;\text{times}},\quad
\underbrace{(p-\omega+k-2,\;\ldots,\;p-\omega+k-2)}_{(p-s)\;\text{times}}
\]
is graphical. In particular, the Erd\H{o}s--Gallai inequalities imply the necessary bound
\[
s(\omega-s+1) \;\le\; (p-s)(\omega-k+1).
\]
 
\noindent\textbf{(B) Type~II ($\omega<k-1$).}
If a nearly balanced TCARD of Type~II exists, then
\[
\omega+1 \;\le\; s \;\le\; p-k+\omega, \qquad
\omega s \;\text{even}, \qquad
(p-s)(p-k+\omega-s) \;\text{even}.
\]
 
\noindent\textbf{(C) Guaranteed existence in two boundary regimes.}
\begin{enumerate}\itemsep2pt
\item[\textnormal{(C1)}] $k=2$ \textnormal{(}pairwise-choice runs\textnormal{):} a nearly balanced TCARD always exists and is of Type~I.
\item[\textnormal{(C2)}] $k=p-1$ \textnormal{(}all-but-one active\textnormal{):} nearly balanced designs are obtained by distributing the $n$ runs among the $p$ singleton-complement supports as evenly as possible, supplemented by complete copies when needed.
\end{enumerate}
\end{theorem}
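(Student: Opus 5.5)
The argument works through the concurrence-excess graph $G(\mathbf{X})$ and the row identity~\eqref{eq:row_lambda}. Parts (A) and (B) are necessary conditions: given a nearly balanced TCARD, we read off a degree sequence and apply standard graph-realizability facts. Part (C) is constructive.

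\textbf{Part (A).} Fix a factor $j$ with $r_j=f$. By~\eqref{eq:row_lambda}, its $p-1$ concurrences sum to $(k-1)f$. By NB2 they take only the values $\kappa,\kappa+1$, so the number at level $\kappa+1$ is $(k-1)f-\kappa(p-1)=p-1-\omega$. This is the first case of~\eqref{eq:deg-forced}.

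For $r_j=f+1$ the sum rises by $k-1$. Under NB2 the levels are again $\lfloor\cdot\rfloor,\lceil\cdot\rceil$ of $(f+1)(k-1)/(p-1)$. Since $\omega\ge k-1$, the excess $p-1-\omega+(k-1)$ is at most $p-1$, so the levels remain $\kappa,\kappa+1$ and the degree is $p-\omega+k-2$.

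The edge set of $G(\mathbf{X})$ therefore realizes the stated sequence, so the sequence is graphical. For the displayed bound, order the degrees decreasingly, so the $p-s$ large degrees $D=p-\omega+k-2$ come first. Apply Erd\H{o}s--Gallai at the index $m=p-s$:
\[
(p-s)D\le (p-s)(p-s-1)+s\min(m,p-\omega-1).
\]
Simplify this case by case on the minimum. The inequality should reduce to $s(\omega-s+1)\le(p-s)(\omega-k+1)$, possibly after also using the complementary inequality at $m=s$ on the complement graph. The main obstacle is finding the right Erd\H{o}s--Gallai index (or passing to the complement graph) so that the algebra lands exactly on the stated form. I would try the complement graph first: it has degrees $\omega$ for $s$ vertices and $\omega-k+1$ for $p-s$ vertices, and Erd\H{o}s--Gallai at $m=s$ gives $s\omega\le s(s-1)+(p-s)(\omega-k+1)$, which is exactly the bound.

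\textbf{Part (B).} When $\omega<k-1$, factors with $r_j=f+1$ cannot have all partners at levels $\le\kappa+1$, so some reach $\kappa+2$. Let $G_1$ be the graph of pairs with $\lambda\ge\kappa+1$. For $r_j=f+1$, NB2 forces the levels to be $\kappa+1,\kappa+2$, with exactly $k-1-\omega$ partners at $\kappa+2$.

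A pair at $\kappa+2$ can only join two factors of replication $f+1$; a pair at level $\kappa$ can only involve factors of replication $f$. So the structure splits into two pieces:
\begin{itemize}
\item among the $s$ small factors, the graph of pairs at level $\kappa$, which is $\omega$-regular;
\item among the $p-s$ large factors, the graph of pairs at level $\kappa+2$, which is $(k-1-\omega)$-regular.
\end{itemize}
Here I would check that all pairs between a small and a large factor sit at level $\kappa+1$. Then the regularity conditions give the conclusions. An $\omega$-regular simple graph on $s$ vertices needs $\omega\le s-1$ and $\omega s$ even. On the large factors, the complement is $(p-s-1-(k-1-\omega))$-regular, which gives $s\le p-k+\omega$ and the parity of $(p-s)(p-k+\omega-s)$.

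\textbf{Part (C).} For (C1), $k=2$, each run is an edge of a multigraph on $p$ vertices and $\lambda_{j\ell}$ is the edge multiplicity. I would take $\lfloor n/\binom p2\rfloor$ copies of $K_p$ plus a graph with the remaining edges chosen with degrees as equal as possible. Such a graph exists: use a near-regular graph, e.g.\ a union of near-perfect matchings from a 1-factorization. This gives NB1 and multiplicities in $\{\kappa,\kappa+1\}$. Since $k-1=1$, Type~I just requires $\omega\ge1$; I expect $\omega=0$ corresponds to $r_j$ divisible cases, handled directly.

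For (C2), every pair $\{j,\ell\}$ co-occurs in all runs except those omitting $j$ or $\ell$, so $\lambda_{j\ell}=n-(c_j+c_\ell)$, where $c_j$ counts complements of $j$. Choosing the $c_j$ as equal as possible gives $r_j=n-c_j\in\{f,f+1\}$ and concurrences differing by at most one around each factor, which is NB2.
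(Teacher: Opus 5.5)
Your argument follows the paper's proof almost line for line. In (A), the paper's matrix $\mathbf{G}=\mathbf{C}+\mathbf{A}$ is just the adjacency matrix of $G(\mathbf{X})$, with degrees obtained from the row identity exactly as you obtain them. The paper then applies Erd\H{o}s--Gallai to the complement at index $s$. Once you note $\sum_{i>s}\min\{\omega-k+1,s\}\le (p-s)(\omega-k+1)$, your version avoids the paper's case split. In (B), the paper splits into the two replication classes with cross pairs forced to $\kappa+1$; your $\omega$-regular level-$\kappa$ graph is the within-class complement of the paper's $(s-\omega-1)$-regular $\mathbf{N}$. Part (C) uses the same two constructions.

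\textbf{Edge-case failure in (B), shared with the paper.} The bound $s\le p-k+\omega$ comes from the large class, and deducing it from a regular graph on $p-s$ vertices requires $p-s\ge 1$. If $p\mid nk$ then $s=p$, the large class is empty, and the regularity condition is vacuous. The conclusion is then actually false. Take $(p,k,n)=(6,4,9)$: here $f=s=6$, $\kappa=3$, and $\omega=2<k-1=3$. The nine complements of the edges of $K_{3,3}$ give $r_j\equiv 6$ and $\lambda_{j\ell}=3+\mathbf{1}\{j\ell\in E(K_{3,3})\}\in\{3,4\}$. This is a nearly balanced TCARD in the regime $\omega<k-1$ with $s=6>p-k+\omega=4$. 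The paper makes the same slip by applying the regular-graph criterion to $\mathbf{M}$ on $h=0$ vertices. So (B) needs the hypothesis $s<p$. When $s=p$, the only surviving condition is that $\omega p$ be even.

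\textbf{Smaller gap in (C1).} You leave the Type~I claim open, and your guess that $\omega=0$ arises in the divisible case is wrong. The quantity $(k-1)f-\kappa(p-1)$ is the remainder of $(k-1)f$ modulo $p-1$, so it lies in $\{0,\dots,p-2\}$. Hence $\omega\in\{1,\dots,p-1\}$ always, and the divisible case gives $\omega=p-1$. So $\omega\ge 1=k-1$ automatically when $k=2$; the paper does not verify this either.

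For odd $p$, a union of near-perfect matchings is near-regular only if the partial matching is chosen to cover the vertices missed by the full ones. With $p=5$ and $m=3$, a wrong choice gives degrees $0$ and $2$. Alternatively, argue as the paper does via Erd\H{o}s--Gallai for sequences with entries in $\{d,d+1\}$, even sum, and $d+1\le p-1$.
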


The proof of parts (A) and (B) reduces the concurrence-pattern
problem to the feasibility of a degree sequence in $G(\mathbf{X})$
and then invokes the Erd\H os--Gallai theorem. Details follow
\citet{cheng1981nearly} and are given in Appendix~\ref{app:thm1-proof}. For $3 \le k \le p-2$ outside the boundary regimes in (C), explicit
algebraic constructions are substantially more delicate and, to
our knowledge, remain open in general. We therefore rely on the
algorithmic search developed in Section~\ref{sec:algorithm}. That said, operationally, the graph-theoretic encoding suggests a
two-step search: (i) find a simple graph $G$ on $p$ vertices
realizing the forced degree sequence; and (ii) find a choice of $n$ rows,
each a $k$-subset of $\{1,\dots,p\}$, such that the resulting
per-factor and per-pair counts reproduce the replications
$\{r_j\}$ and the concurrences $\{\lambda_{j\ell}\}$. When both
steps succeed, the resulting TCARD is nearly balanced. If step
(ii) is infeasible, one tries another realization of $G$. The following example works through a successful instance for
$(p,k,n)=(6,3,7)$.

\begin{figure}[ht]
\centering
\includegraphics[width=1\textwidth]{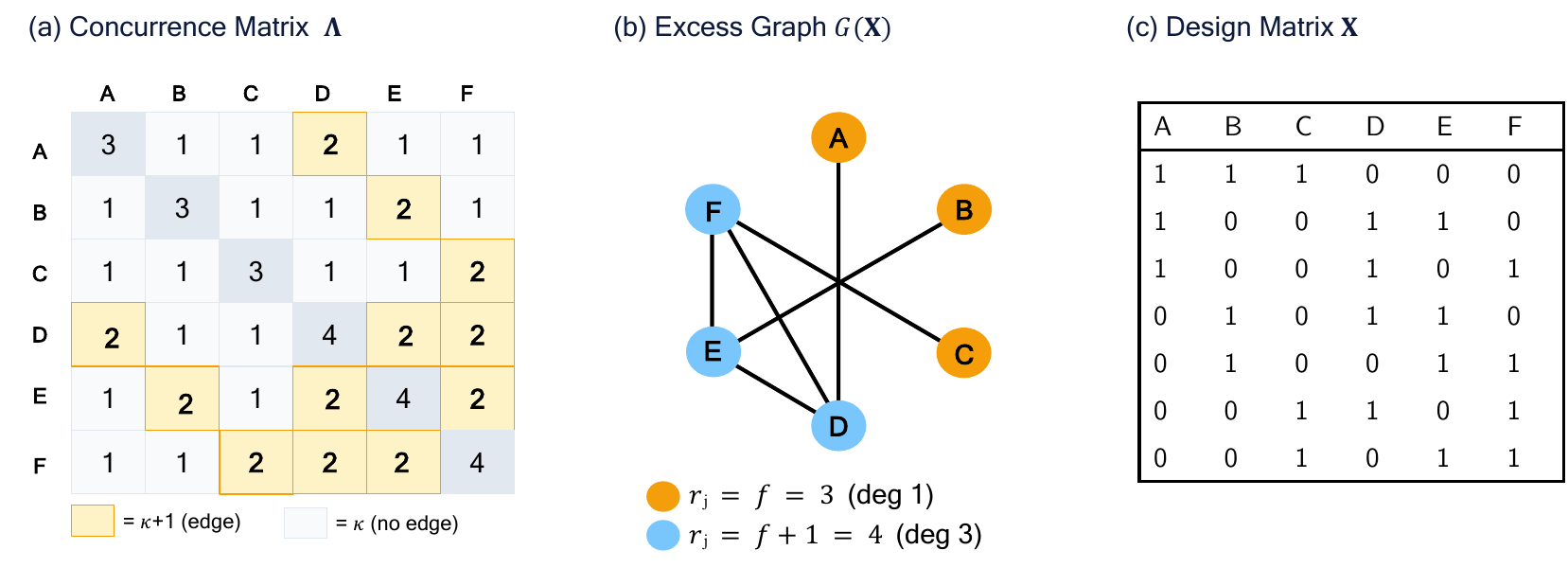}
\caption{Nearly balanced TCARD of Example~\ref{ex:nb-small}
for $(p,k,n)=(6,3,7)$.
\textbf{(a)} The Gram matrix $\mathbf{X}^\top\mathbf{X}$ (equivalently,
the concurrence matrix $\boldsymbol\Lambda$ padded by the replications
on the diagonal). Its diagonal entries $(3,3,3,4,4,4)$ are the
replications $r_j$, and its off-diagonal entries are the pairwise
concurrences $\lambda_{j\ell} \in \{1, 2\} = \{\kappa, \kappa+1\}$.
\textbf{(b)} Concurrence-excess graph $G(\mathbf{X})$: vertices
coloured by replication class, and edges joining exactly the pairs
$(j,\ell)$ with $\lambda_{j\ell} = \kappa+1 = 2$.
\textbf{(c)} Design matrix $\mathbf{X}\in\{0,1\}^{7\times 6}$: each
row has exactly $k=3$ active entries, and the column sums recover
the diagonal of (a).}
\label{fig:nb-small}
\end{figure}

\begin{example}[A nearly balanced TCARD for $(p,k,n)=(6,3,7)$]
\label{ex:nb-small}
The arithmetic targets are $\bar r = 3.5$ and $\bar\lambda = 1.4$,
both non-integer. Computing the structural parameters gives
$f = 3$, $s = 3$, $\kappa = 1$, and $\omega = 4$, with
$\omega \ge k-1 = 2$ falling in the Type~I regime. By Part (A)
of Theorem~\ref{thm:NB_TCARD_exist}, any nearly balanced TCARD
must induce a graph with degree sequence
$(1,1,1,3,3,3)$, which is graphical and satisfies the
Erd\H os--Gallai bound $s(\omega-s+1)=6 \le (p-s)(\omega-k+1)=9$. We now apply the two-step pipeline.

\emph{Step (i): Fix a realizing graph.} Take $G$ to be a
perfect matching between the three $r_j=3$ vertices $\{A,B,C\}$
and the three $r_j=4$ vertices $\{D,E,F\}$, together with a
triangle on $\{D,E,F\}$ (Figure~\ref{fig:nb-small}(b)); this
graph has the required degree sequence. Reading off edges yields
the candidate concurrence matrix
$\boldsymbol{\Lambda}$ with $\lambda_{j\ell} = 2$ for
pairs $(A,D), (B,E), (C,F), (D,E), (D,F), (E,F)$ and
$\lambda_{j\ell} = 1$ otherwise
(Figure~\ref{fig:nb-small}(a)).

\emph{Step (ii): Recover the design matrix.} We search for seven
3-subsets of $\{1,\ldots,6\}$ whose per-factor and per-pair counts
match the replications $r_j$ and the target concurrences
$\lambda_{j\ell}^\star$. Such a collection exists and is unique up
to row order: the supports 
$\{A,B,C\}$, $\{A,D,E\}$, $\{A,D,F\}$, $\{B,D,E\}$, $\{B,E,F\}$,
$\{C,D,F\}$, $\{C,E,F\}$ each used once, giving
the design in Figure~\ref{fig:nb-small}(c).

One can verify that the replications are $(3,3,3,4,4,4)$,
satisfying NB1. For each factor the concurrences satisfy
$\boldsymbol\Lambda_j \in \{\{1,1,1,1,2\}, \{1,1,2,2,2\}\}$,
both single-unit ranges, satisfying NB2. The resulting
GWLP components are $B_1(\mathbf{X}) = 6/49 \approx 0.122$ and
$B_2(\mathbf{X}) = 9/7 \approx 1.286$, which match the
theoretical minima among TCARDs satisfying NB1 and NB2 by
Theorem~\ref{thm:B1B2_opt}.
\end{example}

With near balance formalized, we now state the following result:

\begin{theorem}[Optimality of nearly balanced TCARD for $B_1$ and $B_2$]\label{thm:B1B2_opt}
Fix $n, p, k$ and let $\mathbf{X}$ range over
$\mathcal{D}(n, 2^p, k)$. With $B_1, B_2$ defined
by~\eqref{eq:B1B2_counts}:
\begin{enumerate}\itemsep2pt
\item[\textnormal{(i)}] $B_1$ is minimized if and only if \textnormal{NB1} in Definition~\ref{def:nearly-balanced} holds.
\item[\textnormal{(ii)}] Conditional on \textnormal{NB1}, $B_2$ is minimized if and only if \textnormal{NB2} in Definition~\ref{def:nearly-balanced} holds.
\end{enumerate}
\end{theorem}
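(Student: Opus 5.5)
The proof rests on the count representation~\eqref{eq:B1B2_counts} and two arithmetic identities that hold for every $\mathbf{X}\in\mathcal{D}(n,2^p,k)$:
\[
\sum_j r_j = nk,
\qquad
\sum_{\ell\ne j}\lambda_{j\ell}=r_j(k-1).
\]
The second is the row identity~\eqref{eq:row_lambda}. Both parts then reduce to minimizing a separable strictly convex function over integer vectors whose sum is fixed. That minimum is attained exactly when the entries differ by at most one: if two entries differ by at least two, moving one unit from the larger to the smaller strictly decreases the sum of squares.

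\emph{Part (i).} I expand
\[
n^2B_1=\sum_j(2r_j-n)^2 = 4\sum_j r_j^2-4n\cdot nk+pn^2 .
\]
With $\sum_j r_j=nk$ fixed, $B_1$ is therefore an increasing affine function of $\sum_j r_j^2$. By the exchange argument above, $\sum_j r_j^2$ is minimized over nonnegative integers with sum $nk$ if and only if all $r_j\in\{f,f+1\}$, which is NB1. This also forces exactly $s$ of the $r_j$ to equal $f$. The minimum of $B_1$ is attainable, for example by a cyclic assignment of supports, so (i) follows.

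\emph{Part (ii).} I expand each squared term in $B_2$ as
\[
16\lambda_{j\ell}^2-16\lambda_{j\ell}(r_j+r_\ell)+8n\lambda_{j\ell}+\bigl(2(r_j+r_\ell)-n\bigr)^2,
\]
and sum over $j<\ell$, treating the four kinds of terms separately.
\begin{itemize}
\item The pure-replication term $\sum_{j<\ell}(2(r_j+r_\ell)-n)^2$ is a symmetric polynomial in $\sum_j r_j$ and $\sum_j r_j^2$. Under NB1 both are fixed, since the multiset of replications is fixed even though the labelling of which factors receive $f$ is free.
\item For the cross term, the row identity gives
\[
\sum_{j<\ell}\lambda_{j\ell}(r_j+r_\ell)=\sum_j r_j\sum_{\ell\ne j}\lambda_{j\ell}=(k-1)\sum_j r_j^2,
\]
which is again fixed under NB1.
\item The linear term has $\sum_{j<\ell}\lambda_{j\ell}=nk(k-1)/2$, which is constant.
\end{itemize}
Hence, conditional on NB1,
\[
B_2=C(n,p,k)+\frac{16}{n^2}\sum_{j<\ell}\lambda_{j\ell}^2 .
\]
I then write
\[
\sum_{j<\ell}\lambda_{j\ell}^2=\tfrac12\sum_j\sum_{\ell\ne j}\lambda_{j\ell}^2 .
\]
For each $j$, the inner sum has $p-1$ integer entries with fixed total $r_j(k-1)$. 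By the exchange argument it is bounded below by the balanced value, with equality if and only if the entries of $\boldsymbol\Lambda_j$ lie in $\{\lfloor r_j(k-1)/(p-1)\rfloor,\lceil r_j(k-1)/(p-1)\rceil\}$. This gives a lower bound on $B_2$, depending only on $(n,p,k)$, that is achieved if and only if NB2 holds for every $j$.

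\emph{Main obstacle.} The delicate step is turning this into an honest ``if and only if'' for the minimum. The per-factor bounds are imposed row by row of the symmetric matrix $\boldsymbol\Lambda$, and they may not be simultaneously attainable by any actual design. This can happen in the Type~II regime, or when the degree conditions of Theorem~\ref{thm:NB_TCARD_exist} fail. I would therefore state (ii) as follows: $B_2$ attains this lower bound exactly when NB2 holds. Whenever a nearly balanced TCARD exists, for instance as in Example~\ref{ex:nb-small} or the cases (C1) and (C2), the bound is the true minimum over NB1 designs, and the minimizers are precisely the NB2 designs. I would add a remark that when no nearly balanced design exists, the bound is strict and the characterization describes the unattained ideal.
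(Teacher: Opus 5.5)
Your argument matches the paper's up to the lower bound. You use the same expansion of $n^2B_1$, the same reduction of $B_2$ under NB1 to $\sum_{j<\ell}\lambda_{j\ell}^2$ (handling the cross term via $\sum_{\ell\ne j}\lambda_{j\ell}=r_j(k-1)$), and the same row-by-row convexity bound.

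The two proofs part ways only at the attainability step, and there your caution is correct. The paper closes (ii) by asserting that Theorem~\ref{thm:NB_TCARD_exist} supplies a concurrence pattern satisfying NB2 for all $j$ simultaneously. However, parts (A) and (B) of that theorem give only necessary conditions, and part (C) covers only $k=2$ and $k=p-1$. Moreover, a symmetric NB2 pattern would still have to be realized by an actual design.

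The unconditional ``only if'' direction genuinely fails. For $(n,p,k)=(3,5,3)$ you get $f=1$, $s=1$, $\kappa=0$ and $\omega=2=k-1$, and the Type~I bound $s(\omega-s+1)\le(p-s)(\omega-k+1)$ reads $2\le 0$. The direct argument is short:
\begin{itemize}
\item The single factor with $r_j=1$ shares its one run with only two others, so it has two partners with $\lambda_{j\ell}=0$.
\item Each factor with $r_\ell=2$ has $r_\ell(k-1)/(p-1)=1$, so NB2 forces it to concur exactly once with every other factor.
\item Those two zero partners have $r_\ell=2$, which is a contradiction.
\end{itemize}
NB1 designs exist here (e.g.\ supports $\{1,2,3\},\{1,4,5\},\{2,3,4\}$), so $B_2$ has a minimizer among them, yet none satisfies NB2.

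Your reformulation is therefore the right statement of (ii): NB2 holds if and only if $B_2$ attains the counts-only lower bound, and that bound is the conditional minimum exactly when a nearly balanced TCARD exists. Your proof establishes this version. Part (i) is unaffected, because NB1 is always attainable, as your cyclic construction shows.
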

 
Theorem~\ref{thm:B1B2_opt} formalizes the intuition that nearly balanced TCARDs are simultaneously optimal for the two lowest-order projection imbalances: part~(i) shows that $B_1$ forces replications to two adjacent integers, and part~(ii) shows that, conditional on this replication balance, $B_2$ is minimized by flattening co-occurrences as much as arithmetic and symmetry permit. The proof is given in Appendix Section~\ref{app:thm2-proof}.

Note that Theorems~\ref{thm:NB_TCARD_exist} and~\ref{thm:B1B2_opt}
characterize nearly balanced TCARDs, but do not by themselves
provide a practical way to find one. The two-step pipeline in Example~\ref{ex:nb-small} rests on solving an integer linear feasibility problem. Two features make this problem difficult
for the parameter ranges of interest to us. First, the number of
unknowns $\binom{p}{k}$ grows combinatorially in $(p,k)$. Second, feasibility depends on the choice of realizing graph $G$ in step~(i).
When step~(ii) returns infeasible for one $G$, one must try
another non-isomorphic realization of the forced degree sequence,
and the number of such realizations is itself combinatorial.
Therefore these features make the exact reconstruction pipeline
impractical beyond small $(p,k)$. We therefore step back from enforcing Definition~\ref{def:nearly-balanced}
as a hard constraint. Rather than search for TCARDs that are
exactly nearly balanced, we introduce in Section~\ref{sec:opt_criterion} a real-valued criterion $\Phi_{\mathrm{BCD}}(\mathbf{X})$ that quantifies how close a TCARD is to nearly balanced, vanishing precisely on the nearly
balanced regime identified by Theorems~\ref{thm:NB_TCARD_exist}
and~\ref{thm:B1B2_opt}. Minimizing $\Phi_{\mathrm{BCD}}$ is a
continuous relaxation of the exact feasibility problem: it
relaxes ``counts hit their targets'' to ``counts are close to their
targets'', is tractable at moderate $(p,k)$, and recovers a nearly
balanced TCARD whenever one exists.

\section{Optimality Criteria}\label{sec:opt_criterion}

In this section, we construct and analyze a scoring function for
TCARDs. The function is designed to be minimized. It is
\emph{counts-only} and \emph{model-free}, depending on
$\mathbf{X}$ only through the replication counts $\{r_i\}$ and
the pairwise concurrence counts $\{\lambda_{ij}\}$. Its two
components penalize deviation from the arithmetic replication
and concurrence targets of~\eqref{eq:targets-value}. Minimizing this composite penalty drives the replications and
pairwise concurrences toward the balanced structure of
Definition~\ref{def:balance}. Under a main-effects model, this
translates into statistical optimality links with the $(M,S)$-optimality
principle, the centered $\mathrm{UE}(s^2)$ criterion of
\citet{jones2014optimal}, and Bayesian $D$-optimality. These
statistical optimality links are developed in
Section~\ref{subsec:opt_eff}.

For a TCARD $\mathbf{X} \in \mathcal{D}(n,2^p,k)$, define the
\emph{replication-imbalance} and \emph{concurrence-dispersion}
summaries
\begin{equation}\label{eq:V1V2}
\mathcal{V}_1(\mathbf{X}) \;:=\; \sum_{i=1}^p (r_i - \bar r)^2,
\qquad
\mathcal{V}_2(\mathbf{X}) \;:=\!\!\sum_{1\le i<j\le p}\!
(\lambda_{ij} - \bar\lambda)^2,
\end{equation}
with $\bar r, \bar\lambda$ as in~\eqref{eq:targets-value}.
These are the squared deviations of the count vectors
$\{r_i\}$ and $\{\lambda_{ij}\}$ from their arithmetic targets:
$\mathcal{V}_1(\mathbf{X}) = 0$ precisely when $\mathbf{X}$ is
balanced in replications, $\mathcal{V}_2(\mathbf{X}) = 0$
precisely when $\mathbf{X}$ is balanced in pairwise
concurrences, and $\mathcal{V}_1 = \mathcal{V}_2 = 0$
characterizes the fully balanced case of
Definition~\ref{def:balance}. When the arithmetic targets are
non-integer, neither $\mathcal{V}_1$ nor $\mathcal{V}_2$ can
attain zero, and minimizing either quantity
individually involves different structural trade-offs. The
criterion introduced next combines them into a single
scalarization.

We evaluate a design by the \emph{Balanced Concurrence Deviation}
(BCD) criterion
\begin{equation}\label{eq:Phi_BCD}
\Phi_{\mathrm{BCD}}(\mathbf{X})
\;=\;
\frac{w_1}{p}\,\mathcal{V}_1(\mathbf{X})
\;+\;
\frac{w_2}{\binom{p}{2}}\,\mathcal{V}_2(\mathbf{X}),
\qquad w_1, w_2 > 0,
\end{equation}
a weighted average of the two dispersion summaries, with scaling factors $1/p$ and $1/\binom{p}{2}$ that place each term on a per-count basis. The weights $(w_1, w_2)$ trade off
replication imbalance against concurrence dispersion;
sensible defaults derived from the link to $\mathrm{UE}(s^2)$
are given in Section~\ref{subsec:opt_eff}, and a
simulation-based tuning procedure in
Section~\ref{subsec:tuning}.

\subsection{Statistical Optimality Justification}\label{subsec:opt_eff}

We ground the counts-only criterion $\Phi_{\mathrm{BCD}}$ in
standard statistical efficiency by analyzing TCARDs under a
main-effects linear model. Let $\mathbf{y} \in \mathbb{R}^n$
denote the response vector and $\mathbf{X} \in \{0,1\}^{n\times p}$
the TCARD design matrix under constraint~\eqref{eq:tcard}. We assume
\begin{equation}\label{eq:main-model}
\mathbf{y} \;=\; \mu\,\mathbf{1}_n \;+\; \mathbf{X}\boldsymbol{\beta}
\;+\; \boldsymbol{\varepsilon},
\qquad
\boldsymbol{\varepsilon} \sim \mathcal{N}(\mathbf{0},\sigma^2 \mathbf{I}_n),
\end{equation}
where $\mu$ is an intercept, $\boldsymbol{\beta} \in \mathbb{R}^p$
is the vector of main effects to be estimated, and
$\boldsymbol{\varepsilon}$ is a vector of i.i.d.\ Gaussian errors.
After eliminating the intercept via the centering projection
$\mathbf{H} := \mathbf{I}_n - \tfrac{1}{n}\mathbf{1}_n\mathbf{1}_n^\top$,
the information matrix for $\boldsymbol{\beta}$ is
\begin{equation}\label{eq:C_def}
\mathbf{C} \;:=\; \mathbf{X}^\top \mathbf{H} \mathbf{X}
\;=\; \mathbf{X}^\top\mathbf{X} - \tfrac{1}{n}\mathbf{X}^\top\mathbf{1}_n\mathbf{1}_n^\top\mathbf{X},
\end{equation}
a symmetric positive semidefinite $p \times p$ matrix whose
spectrum governs main-effect estimation precision. All
optimality statements in this subsection are expressed in terms
of~$\mathbf{C}$.

The replication and concurrence counts control $\mathbf{C}$
directly. For $i \ne j$,
$(\mathbf{X}^\top\mathbf{X})_{ii} = r_i$ and
$(\mathbf{X}^\top\mathbf{X})_{ij} = \lambda_{ij}$, so
\begin{equation}\label{eq:C_counts}
\mathbf{C}_{ii} = r_i - \frac{r_i^2}{n},
\qquad
\mathbf{C}_{ij} = \lambda_{ij} - \frac{r_i r_j}{n}
\quad(i \ne j).
\end{equation}
The diagonal of $\mathbf{C}$ is determined by the
replications $\{r_i\}$ and controls the total information
$\operatorname{tr}(\mathbf{C})$. The off-diagonal is determined
by the concurrences $\{\lambda_{ij}\}$ and controls the extent
of non-orthogonality among main-effect contrasts. The two
terms in $\Phi_{\mathrm{BCD}}$ can therefore be read as
surrogates for (i) maximizing total main-effect information
and (ii) reducing imbalance and correlation among main-effect
estimates.

The remainder of this subsection develops three connections
between $\Phi_{\mathrm{BCD}}$ and standard optimality
benchmarks. Section~\ref{subset:MS} links
$\Phi_{\mathrm{BCD}}$ to the $(M,S)$-principle of
\citet{eccleston1974theory} through trace and eigenvalue-spread
identities. Section~\ref{sec:relation_UE} establishes an exact
algebraic equivalence to the centered $\mathrm{UE}(s^2)$
criterion of \citet{jones2014optimal}. Section~\ref{sec:relation_Dopt}
connects $\Phi_{\mathrm{BCD}}$ to Bayesian $D$-optimality through
a perturbation bound and a large-$\alpha$ trace expansion, which is a
determinant-based benchmark that remains well-defined despite
the structural singularity of $\mathbf{C}$ under the TCARD
constraint.

\subsubsection{Connection to the $(M,S)$-Optimality Principle}\label{subset:MS}
Following the two-stage framework of \cite{eccleston1974theory}, a design is called \((M,S)\)-optimal if it (i) first maximizes the total information $\operatorname{tr}(\mathbf{C})$ and then (ii) subject to that maximum, minimizes the dispersion of the positive eigenvalues of $\mathbf{C}$. The second stage can be expressed by minimizing an eigenvalue–spread index such as $\operatorname{tr}(\mathbf{C}^2)$, which favors the equalization of the nonzero eigenvalues. When the equal-eigenvalue point is feasible (e.g., BIBD), it yields simultaneous A-, D-, and E-optimality by standard convexity arguments, as theoretically illustrated in \cite{kiefer1958nonrandomized}. When exact balance is not attainable, \citet{jacroux1980some} show that the corresponding nearly balanced incidence structures remain $(M,S)$-optimal under their sufficient conditions. These conditions carry over naturally to the nearly balanced TCARD setting. This motivates examining whether our counts-only criterion $\Phi_{\mathrm{BCD}}$ can be used to search directly for $(M,S)$-optimal designs, with nearly balanced TCARDs as the primary targets.

Based on model~\eqref{eq:main-model}, the following trace identities will be used repeatedly:
\begin{equation}\label{eq:tr_identities}
\operatorname{tr}(\mathbf{C})=nk-\frac{1}{n}\sum_{i=1}^p r_i^2,\qquad
\operatorname{tr}(\mathbf{C}^2)=\sum_{i=1}^p\Big(r_i-\frac{r_i^2}{n}\Big)^2
+2\sum_{1\le i<j\le p}\Big(\lambda_{ij}-\frac{r_i r_j}{n}\Big)^2.
\end{equation}
The first identity shows that maximizing $\tr(\mathbf C)$ over the TCARD space is equivalent to
minimizing $\sum_i r_i^2$. The second identity decomposes
$\tr(\mathbf C^2)$ into a replication contribution and a concurrence contribution, clarifying how
replication balance and concurrence uniformity jointly govern eigenvalue dispersion. We next summarize how the counts-only criterion $\Phi_{\mathrm{BCD}}$ in~\eqref{eq:Phi_BCD} aligns with the
$(M,S)$ principle. 

\begin{proposition}[Forcing the \(M\)-stage]\label{prop:force_M}
If
\begin{equation}\label{eq:w_threshold}
\frac{w_1}{w_2}\;>\;\frac{2}{p-1}\Big\lfloor \frac{nk}{p}\Big\rfloor (k-1),
\end{equation}
then any global minimizer of \(\Phi_{\mathrm{BCD}}\) has no pair \(a,b\) with
\(r_a\ge r_b+2\). Hence its replication vector attains the discrete minimum of \(\sum_i r_i^2\), namely
\(r_i\in\{\lfloor nk/p\rfloor,\ \lceil nk/p\rceil\}\) for all \(i\). By~\eqref{eq:tr_identities},
\(\operatorname{tr}(\mathbf{C})\) is therefore maximized.
\end{proposition}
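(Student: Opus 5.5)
The plan is a single-swap exchange argument. I will suppose a global minimizer $\mathbf{X}$ of $\Phi_{\mathrm{BCD}}$ has some pair with $r_a\ge r_b+2$, build a TCARD with strictly smaller $\Phi_{\mathrm{BCD}}$, and so reach a contradiction.

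\emph{Choice of the pair and of the edit.} If some pair differs by at least two, then so do the maximum and minimum replications. I therefore take $a$ with $r_a$ maximal and $b$ with $r_b$ minimal, so $r_a-r_b\ge 2$. Since $\sum_j r_j = nk$, the minimum replication is at most the average, which gives $r_b\le \lfloor nk/p\rfloor = f$; this bound is the key to controlling the concurrence term. The number of runs containing $a$ but not $b$ is $r_a-\lambda_{ab}\ge r_b+2-\lambda_{ab}\ge 2$, so such a run $i$ exists. Write its support as $S_i=\{a\}\cup T$ with $|T|=k-1$ and $b\notin T$. The edit replaces $a$ by $b$ in row $i$. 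This is a within-row swap, so the row sum $k$ is preserved and the new design is again in $\mathcal{D}(n,2^p,k)$.

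\emph{Change in $\mathcal{V}_1$.} Only $r_a\mapsto r_a-1$ and $r_b\mapsto r_b+1$ change. Expanding the two squares gives
\[
\Delta\mathcal{V}_1 = -2(r_a-r_b)+2\le -2 .
\]
So the first term of $\Phi_{\mathrm{BCD}}$ drops by at least $2w_1/p$.

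\emph{Change in $\mathcal{V}_2$.} This is the step I expect to be the main obstacle, since the swap moves $2(k-1)$ concurrences. For each $\ell\in T$, $\lambda_{a\ell}$ decreases by one and $\lambda_{b\ell}$ increases by one. The concurrence $\lambda_{ab}$ is unchanged, because row $i$ contains exactly one of $a,b$ both before and after the swap, and no other concurrence changes. Expanding the squares, the $\bar\lambda$ terms cancel and
\[
\Delta\mathcal{V}_2=\sum_{\ell\in T}\bigl[2(\lambda_{b\ell}-\lambda_{a\ell})+2\bigr].
\]
To bound this, I use two facts. First, $\lambda_{b\ell}\le r_b\le f$. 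Second, $\lambda_{a\ell}\ge 1$, since row $i$ itself contains both $a$ and $\ell$. Together these give
\[
\Delta\mathcal{V}_2\le 2(k-1)r_b\le 2(k-1)f .
\]

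\emph{Conclusion.} Combining the two changes,
\[
\Delta\Phi_{\mathrm{BCD}}\le -\frac{2w_1}{p}+\frac{w_2}{\binom p2}\,2(k-1)f
= -\frac{2}{p}\Bigl(w_1-\frac{2(k-1)f}{p-1}\,w_2\Bigr),
\]
which is strictly negative exactly under~\eqref{eq:w_threshold}. This contradicts global minimality, so every minimizer has all replications within one of each other. Because $\sum_j r_j=nk$ is fixed, this forces $r_j\in\{\lfloor nk/p\rfloor,\lceil nk/p\rceil\}$ for all $j$. A standard convexity argument (any gap $\ge 2$ can be narrowed to lower $\sum_j r_j^2$) shows these vectors minimize $\sum_j r_j^2$ subject to the fixed total. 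The first identity in~\eqref{eq:tr_identities} then shows that $\operatorname{tr}(\mathbf{C})$ is maximized.
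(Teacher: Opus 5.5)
Your proof is correct and follows the paper's argument essentially step for step. It uses the same within-row balancing swap against a minimally replicated column $b$ (so $r_b\le\lfloor nk/p\rfloor$), the same exact change $\Delta\mathcal{V}_1=-2(r_a-r_b-1)\le -2$, and the same bound $\Delta\mathcal{V}_2\le 2r_b(k-1)$ from $\lambda_{b\ell}\le r_b$ and $\lambda_{a\ell}\ge 1$. Your count $r_a-\lambda_{ab}\ge 2$ and your remark that $\lambda_{ab}$ is unchanged make explicit what the paper's pigeonhole lemma leaves implicit.
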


\textit{Remark}: Condition~\eqref{eq:w_threshold} is a worst-case sufficient requirement that guarantees replication balance at the global optimum by ensuring that every feasible balancing swap must decrease \(\Phi_{\mathrm{BCD}}\). In practice, much smaller ratios often work for two concrete reasons tied to the structure of the centered criterion. 
A single balancing swap reduces \(\sum_i(r_i-\bar r)^2\) by a fixed amount, which translates to an \(O(1/p)\) decrease in \(\Phi_{\mathrm{BCD}}\) after averaging. The same swap modifies at most \(2(k-1)\) pairwise concurrences,      giving an \(O(1/p^2)\) impact on \(\Phi_{\mathrm{BCD}}\). Consequently, for moderate \(p\) the replication component naturally dominates the early optimization dynamics even when \(w_1=w_2\), which explains the good empirical performance of the default choice \(w_1=w_2=1\) in our numerical simulations.
For tuning, one may still use conservative, verifiable checks based on the current design. Let $\lambda_{\max}(\mathbf{X}) := \max_{i<j}\lambda_{ij}$ denote the
maximum pairwise concurrence. Since a single balancing swap modifies only the \(2(k-1)\) pairs incident to the swapped columns, a sufficient condition ensuring that such a swap (when \(r_a\ge r_b+2\)) decreases \(\Phi_{\mathrm{BCD}}\) is $\frac{w_1}{w_2} > \frac{2}{p-1}\lambda_{\max}(\mathbf{X})(k-1)$,
which is usually far smaller than the worst-case bound and can be evaluated on the fly.

Proposition~\ref{prop:force_M} ensures the first stage is achieved by any minimizer of \(\Phi_{\mathrm{BCD}}\). We next show that, conditional on this replication vector, the second term of \(\Phi_{\mathrm{BCD}}\) drives exactly the \(S\)-stage.

\begin{proposition}\label{prop:force_S}
Fix a replication vector \(\mathbf{r}\) that minimizes \(\sum_i r_i^2\).
\begin{enumerate}\itemsep0.2em
\item If the replication vector is perfectly balanced (\(r_i\equiv\bar r\)), then minimizing \(\sum_{1\le i<j\le p}(\lambda_{ij}-\bar\lambda)^2\) is equivalent to minimizing \(\operatorname{tr}(\mathbf{C}^2)\).
\item If \(\mathbf r\) has two replication levels \(r_i\in\{\lfloor\bar r\rfloor,\lceil\bar r\rceil\}\), then on the local-search neighborhood generated by class-preserving rectangle swaps (swaps that exchange a $1$
  and a $0$ within the same row, restricted to columns belonging
  to the same replication class), minimizing \(\sum_{1\le i<j\le p}(\lambda_{ij}-\bar\lambda)^2\) is equivalent to minimizing \(\operatorname{tr}(\mathbf{C}^2)\).
\end{enumerate}
\end{proposition}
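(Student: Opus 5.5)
Both parts will follow from one expansion. With the replication vector $\mathbf r$ held fixed, the second identity in~\eqref{eq:tr_identities} shows that the diagonal contribution $\sum_i(r_i-r_i^2/n)^2$ is constant. The concurrence contribution expands as
\[
2\sum_{i<j}\Big(\lambda_{ij}-\frac{r_ir_j}{n}\Big)^2
=2\sum_{i<j}\lambda_{ij}^2-\frac{4}{n}\sum_{i<j}r_ir_j\lambda_{ij}+\text{const}.
\]
Similarly,
\[
\mathcal V_2=\sum_{i<j}\lambda_{ij}^2-2\bar\lambda\sum_{i<j}\lambda_{ij}+\text{const}.
\]
Summing the row identity~\eqref{eq:row_lambda} over $j$ gives $\sum_{i<j}\lambda_{ij}=\tfrac12(k-1)\sum_j r_j=nk(k-1)/2$ for every TCARD, so the linear term in $\mathcal V_2$ is constant. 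The plan is therefore to show that the cross term $T:=\sum_{i<j}r_ir_j\lambda_{ij}$ is constant on the relevant set of designs. Once that holds, $\operatorname{tr}(\mathbf C^2)=2\mathcal V_2+\text{const}$, and minimizing one quantity is equivalent to minimizing the other.

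For part 1, $r_i\equiv\bar r$ gives $T=\bar r^{\,2}\sum_{i<j}\lambda_{ij}=\bar r^{\,2}nk(k-1)/2$. This is a constant, which finishes part 1.

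For part 2, let $L$ be the set of factors with $r_i=f$ and $U$ the set with $r_i=f+1$. Define the class-pair totals
\[
W_{LL}=\sum_{i<j,\; i,j\in L}\lambda_{ij},\qquad
W_{LU}=\sum_{i\in L,\; j\in U}\lambda_{ij},\qquad
W_{UU}=\sum_{i<j,\; i,j\in U}\lambda_{ij}.
\]
Then
\[
T=f^2W_{LL}+f(f+1)W_{LU}+(f+1)^2W_{UU}.
\]
Summing~\eqref{eq:row_lambda} over $i\in L$ fixes $2W_{LL}+W_{LU}$, and summing it over $i\in U$ fixes $2W_{UU}+W_{LU}$. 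This leaves one free parameter, say $W_{LU}$, on which $T$ depends affinely, so $T$ is not constant over all designs with this $\mathbf r$. This is the main obstacle, and it is why the statement is restricted to the swap neighborhood.

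I would resolve it by showing that every class-preserving rectangle swap leaves $W_{LL}$, $W_{LU}$ and $W_{UU}$ unchanged. I read a rectangle swap as acting on rows $t,u$ and columns $a,b$ with $x_{ta}=x_{ub}=1$ and $x_{tb}=x_{ua}=0$: it moves the $1$ in row $t$ from column $a$ to $b$ and the $1$ in row $u$ from $b$ to $a$. This preserves all row sums and all column sums, so $\mathbf r$ stays fixed.

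Within row $t$, for each other active column $c$, the count $\lambda_{ac}$ drops by one and $\lambda_{bc}$ rises by one. Row $u$ produces the symmetric change. Because $a$ and $b$ lie in the same class, each pair $\{a,c\}$ is replaced by a pair $\{b,c\}$ of the same class type. Hence every $W$ total is invariant, and so is $T$. The pair $\{a,b\}$ itself never enters, since $a$ and $b$ are never simultaneously active in row $t$ or row $u$ either before or after the swap.

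Therefore, on the neighborhood generated by these swaps, $\operatorname{tr}(\mathbf C^2)-2\mathcal V_2$ is constant, and the two minimization problems coincide, which proves part 2.
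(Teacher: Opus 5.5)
Your proof is correct and follows the paper's route: with $\mathbf r$ fixed, both objectives reduce to $\sum_{i<j}\lambda_{ij}^2$ plus constants once the mixed term $\sum_{i<j}r_ir_j\lambda_{ij}$ is shown to be invariant, and under a class-preserving rectangle swap this holds because $r_a=r_b$ and $\Delta\lambda_{ac}=-\Delta\lambda_{bc}$ for every $c\neq a,b$ (the paper cancels these changes pair by pair in $\ell$, while you aggregate them into class-pair totals). Your bookkeeping also gives something the paper does not state: for fixed $\mathbf r$ you get $T=\mathrm{const}-\tfrac12 W_{LU}$, hence $\operatorname{tr}(\mathbf C^2)=2\mathcal V_2+\tfrac{2}{n}W_{LU}+\mathrm{const}$, so the equivalence holds on any set of designs sharing $\mathbf r$ and the cross-class total $W_{LU}$, not only on the swap-connected class.
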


These propositions clarify the distinct roles of the two components in \(\Phi_{\mathrm{BCD}}\), but they do not imply that the current coordinate-exchange algorithm in Section~\ref{subsec:CE} automatically follows a strict two-stage \((M,S)\) trajectory for arbitrary weight choices. Because the implemented CE procedure uses within-row \(1\!\leftrightarrow\!0\) swaps, replication balancing and concurrence regularization may remain coupled throughout the search. Thus, the present CE algorithm should be viewed as a practical descent method for the composite criterion \(\Phi_{\mathrm{BCD}}\), rather than as a procedure that intrinsically guarantees exact \((M,S)\)-optimality. That said, under sufficiently large values of \(w_1/w_2\), one can show that once the search enters the discrete-optimal replication set, it cannot leave it, and any further accepted move must be within \(M\)-stage. From that point onward, descent is driven entirely by the concurrence component. Even then, such a strict separation does not necessarily yield the best designs for other optimality criteria or downstream performance measures. If exact \((M,S)\)-optimization is desired, one may instead combine \(\Phi_{\mathrm{BCD}}\) with a replication-preserving local-search neighborhood.

\subsubsection{Connection to Centered $UE(s^2)$ Criterion}
\label{sec:relation_UE}

Quadratic dispersion criteria play a central role in assessing projection quality for nonregular and supersaturated designs. Among these, the centered $UE(s^2)$ criterion of \cite{jones2014optimal} measures variability in the inner-product structure of the augmented design matrix after centering. While originally developed for supersaturated designs, it has recently been applied to large row-constrained supersaturated designs by \cite{smucker2025large}. In this section, we demonstrate that within our framework, $UE(s^2)$ aligns directly with our proposed design criterion.

Specifically, we show that under the two-level treatment cardinality constraint, the centered $UE(s^2)$ criterion is exactly equivalent to a particular weighted form of the counts-only criterion $\Phi_{\mathrm{BCD}}$. This equivalence holds algebraically for all admissible $(n,p,k)$ and does not rely on asymptotic arguments or approximations.

To apply $\mathrm{UE}(s^2)$ in our setting, we re-code the TCARD design matrix $\mathbf{X} \in \{0,1\}^{n \times p}$ on the $\pm 1$ scale by setting $\mathbf{Z} := 2\mathbf{X} - \mathbf{1}_n\mathbf{1}_p^\top$ and form the augmented design matrix $\widetilde{\mathbf{X}} := [\mathbf{1}_n, \mathbf{Z}] \in \mathbb{R}^{n\times(p+1)}$
corresponding to an main-effects model with intercept on the $\pm 1$ factors. Its Gram matrix is defined as
$\mathbf{S} := \widetilde{\mathbf{X}}^\top\widetilde{\mathbf{X}}$, and has block structure
\[
\mathbf{S}
\;=\;
\begin{pmatrix}
n & \mathbf{s}_0^\top \\
\mathbf{s}_0 & \mathbf{S}_Z
\end{pmatrix},
\]
where $\mathbf{s}_0 = \mathbf{Z}^\top \mathbf{1}_n \in \mathbb{R}^p$ collects the intercept--factor inner products, and $\mathbf{S}_Z = \mathbf{Z}^\top \mathbf{Z} \in \mathbb{R}^{p\times p}$ collects the factor--factor inner products. The centered $\mathrm{UE}(s^2)$ criterion is the sum of squared deviations of the off-diagonal entries of $\mathbf{S}$ from their blockwise averages:
\begin{equation}\label{eq:UE_def}
\mathrm{UE}(s^2)(\mathbf{X})
\;:=\;
\sum_{i=1}^p \bigl(S_{1,i+1} - \bar S_{1,\cdot}\bigr)^2
\;+\;
\sum_{1 \le i < j \le p}
\bigl(S_{i+1,j+1} - \bar S_{\cdot,\cdot}\bigr)^2,
\end{equation}
where $\bar S_{1,\cdot} := \frac{1}{p}\sum_{i=1}^p S_{1,i+1}$ is the mean of the intercept--factor block and
$\bar S_{\cdot,\cdot} := \binom{p}{2}^{-1}\sum_{1\le i<j\le p} S_{i+1, j+1}$ is the mean of the factor--factor block. The following theorem establishes an exact algebraic identity between $\mathrm{UE}(s^2)(\mathbf{X})$ and the two count-based dispersion quantities $\mathcal{V}_1(\mathbf{X})$ and $\mathcal{V}_2(\mathbf{X})$ of~\eqref{eq:V1V2}.


\begin{theorem}[Equivalence between $UE(s^2)$ and $\Phi_{\mathrm{BCD}}$]
\label{thm:UE_equiv}
Let $\mathbf{X} \in \{0,1\}^{n\times p}$ be a two-level TCARD with row sums equal to $k$, and let
$\mathbf{Z} = 2\mathbf{X} - \mathbf{1}_n\mathbf{1}_p^\top$ be its $\pm 1$ re-coding. Define the centered $\mathrm{UE}(s^2)$ criterion by~\eqref{eq:UE_def}, and the count-based quantities $\mathcal{V}_1, \mathcal{V}_2$ by~\eqref{eq:V1V2}. Then:
\begin{equation}
UE(s^2)(\mathbf{X})
= 4(p - 4k + 3)\,\mathcal{V}_1(\mathbf{X})
+ 16\,\mathcal{V}_2(\mathbf{X}).
\label{eq:UE_identity}
\end{equation}
Moreover, if $p>4k-3$, then minimizing $UE(s^2)$ is equivalent to minimizing
$\Phi_{\mathrm{BCD}}(\mathbf{X})$ with weights $w_1 = (p - 4k + 3)/4p$ and $w_2 = (p-1)/2p$.
\end{theorem}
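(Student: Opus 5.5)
The proof is a direct expansion of the entries of $\mathbf{S}$ in terms of the counts $r_i$ and $\lambda_{ij}$ from~\eqref{eq:counts}, followed by use of the row identity~\eqref{eq:row_lambda}; I write $\delta_i := r_i-\bar r$, so that $\sum_i\delta_i=0$ and $\mathcal{V}_1=\sum_i\delta_i^2$.

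\emph{Intercept--factor block.} We have $S_{1,i+1} = \mathbf{1}_n^\top \mathbf{z}_i = 2r_i-n$, which is $J_{\{i\}}$. Its mean over $i$ is $2\bar r-n$, so the first sum in~\eqref{eq:UE_def} equals $\sum_i (2\delta_i)^2 = 4\mathcal{V}_1$.

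\emph{Factor--factor block.} As noted before~\eqref{eq:B1B2_counts}, $S_{i+1,j+1} = 4\lambda_{ij}-2(r_i+r_j)+n$. Averaging over the $\binom p2$ pairs, and using that each index appears in $p-1$ pairs, gives $\bar S_{\cdot,\cdot} = 4\bar\lambda - 4\bar r + n$. The deviation is therefore
\[
4(\lambda_{ij}-\bar\lambda)-2(\delta_i+\delta_j).
\]
Squaring and summing over $i<j$ produces three terms:
\[
16\mathcal{V}_2 \;-\; 16\sum_{i<j}(\lambda_{ij}-\bar\lambda)(\delta_i+\delta_j) \;+\; 4\sum_{i<j}(\delta_i+\delta_j)^2 .
\]
The last sum is $(p-2)\sum_i\delta_i^2 + \bigl(\sum_i\delta_i\bigr)^2 = (p-2)\mathcal{V}_1$.

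\emph{Cross term (the key step).} This is the only place the cardinality constraint enters beyond the targets~\eqref{eq:targets-value}. Regroup it by vertex:
\[
\sum_i \delta_i \sum_{j\ne i}(\lambda_{ij}-\bar\lambda).
\]
By~\eqref{eq:row_lambda}, the inner sum is $r_i(k-1)-(p-1)\bar\lambda$. Since $(p-1)\bar\lambda = \bar r(k-1)$ by~\eqref{eq:targets-value}, this equals $(k-1)\delta_i$. Hence the cross term is $(k-1)\mathcal{V}_1$.

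\emph{Collecting terms.} The coefficient of $\mathcal{V}_1$ is
\[
4 - 16(k-1) + 4(p-2) = 4(p-4k+3),
\]
while the coefficient of $\mathcal{V}_2$ is $16$. This gives~\eqref{eq:UE_identity}.

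\emph{Weighted equivalence.} Substitute the stated weights into~\eqref{eq:Phi_BCD}:
\[
\Phi_{\mathrm{BCD}} = \frac{p-4k+3}{4p^2}\,\mathcal{V}_1 + \frac{p-1}{2p}\cdot\frac{2}{p(p-1)}\,\mathcal{V}_2 = \frac{1}{16p^2}\,\mathrm{UE}(s^2).
\]
So $\Phi_{\mathrm{BCD}}$ is a fixed positive multiple of $\mathrm{UE}(s^2)$ over $\mathcal{D}(n,2^p,k)$, and the two criteria have the same minimizers. The hypothesis $p>4k-3$ is exactly what makes $w_1>0$, as~\eqref{eq:Phi_BCD} requires; $w_2>0$ holds automatically.

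I expect no real obstacle. The only delicate points are getting the mean $\bar S_{\cdot,\cdot}$ right and recognizing that the cross term collapses via~\eqref{eq:row_lambda}. Without that identity the cross term would not reduce to a function of $\mathcal{V}_1$ alone.
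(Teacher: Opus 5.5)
Your proof is correct and follows the paper's argument essentially step for step. Like the paper, you expand $S_{1,i+1}=2r_i-n$ and $S_{i+1,j+1}=4\lambda_{ij}-2(r_i+r_j)+n$, split the factor--factor block into three terms, and collapse the cross term to $(k-1)\mathcal{V}_1$ via the row identity~\eqref{eq:row_lambda}. Your last step is more precise than the paper's, which claims $\mathrm{UE}(s^2)=\Phi_{\mathrm{BCD}}$ for those weights; the true relation is the proportionality $\Phi_{\mathrm{BCD}}=\mathrm{UE}(s^2)/(16p^2)$ you derived, and that is all the minimizer equivalence needs.
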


When $p<4k-3$, the replication-imbalance term in~\eqref{eq:UE_identity} receives a negative coefficient, so minimizing $UE(s^2)$ would favor replication dispersion rather than balance. In this case, \(\Phi_{\mathrm{BCD}}\) is better behaved because it retains nonnegative penalties on both replication and concurrence imbalances and continues to target nearly balanced TCARDs. More broadly, Theorem~\ref{thm:UE_equiv} provides a transparent combinatorial interpretation of $UE(s^2)$ under cardinality constraints. At the same time, it motivates $\Phi_{\mathrm{BCD}}$ as a natural objective for computation because it preserves the same structure while enabling fast incremental evaluation during search.

\subsubsection{Bridge to Bayesian-D Optimality and Perturbation Analysis}\label{sec:relation_Dopt}

We now establish an information-theoretic interpretation of the counts-only criterion by connecting it to Bayesian D-optimality \citep{dumouchel1994simple, jones2008bayesian}. Before introducing Bayesian-D, we highlight a structural singularity that arises in the TCARD setting under the intercept-eliminated main-effects model. In our construction the treatment-cardinality constraint induces a nontrivial linear dependence among the centered columns of $\mathbf{X}$, so that $\mathbf{C}=\mathbf{X^\top H X}$ is positive semidefinite but structurally singular. As a consequence, the usual D-optimality criterion $\log\det(\mathbf{C})$ is not well-defined, and determinant- or inverse-based summaries (including A- and E-type metrics) can be numerically unstable whenever small eigenvalues are present. To obtain a criterion that is always well-defined, we adopt the Gaussian Bayesian linear model of \cite{jones2014optimal} for the centered main effects,
\begin{equation}\label{eq:bayes_model}
\mathbf{Y} \mid \boldsymbol{\beta} \sim \mathcal N(\mathbf{X}\boldsymbol{\beta},\sigma^2 \mathbf{I}_n),\qquad
\boldsymbol{\beta} \sim \mathcal N(\theta,\sigma^2 \alpha^{-1} \mathbf{I}_p),
\end{equation}
where $\alpha>0$ is a prior precision parameter and $\boldsymbol{\theta}\in\mathbb{R}^p$ is fixed. This yields posterior precision $\alpha \mathbf{I}_p+\sigma^{-2}\mathbf{X^\top X}$ and leads to the Bayes-$D$ objective
\begin{equation}\label{eq:BayesD_normalized}
f_\alpha(\mathbf{X})=\log\det(\mathbf{C}+\alpha \mathbf{I}_p),\qquad \alpha>0,
\end{equation}
after restricting attention to centered main-effect contrasts, which
replaces $\mathbf{X}^\top\mathbf{X}$ by $\mathbf{C}=\mathbf{X}^\top\mathbf{H}\mathbf{X}$. 

To benchmark a generic TCARD design matrix $\mathbf{X}$ against the ideal balanced structure, we introduce a reference information matrix $\mathbf{C}_0$ built from the arithmetic targets $\bar r$ and $\bar\lambda$ of Section~\ref{sec:algebraic}. Denote $\mathbf{C}_0 := \boldsymbol{\Lambda}_0 - \tfrac{1}{n}\mathbf{r}_0\mathbf{r}_0^\top$ as the centered information matrix.
Here $\mathbf{r}_0:= \bar r\,\mathbf{1}_p$ is the replication vector of a fully balanced TCARD. $\boldsymbol{\Lambda}_0$ with $(\boldsymbol{\Lambda}_0)_{ii} := \bar r$ and $(\boldsymbol{\Lambda}_0)_{ij} := \bar\lambda$ is the corresponding Gram matrix $\mathbf{X}_0^\top\mathbf{X}_0$ had one existed. 
Direct algebra yields the compact spectral form
\begin{equation}\label{eq:C0_spectral}
\mathbf{C}_0
\;=\;
\delta\,\bigl(\mathbf{I}_p - \tfrac{1}{p}\mathbf{1}_p\mathbf{1}_p^\top\bigr),
\end{equation}
where $\delta := \bar r - \bar\lambda = \frac{nk(p-k)}{p(p-1)}$, so that $\mathbf{C}_0$ has eigenvalues $\{0, \delta, \dots, \delta\}$ with multiplicities $1$ and $p-1$, respectively.

The remainder of this subsection uses $\mathbf{C}_0$ to develop a closed-form bound on the Bayes-$D$ efficiency gap, and clarifies the two limiting regimes of $f_\alpha$: (1) under strong regularization (large $\alpha$), $f_\alpha$ admits a trace expansion consistent with the $(M,S)$-principle; (2) under weak regularization (small $\alpha$), it converges to centered $D$-optimality on the $(p-1)$-dimensional contrast space.


Define the Bayes-$D$ efficiency gap of $\mathbf{X}$ relative to the balanced reference $\mathbf{C}_0$ by
\begin{equation}\label{eq:Galpha_def}
G_\alpha(\mathbf{X}) \;:=\; f_\alpha(\mathbf{C}_0) - f_\alpha(\mathbf{X})
\;=\; \log\det(\mathbf{C}_0 + \alpha\mathbf{I}_p)
- \log\det(\mathbf{C} + \alpha\mathbf{I}_p).
\end{equation}
Since $\mathbf{C}_0$ is the centered information matrix of an idealized fully balanced design (not always realizable under cardinality constraints), $G_\alpha(\mathbf{X}) \ge 0$ in the regimes where $\mathbf{C}_0$ majorizes admissible $\mathbf{C}$'s. Our goal is to bound $G_\alpha(\mathbf{X})$ in terms of the count-based dispersion quantities $\mathcal{V}_1, \mathcal{V}_2$ of~\eqref{eq:V1V2}.

\begin{lemma}\label{lem:E-Frobenius_BayesD}
Let $\mathbf{X} \in \mathcal{D}(n, 2^p, k)$ with centered information matrix $\mathbf{C} = \mathbf{X}^\top\mathbf{H}\mathbf{X}$, and let $\mathbf{C}_0$ be the balanced reference. Writing
$\mathbf{E} := \mathbf{C} - \mathbf{C}_0$,
\[
\sqrt{\mathcal{V}_1 + 2\mathcal{V}_2}
\;-\;
\frac{1}{n}\bigl(2\sqrt{p}\,\bar r\,\sqrt{\mathcal{V}_1}
+ \mathcal{V}_1\bigr)
\;\le\;
\|\mathbf{E}\|_F
\;\le\;
\sqrt{\mathcal{V}_1 + 2\mathcal{V}_2}
\;+\;
\frac{1}{n}\bigl(2\sqrt{p}\,\bar r\,\sqrt{\mathcal{V}_1}
+ \mathcal{V}_1\bigr),
\]
where $\mathcal{V}_1 = \mathcal{V}_1(\mathbf{X})$ and
$\mathcal{V}_2 = \mathcal{V}_2(\mathbf{X})$ as
in~\eqref{eq:V1V2}. The lower bound is informative only when its right-hand side is nonnegative.
\end{lemma}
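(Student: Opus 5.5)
We split $\mathbf{E}$ into a ``Gram part'' and a ``rank-one centering part'' and then apply the triangle inequality in Frobenius norm. Write $\mathbf{G} := \mathbf{X}^\top\mathbf{X}$ and $\mathbf{r}=(r_1,\dots,r_p)^\top = \mathbf{X}^\top\mathbf{1}_n$. Then $\mathbf{C} = \mathbf{G} - \tfrac1n \mathbf{r}\mathbf{r}^\top$ and $\mathbf{C}_0 = \boldsymbol\Lambda_0 - \tfrac1n\mathbf{r}_0\mathbf{r}_0^\top$. Hence
\[
\mathbf{E} \;=\; \underbrace{(\mathbf{G}-\boldsymbol\Lambda_0)}_{=:\mathbf{E}_1} \;-\; \tfrac1n\underbrace{\bigl(\mathbf{r}\mathbf{r}^\top - \mathbf{r}_0\mathbf{r}_0^\top\bigr)}_{=:\mathbf{E}_2}.
\]
The reverse and forward triangle inequalities give $\|\mathbf{E}_1\|_F - \tfrac1n\|\mathbf{E}_2\|_F \le \|\mathbf{E}\|_F \le \|\mathbf{E}_1\|_F + \tfrac1n\|\mathbf{E}_2\|_F$. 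It therefore suffices to show $\|\mathbf{E}_1\|_F = \sqrt{\mathcal{V}_1+2\mathcal{V}_2}$ exactly and $\|\mathbf{E}_2\|_F \le 2\sqrt{p}\,\bar r\sqrt{\mathcal{V}_1}+\mathcal{V}_1$.

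\emph{Step 1 (exact Gram term).} The diagonal entries of $\mathbf{E}_1$ are $r_i-\bar r$, and its off-diagonal entries are $\lambda_{ij}-\bar\lambda$, each appearing twice by symmetry. Summing squares gives $\|\mathbf{E}_1\|_F^2 = \sum_i (r_i-\bar r)^2 + 2\sum_{i<j}(\lambda_{ij}-\bar\lambda)^2 = \mathcal{V}_1+2\mathcal{V}_2$, which is immediate from \eqref{eq:V1V2}.

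\emph{Step 2 (rank-one perturbation).} Set $\mathbf{d} := \mathbf{r}-\mathbf{r}_0$, so $\|\mathbf{d}\|_2 = \sqrt{\mathcal{V}_1}$. Expanding gives $\mathbf{r}\mathbf{r}^\top - \mathbf{r}_0\mathbf{r}_0^\top = \mathbf{r}_0\mathbf{d}^\top + \mathbf{d}\mathbf{r}_0^\top + \mathbf{d}\mathbf{d}^\top$. The Frobenius norm of an outer product is $\|\mathbf{a}\mathbf{b}^\top\|_F = \|\mathbf{a}\|_2\|\mathbf{b}\|_2$, and $\|\mathbf{r}_0\|_2 = \sqrt{p}\,\bar r$. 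The triangle inequality then yields $\|\mathbf{E}_2\|_F \le 2\sqrt{p}\,\bar r\sqrt{\mathcal{V}_1} + \mathcal{V}_1$.

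Combining the two steps gives both inequalities of the lemma. The lower bound is meaningful only when the resulting expression is nonnegative; otherwise it is trivially satisfied because $\|\mathbf{E}\|_F\ge 0$.

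No step here is a real obstacle; the argument is essentially bookkeeping. The only place to take care is Step 2, where $\mathbf{E}_2$ must be kept as its three-term expansion rather than bounded via $\|\mathbf{r}\|\|\mathbf{r}\|$, so that the constants $2\sqrt{p}\,\bar r$ and $1$ match the statement. One could sharpen the bound by noting $\mathbf{1}_p^\top\mathbf{d}=0$ under the cardinality constraint, since $\sum_i r_i = nk = p\bar r$. This would make the cross terms orthogonal to $\mathbf{d}\mathbf{d}^\top$ in the Frobenius inner product, but the stated bound does not require it.
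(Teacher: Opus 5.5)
Your proposal is correct and follows essentially the same route as the paper's proof: both split $\mathbf{E}$ into $(\mathbf{X}^\top\mathbf{X}-\boldsymbol{\Lambda}_0)-\tfrac1n(\mathbf{r}\mathbf{r}^\top-\mathbf{r}_0\mathbf{r}_0^\top)$, compute the Gram part exactly as $\sqrt{\mathcal{V}_1+2\mathcal{V}_2}$, bound the rank-one part via the three-term expansion and $\|\mathbf{u}\mathbf{v}^\top\|_F=\|\mathbf{u}\|_2\|\mathbf{v}\|_2$, and finish with the triangle and reverse triangle inequalities. Your closing remark is also correct: since $\mathbf{r}_0^\top\mathbf{d}=0$, the three terms are mutually Frobenius-orthogonal, so in fact $\|\mathbf{r}\mathbf{r}^\top-\mathbf{r}_0\mathbf{r}_0^\top\|_F^2 = 2p\bar r^2\mathcal{V}_1+\mathcal{V}_1^2$, a sharpening the paper does not use.
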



For later reference we denote the upper and lower bound
functions in Lemma~\ref{lem:E-Frobenius_BayesD} by $U(\mathcal{V}_1, \mathcal{V}_2)$ and $L(\mathcal{V}_1, \mathcal{V}_2)$.
Lemma~\ref{lem:E-Frobenius_BayesD} provides a purely counts-based control of $\|\mathbf{C}-\mathbf{C}_0\|_F$.
When this perturbation is small relative to the prior precision $\alpha$, a standard log-determinant perturbation
argument yields an explicit upper bound on the Bayes-$D$ gap.

\begin{theorem}\label{thm:gap-upper_BayesD}
Fix $\alpha>0$ and let $\mathbf{A}:=\mathbf{C}_0+\alpha \mathbf{I}_p$.
Let $0<\rho<1$ and assume 
\begin{equation}\label{eq:neighborhood_BayesD}
U(\mathcal{V}_1,\mathcal{V}_2) \le \rho\,\alpha.
\end{equation}
Then
\begin{equation}\label{eq:gap-upper_BayesD}
G_\alpha(\mathbf{X})
\le
\|\mathbf{A}^{-1}\|_F\,U(\mathcal{V}_1,\mathcal{V}_2)
+\frac{1}{2(1-\rho)}\cdot\frac{U(\mathcal{V}_1,\mathcal{V}_2)^2}{\alpha^2},
\end{equation}
with
\[
\|\mathbf{A}^{-1}\|_F=\sqrt{\frac{1}{\alpha^2}+\frac{p-1}{(\alpha+\delta)^2}},
\qquad
\delta=\frac{n k(p-k)}{p(p-1)}.
\]
\end{theorem}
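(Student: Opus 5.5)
The plan is the standard second-order expansion of the log-determinant around $\mathbf{A}=\mathbf{C}_0+\alpha\mathbf{I}_p$, with the perturbation $\mathbf{E}=\mathbf{C}-\mathbf{C}_0$ controlled by Lemma~\ref{lem:E-Frobenius_BayesD}.

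First, write $\mathbf{C}+\alpha\mathbf{I}_p=\mathbf{A}+\mathbf{E}=\mathbf{A}^{1/2}(\mathbf{I}_p+\mathbf{M})\mathbf{A}^{1/2}$ with $\mathbf{M}:=\mathbf{A}^{-1/2}\mathbf{E}\mathbf{A}^{-1/2}$. This gives
\[
G_\alpha(\mathbf{X})=-\log\det(\mathbf{I}_p+\mathbf{M})=-\sum_{i=1}^p\log(1+\mu_i),
\]
where $\mu_i$ are the (real) eigenvalues of the symmetric matrix $\mathbf{M}$. By \eqref{eq:C0_spectral}, $\mathbf{A}$ has eigenvalues $\alpha$ (once) and $\alpha+\delta$ (multiplicity $p-1$). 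Hence $\|\mathbf{A}^{-1}\|_2=1/\alpha$ and $\|\mathbf{A}^{-1}\|_F$ has the stated closed form. This immediately gives $|\mu_i|\le\|\mathbf{M}\|_2\le\|\mathbf{E}\|_2/\alpha\le\|\mathbf{E}\|_F/\alpha\le U/\alpha\le\rho<1$, using the upper bound of Lemma~\ref{lem:E-Frobenius_BayesD} together with the neighborhood assumption \eqref{eq:neighborhood_BayesD}. So every logarithm is well defined and $\mathbf{A}+\mathbf{E}$ is positive definite.

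Second, use the scalar inequality $-\log(1+x)\le -x+\frac{x^2}{2(1-\rho)}$, valid for $|x|\le\rho<1$. It follows by Taylor's theorem with Lagrange remainder: the second derivative of $-\log(1+x)$ is $(1+\xi)^{-2}$, and for $|\xi|\le\rho$ this is at most $(1-\rho)^{-2}$. That would only give the factor $\frac{1}{2(1-\rho)^2}$. To get the stated constant $\frac{1}{2(1-\rho)}$, use the series instead: $-\log(1+x)+x=\sum_{m\ge2}(-x)^m/m\le\frac{x^2}{2}\sum_{j\ge0}|x|^j\le\frac{x^2}{2(1-\rho)}$. Summing over $i$ gives
\[
G_\alpha\le-\operatorname{tr}(\mathbf{M})+\frac{\|\mathbf{M}\|_F^2}{2(1-\rho)}.
\]

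Third, bound each term. The linear term is $|\operatorname{tr}(\mathbf{M})|=|\operatorname{tr}(\mathbf{A}^{-1}\mathbf{E})|\le\|\mathbf{A}^{-1}\|_F\|\mathbf{E}\|_F\le\|\mathbf{A}^{-1}\|_F\,U$, by Cauchy--Schwarz for the Frobenius inner product and Lemma~\ref{lem:E-Frobenius_BayesD}. The quadratic term is $\|\mathbf{M}\|_F\le\|\mathbf{A}^{-1/2}\|_2^2\|\mathbf{E}\|_F\le U/\alpha$, which yields $U^2/\alpha^2$. Combining the two gives \eqref{eq:gap-upper_BayesD}.

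The main delicate point is the quadratic remainder. The constant $1/(2(1-\rho))$ holds only with the geometric-series form of the bound, not the crude Lagrange remainder, so I will present the series argument. I also need to make sure the uniform eigenvalue bound $|\mu_i|\le\rho$ comes from the operator norm with $\|\mathbf{A}^{-1}\|_2=1/\alpha$, since $\alpha$ is the smallest eigenvalue of $\mathbf{A}$ because of the zero eigenvalue of $\mathbf{C}_0$. The remaining steps are routine.
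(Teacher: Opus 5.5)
Your proposal is correct and matches the paper's proof step for step: you use the same whitening $\mathbf{A}^{-1/2}\mathbf{E}\mathbf{A}^{-1/2}$, the same operator-norm check $\|\mathbf{E}\|_F/\alpha\le\rho$, the same trace and Frobenius bounds via $\|\mathbf{A}^{-1}\|_F$ and $\|\mathbf{A}^{-1}\|_2=1/\alpha$, and then Lemma~\ref{lem:E-Frobenius_BayesD}. The only difference is that you prove the log-determinant inequality with constant $1/(2(1-\rho))$ through the eigenvalue-wise geometric-series bound, whereas the paper simply cites it as a standard perturbation inequality.
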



Theorem~\ref{thm:gap-upper_BayesD} gives a closed-form Bayes-$D$ performance guarantee in a neighborhood around
the balanced reference. We next connect Bayes-$D$ to classical spectral criteria by specializing $f_\alpha$
to the strong- and weak-prior regimes.

\begin{theorem}\label{thm:BayesD_main}
Let $\boldsymbol{\Delta}_{n,p,k}$ be the finite class of feasible two-level TCARD designs with $n$ runs, $p$ factors,
and row-sum $k$. For $\mathbf{X}\in\boldsymbol{\Delta}_{n,p,k}$ let $\mathbf{C}=\mathbf{X}^\top \mathbf{H} \mathbf{X}$ and $f_\alpha(\mathbf{X})=\log\det(\mathbf{C}+\alpha \mathbf{I}_p)$.
Assume $\mathrm{rank}(\mathbf{C})=p-1$ for all $\mathbf{X}\in\boldsymbol{\Delta}_{n,p,k}$.

\textup{(i) (Small $\alpha$).}
There exists $\alpha^\ast>0$ such that for every $\alpha\in(0,\alpha^\ast)$,
\[
\arg\max_{\mathbf{X}\in\boldsymbol{\Delta}_{n,p,k}} f_\alpha(\mathbf{X})
\ \subseteq\
\arg\max_{\mathbf{X}\in\boldsymbol{\Delta}_{n,p,k}} \log\operatorname{pdet}(\mathbf{C}),
\]
where $\operatorname{pdet}(\mathbf{C})=\prod_{i=2}^p \lambda_i(\mathbf{C})$ is the pseudo-determinant. In particular, every Bayes-$D$ optimal design for sufficiently small $\alpha$ is centered $D$-optimal.

\textup{(ii) (Large $\alpha$ and $\Phi_{\mathrm{BCD}}$-minimizers).}
Assume further that $p\mid nk$, so that $\boldsymbol{\Delta}_{n,p,k}$ contains at least one column-balanced design
($r_i\equiv \bar r$). Let $X_\phi$ be a global minimizer of
$\Phi_{\mathrm{BCD}}(\mathbf{X})$ over $\boldsymbol{\Delta}_{n,p,k}$.
Then there exists $\alpha^{\ast\ast}>0$ such that $\mathbf{X}_\phi$ is Bayes-$D$ optimal for every $\alpha>\alpha^{\ast\ast}$.
\end{theorem}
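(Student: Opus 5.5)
The plan is to treat both parts as asymptotic statements about $f_\alpha(\mathbf{X})=\sum_{i=1}^p\log(\lambda_i(\mathbf{C})+\alpha)$ over the finite class $\boldsymbol{\Delta}_{n,p,k}$. Finiteness lets every pointwise expansion be made uniform, and it makes every strict inequality between designs hold with a positive gap.

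\emph{Part (i).} Under the rank assumption, $\lambda_1(\mathbf{C})=0$, so
\[
f_\alpha(\mathbf{X})=\log\alpha+\sum_{i=2}^p\log\bigl(\lambda_i(\mathbf{C})+\alpha\bigr).
\]
Since $0\le\log(\lambda_i+\alpha)-\log\lambda_i\le\alpha/\lambda_i$, I get
\[
0\le f_\alpha(\mathbf{X})-\log\alpha-\log\operatorname{pdet}(\mathbf{C})\le (p-1)\alpha/\lambda_{\min}^{+},
\]
where $\lambda_{\min}^{+}>0$ is the smallest positive eigenvalue over the whole finite class. Let $\varepsilon>0$ be the smallest gap between $\max\log\operatorname{pdet}$ and the value of any non-maximizer. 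Taking $\alpha^\ast=\varepsilon\lambda_{\min}^{+}/(p-1)$, a pdet-maximizer beats every non-maximizer in $f_\alpha$ for all $\alpha<\alpha^\ast$. This proves the inclusion.

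\emph{Part (ii).} I would expand $\log(1+\lambda/\alpha)$ to get
\[
f_\alpha(\mathbf{X})=p\log\alpha+\frac{\operatorname{tr}(\mathbf{C})}{\alpha}-\frac{\operatorname{tr}(\mathbf{C}^2)}{2\alpha^2}+O(\alpha^{-3}),
\]
with the remainder uniform because $\|\mathbf{C}\|\le nk$ on the class. Consequently, for large $\alpha$ every Bayes-$D$ maximizer first maximizes $\operatorname{tr}(\mathbf{C})$, and among those it minimizes $\operatorname{tr}(\mathbf{C}^2)$; any design failing either stage loses by a margin of order $\alpha^{-1}$ or $\alpha^{-2}$. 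It then remains to show that $\mathbf{X}_\phi$ passes both stages.
\begin{itemize}
\item \emph{Stage one.} Because $p\mid nk$, designs with $r_i\equiv\bar r$ exist, and by \eqref{eq:tr_identities} these are exactly the $\operatorname{tr}(\mathbf{C})$-maximizers. I must show that the global $\Phi_{\mathrm{BCD}}$-minimizer has $\mathcal{V}_1=0$. I would invoke Proposition~\ref{prop:force_M}: when $\bar r$ is an integer its conclusion gives $r_i\equiv\bar r$. If the weight condition is not assumed, I would adapt its swap argument, comparing the integer lattice of $\mathcal{V}_1$ values against the change in $\mathcal{V}_2$.
\item \emph{Stage two.} With $\mathbf{r}$ balanced, Proposition~\ref{prop:force_S}(1) gives that $\operatorname{tr}(\mathbf{C}^2)$ equals a constant plus $2\sum_{i<j}(\lambda_{ij}-\bar r^2/n)^2$. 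Since $\sum_{i<j}\lambda_{ij}$ is fixed by \eqref{eq:row_lambda}, this equals $2\mathcal{V}_2$ plus a constant. So $\mathbf{X}_\phi$, which minimizes $\mathcal{V}_2$ among balanced designs, minimizes $\operatorname{tr}(\mathbf{C}^2)$ in the first stage's class.
\end{itemize}

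The main obstacle is twofold. The first is forcing $\mathcal{V}_1=0$ at the $\Phi_{\mathrm{BCD}}$ optimum without extra weight assumptions. The second is tie-breaking: designs equal in $\operatorname{tr}(\mathbf{C})$ and $\operatorname{tr}(\mathbf{C}^2)$ could still differ at order $\alpha^{-3}$ through $\operatorname{tr}(\mathbf{C}^3)$. For the ties, I would first handle the case $\mathcal{V}_2=0$, where Proposition~\ref{prop:bibd-full-balance} gives $\mathbf{C}=\mathbf{C}_0$, the spectrum is determined, and no ties can arise. In general, I expect to need one of two things. One option is an argument that all minimizers of $\mathcal{V}_2$ share the same spectrum. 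Failing that, I would state the conclusion as Bayes-$D$ optimality up to $O(\alpha^{-3})$, or among $\operatorname{tr}(\mathbf{C}^2)$-ties, and choose $\alpha^{\ast\ast}$ from the finite set of first- and second-order gaps.
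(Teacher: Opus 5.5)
Your part (i) is correct and is the paper's argument made quantitative. The paper uses continuity at $\alpha=0$ of $F_{\mathbf{X}}(\alpha)=\alpha^{-1}\det(\mathbf{C}+\alpha\mathbf{I}_p)=\prod_{i\ge 2}(\lambda_i(\mathbf{C})+\alpha)$ together with finiteness, where you give an explicit $\alpha^\ast$. For part (ii) your two-stage expansion is also the paper's route: the paper reads off the coefficients $\operatorname{tr}(\mathbf{C})$ and $\tfrac12\{\operatorname{tr}(\mathbf{C})^2-\operatorname{tr}(\mathbf{C}^2)\}$ of $F_{\mathbf{X}}$. Your stage-two reduction of $\operatorname{tr}(\mathbf{C}^2)$ to $\mathcal{V}_2$ matches the paper's.

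The two obstacles you flag are genuine, and the paper's proof resolves neither. For stage one, the paper simply asserts that a global minimizer ``necessarily'' has $\mathcal{V}_1=0$. That is the conclusion of Proposition~\ref{prop:force_M}, which requires \eqref{eq:w_threshold}. Adapting the swap argument cannot remove this requirement. Below the threshold a balancing swap may raise $\Phi_{\mathrm{BCD}}$: its $\mathcal{V}_2$ increment $\sum_j[2(\lambda_{bj}-\lambda_{aj})+2]$ can be positive and, weighted by $w_2/\binom{p}{2}$, exceed the replication gain $2w_1/p$. So you must add \eqref{eq:w_threshold}, or assume outright that $\Phi_{\mathrm{BCD}}$-minimizers are column balanced.

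The tie case is not a removable technicality. The paper treats only designs with $\mathcal{V}_1(\mathbf{X})>0$ or $\mathcal{V}_2(\mathbf{X})>\mathcal{V}_2(\mathbf{X}_\phi)$. It then claims $F_{\mathbf{X}_\phi}(\alpha)>F_{\mathbf{X}}(\alpha)$ for every $\mathbf{X}\neq\mathbf{X}_\phi$, which is already false for row permutations. In the omitted case the statement itself fails.

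\emph{Counterexample.} Take $k=2$, $p=n=9$ (so $p\mid nk$), with rows the edges of a graph: $\mathbf{X}_1$ a $9$-cycle and $\mathbf{X}_2$ three disjoint triangles.
\begin{itemize}
\item Both graphs are $2$-regular with every $\lambda_{ij}\in\{0,1\}$. Hence $\mathcal{V}_1=0$ and $\mathcal{V}_2$ is at its absolute minimum, so both minimize $\Phi_{\mathrm{BCD}}$ for every $w_1,w_2>0$.
\item Here $\mathbf{C}=2\mathbf{I}+\mathbf{A}-\tfrac{4}{9}\mathbf{J}$. Its nonzero eigenvalues are $2+2\cos(2\pi j/9)$, $j=1,\dots,8$, for $\mathbf{X}_1$, and $4,4,1,1,1,1,1,1$ for $\mathbf{X}_2$.
\item Both have rank $8=p-1$. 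The rank hypothesis has to be read as restricting the class: literally it never holds, since $n$ identical rows give $\mathbf{C}=\mathbf{0}$.
\item They share $\operatorname{tr}(\mathbf{C})=14$ and $\operatorname{tr}(\mathbf{C}^2)=38$, but $\operatorname{tr}(\mathbf{C}^3)=116$ versus $134$. Hence $e_3=230$ versus $236$, and $F_{\mathbf{X}_2}(\alpha)-F_{\mathbf{X}_1}(\alpha)=6\alpha^5+O(\alpha^4)$.
\end{itemize}
So $\mathbf{X}_1$ is a $\Phi_{\mathrm{BCD}}$-minimizer that is not Bayes-$D$ optimal for large $\alpha$. In particular, no lemma saying that all $\mathcal{V}_2$-minimizers share a spectrum can exist.

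Your fallback is the right repair. Under the stage-one hypothesis, your expansion proves $\arg\max f_\alpha\subseteq\arg\min\Phi_{\mathrm{BCD}}$ for all large $\alpha$. Moreover, take the lexicographically largest of the finitely many polynomials $F_{\mathbf{X}}$ with $\mathbf{X}\in\arg\min\Phi_{\mathrm{BCD}}$. This yields one fixed minimizer that is Bayes-$D$ optimal for all $\alpha>\alpha^{\ast\ast}$, which is the strongest correct version of (ii).
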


Note that under the TCARD constraint, $\mathbf{C}\mathbf{1}_p = \mathbf{0}$ always holds, so $\operatorname{rank}(\mathbf{C})\leq p-1$. The assumption requires equality, i.e.\ that no additional linear
dependence exists among the centred columns of $\mathbf{X}$.
A sufficient condition is $n \geq p-1$ and the absence of any
non-trivial linear combination of the centred columns that sums to
zero.  In particular, if the design matrix $\mathbf{X}$ restricted to any
$p-1$ columns has rank $p-1$ after centering (e.g.\ if the design is
connected in the graph-theoretic sense of the treatment-concurrence
graph), then $\operatorname{rank}(\mathbf{C})=p-1$.
For designs produced by the CE algorithm, connectivity is almost
certain for $n \geq p-1$. For small $n$ or degenerate designs the
rank may fall below $p-1$ and the Bayes-D framework with the
pseudo-determinant should be used instead.

To gain intuition for Theorem~\ref{thm:BayesD_main}, it is helpful to view $f_\alpha(\mathbf{X})=\sum_{i=1}^p\log(\lambda_i(\mathbf{C})+\alpha)$: \textup{(i) Small $\alpha$.}
Because $C$ has one structural zero eigenvalue, $f_\alpha(\mathbf{X})$ always contains a common $\log(\alpha)$ term.
When $\alpha$ is small, the remaining part behaves like $\sum_{i=2}^p\log\lambda_i(\mathbf{C})$, so Bayes-$D$ ranks designs
almost exactly by the pseudo-determinant of $\mathbf{C}$, i.e., centered $D$-optimality on the $(p-1)$-dimensional contrast space.\textup{(ii) Large $\alpha$.}
When $\alpha$ is large, each term $\log(\lambda_i(\mathbf{C})+\alpha)$ changes only slightly across designs, and a standard
log-determinant expansion yields
\[
\log\det(\mathbf{C}+\alpha \mathbf{I}_p)=p\log\alpha+\alpha^{-1}\mathrm{tr}(\mathbf{C})-\tfrac{1}{2}\alpha^{-2}\mathrm{tr}(\mathbf{C}^2)+\cdots.
\]
Thus, for large $\alpha$, Bayes-$D$ rewards designs with larger total information $\mathrm{tr}(\mathbf{C})$ and, secondarily,
with a more even spectrum (smaller $\mathrm{tr}(\mathbf{C}^2)$). This explains why the same replication and concurrence summaries
used in $\Phi_{\mathrm{BCD}}$ provide a natural and effective surrogate for Bayes-$D$ when $\alpha$ is sufficiently large.

\section{Algorithmic Construction}\label{sec:algorithm}

\subsection{Coordinate-Exchange Algorithm}\label{subsec:CE}

To minimize the counts-only criterion \eqref{eq:Phi_BCD} over the TCARD space
$\{\mathbf{X}\in\{0,1\}^{n\times p}:\ \mathbf{X}\mathbf{1}_p=k\mathbf{1}_n\}$,
we employ a cyclic coordinate-exchange (CE) procedure in the spirit of
\citet{meyer1995coordinate}, specialized to enforce the fixed row-sum constraint.
Because each run must contain exactly $k$ active factors, a single-bit flip is infeasible in general.
Instead, each local move exchanges a currently active coordinate with an inactive one within the same row:
for a given run $t$, choose $a\in S_t=\{i:x_{ti}=1\}$ and $b\in \bar S_t=\{j:x_{tj}=0\}$, and perform
$x_{ta}\leftarrow 0,\ x_{tb}\leftarrow 1$. This preserves $\sum_{i=1}^p x_{ti}=k$ by construction. 

The objective \eqref{eq:Phi_BCD} depends on $\mathbf{X}$ only through the replication vector
$\mathbf{r}=\mathbf{X}^\top \mathbf{1}_n$ and concurrence counts $\lambda_{ij}=(\mathbf{X}^\top \mathbf{X})_{ij}$ for $i\neq j$.
We maintain $(\mathbf{r},\{\lambda_{ij}\})$ incrementally, so that each proposed swap can be scored by a local change $\Delta\Phi_{\mathrm{BCD}}$ without recomputing global summaries.
Specifically, swapping $a\to 0$ and $b\to 1$ in row $t$ only changes $r_a,r_b$ and the pair counts
$\{\lambda_{aj},\lambda_{bj}: j\in S_t\setminus\{a\}\}$. And all other terms in \eqref{eq:Phi_BCD} remain unchanged.
Therefore each candidate swap can be evaluated in $O(k)$ time, and a row-wise search over all
$k(p-k)$ swaps yields an $O(k^2(p-k))$ worst-case update per row (often reduced in practice by early acceptance or restricting to a subset of candidate exchanges). We run CE in sweeps over rows, accepting for each row the best-improving exchange among all
$a\in S_t,\ b\in\bar S_t$. 
The algorithm terminates when a complete sweep produces no improvement, and we use multiple random feasible starts, returning the design with the smallest achieved $\Phi_{\mathrm{BCD}}$. The algorithm terminates in finite steps because 
$\Phi_{\mathrm{BCD}}$ is non-negative on the finite 
feasible set $\mathcal{D}(n,2^p,k)$ and each accepted 
swap strictly decreases it. 
Algorithm~\ref{alg:CE_TCARD} summarizes the procedure.

\begin{algorithm}[hpt]
\DontPrintSemicolon
\caption{Coordinate-exchange (CE) minimization of $\Phi_{\mathrm{BCD}}(\mathbf{X})$ over $D(n,2^p,k)$}\label{alg:CE_TCARD}
\KwIn{$(n,p,k)$; weights $(w_1,w_2)$; restarts $R$; max sweeps $S_{\max}$.}
\KwOut{$\mathbf{X}^\star\in\{0,1\}^{n\times p}$ with $\mathbf{X}^\star\mathbf{1}_p=k\mathbf{1}_n$.}
$\bar r \leftarrow nk/p$;\quad $\bar\lambda \leftarrow n\binom{k}{2}/\binom{p}{2}$\;
$\Phi_{\mathrm{BCD}}^\star \leftarrow +\infty$;\quad $\mathbf{X}^\star \leftarrow \emptyset$\;

\For(\tcp*[f]{multi-start}){$r=1,\dots,R$}{
    Generate a feasible start $\mathbf{X}$ by sampling a $k$-subset $S_t\subset\{1,\dots,p\}$ for each row $t$ and setting
    $x_{ti}=\mathbb{I}(i\in S_t)$\;
    Compute counts $r_i=\sum_{t=1}^n x_{ti}$ and $\lambda_{ij}=\sum_{t=1}^n x_{ti}x_{tj}$ ($i<j$)\;
    Compute $\Phi_{\mathrm{BCD}} \leftarrow \Phi_{\mathrm{BCD}}(X)$\;

    \For(\tcp*[f]{CE sweeps}){$s=1,\dots,S_{\max}$}{
        $\texttt{improved}\leftarrow \texttt{false}$\;

        \For(\tcp*[f]{row-wise search}){$t=1,\dots,n$}{
            $S_t\leftarrow \{i:\ x_{ti}=1\}$;\quad $\bar S_t\leftarrow \{j:\ x_{tj}=0\}$\;
            $\Delta_{\min}\leftarrow 0$;\quad $(a^\star,b^\star)\leftarrow \emptyset$\;

            \For{$a\in S_t$}{
                \For{$b\in \bar S_t$}{
                    Compute $\Delta\Phi_{\mathrm{BCD}}(a\to 0,b\to 1)$ using only affected terms in \eqref{eq:Phi_BCD}:
                    $(r_a,r_b)$ and $\{\lambda_{aj},\lambda_{bj}: j\in S_t\setminus\{a\}\}$\;
                    \If{$\Delta\Phi_{\mathrm{BCD}} < \Delta_{\min}$}{
                        $\Delta_{\min}\leftarrow \Delta\Phi_{\mathrm{BCD}}$;\quad $(a^\star,b^\star)\leftarrow (a,b)$\;
                    }
                }
            }

            \If(\tcp*[f]{accept improving swap}){$\Delta_{\min}<0$}{
                $x_{ta^\star}\leftarrow 0$;\quad $x_{tb^\star}\leftarrow 1$\;
                $r_{a^\star}\leftarrow r_{a^\star}-1$;\quad $r_{b^\star}\leftarrow r_{b^\star}+1$\;
                \For{$j\in S_t\setminus\{a^\star\}$}{
                    $\lambda_{a^\star j}\leftarrow \lambda_{a^\star j}-1$;\quad
                    $\lambda_{b^\star j}\leftarrow \lambda_{b^\star j}+1$\;
                }
                $\Phi_{\mathrm{BCD}} \leftarrow \Phi_{\mathrm{BCD}} + \Delta_{\min}$;\quad $\texttt{improved}\leftarrow \texttt{true}$\;
            }
        }

        \If(\tcp*[f]{no improvement in a full sweep}){$\texttt{improved}=\texttt{false}$}{
            \textbf{break}\;
        }
    }

    \If{$\Phi_{\mathrm{BCD}} < \Phi_{\mathrm{BCD}}^\star$}{
        $\Phi_{\mathrm{BCD}}^\star\leftarrow \Phi_{\mathrm{BCD}}$;\quad $\mathbf{X}^\star\leftarrow \mathbf{X}$\;
    }
}
\Return{$\mathbf{X}^\star$}\;
\end{algorithm}

\subsection{Simulation-based tuning of parameters.}\label{subsec:tuning}
The criterion $\Phi_{\mathrm{BCD}}(\mathbf{X})$ is model-free and depends only on the replication and concurrence counts, but the
relative weight $w_1$ (with $w_2$ fixed to 1) controls a practically important trade-off: larger $w_1$ places
more emphasis on marginal replication regularity, whereas a smaller $w_1$ allows CE to prioritize pairwise concurrence regularity. In real experimental planning, however, the design must be finalized before collecting any responses. Hence, the tuning parameter $w_1$ cannot be calibrated using observed data. We therefore tune $w_1$ offline using a simulation plan that emulates the intended downstream analysis task under a range of plausible regimes. In this paper, we illustrate the procedure using sparse subspace selection. Throughout our simulations, we take the $F_1$ score as the target metric, though the same workflow readily accommodates other downstream tasks and performance measures. Accordingly, unless otherwise stated, the remainder of the paper uses this sparse subspace selection problem as the running example for illustration. Readers may replace it with their own analysis pipeline and corresponding metric when tuning $w_1$ for their application.

Concretely, we specify a screening plan indexed by $(q,h)$, where $q$ denotes the assumed sparsity level (e.g., number of active
main effects) and $h$ indexes additional scenario difficulty settings (e.g., effect size/SNR, correlation, or noise level). For each candidate $w_1$ on a logarithmic grid, we (i) construct a TCARD
design via a fixed-budget CE minimization of $\Phi_{\mathrm{BCD}}(\mathbf{X})$, and (ii) evaluate that design under the
screening plan using Monte Carlo experiments, producing a performance metric such as $F_1$ summarized by a robust statistic (here, Monte Carlo mean). This yields raw scores $\mathrm{Score}_{q,h}(w_1)$.

Because the absolute scale of $\mathrm{Score}_{q,h}(w_1)$ can differ across scenarios, we compare candidate $w_1$
values within each $(q,h)$ after standardizing scores across the grid $\mathcal W=\{w_1^{(1)},\ldots,w_1^{(G)}\}$.
For each fixed $(q,h)$, define
\[
\bar S_{q,h}=\frac{1}{G}\sum_{g=1}^G \mathrm{Score}_{q,h}\!\left(w_1^{(g)}\right),\qquad
s_{q,h}=\left\{\frac{1}{G-1}\sum_{g=1}^G\Big(\mathrm{Score}_{q,h}\!\left(w_1^{(g)}\right)-\bar S_{q,h}\Big)^2\right\}^{1/2},
\]
and the standardized score
\[
Z_{q,h}(w_1)=\frac{\mathrm{Score}_{q,h}(w_1)-\bar S_{q,h}}{\max\{s_{q,h},\varepsilon\}},
\]
where $\varepsilon>0$ is a small constant for numerical stability when $s_{q,h}$ is near zero. We then summarize
performance across $h$ by treating $h$ as indexing a user-specified distribution of operating regimes and taking
\[
\mu_q(w_1)=\mathbb{E}_{h}\!\left\{Z_{q,h}(w_1)\right\},
\]
implemented in practice by an empirical average over $h\in\mathcal H$. Finally, for each $q$ we select
\[
w_1^\star(q)=\arg\max_{w_1\in\mathcal W}\ \mu_q(w_1),
\]
and retain the corresponding CE-optimized design $\mathbf{X}\!\left(w_1^\star(q)\right)$. Note that this procedure is directly actionable in practice: before running a real study, the practitioner encodes prior scientific knowledge or engineering judgment about plausible sparsity and signal-to-noise regimes into the screening plan, tunes $w_1$ using simulation only, and then commits to the resulting TCARD design to collect real responses. Importantly, the tuning uses no information from future outcomes, preserving the design-before-data principle while aligning the design criterion with the intended inferential objective.

\begin{algorithm}[hpt]
\DontPrintSemicolon
\caption{Simulation-based tuning of $w_1$ for $\Phi_{\mathrm{BCD}}$-driven TCARD designs}\label{alg:tune_w1_phi_meanmean}
\KwIn{$(n,p,k)$; CE controls; fixed $w_2=1$; grid $\mathcal W=\{w_1^{(1)},\dots,w_1^{(G)}\}$;
screening plan indexed by $(q,h)\in\mathcal Q\times\mathcal H$; Monte Carlo size $B$;
downstream task: sparse subspace selection with interactions; metric: interaction $F_1$.}
\KwOut{$w_1^\star(q)$ and tuned designs $X^\star(q)$ for each $q\in\mathcal Q$.}

\BlankLine
\For{$g=1,\dots,G$}{
  $w_1 \leftarrow w_1^{(g)}$\;
  Construct $X^{(g)}$ by multi-start CE minimizing $\Phi_{\mathrm{BCD}}(X)$ with weights $(w_1,1)$\;

  \ForEach{$(q,h)\in\mathcal Q\times\mathcal H$}{
    \For{$b=1,\dots,B$}{
      Generate a Monte Carlo dataset under setting $(q,h)$ and fit the screening pipeline on $X^{(g)}$. Compute $F_{1,b}^{(g)}(q,h)$. \;
    }
    Compute the Monte Carlo mean:
    $
      \widehat{F}_1^{(g)}(q,h)\leftarrow \frac{1}{B}\sum_{b=1}^B F_{1,b}^{(g)}(q,h).
    $
    Set $\mathrm{Score}_{q,h}(w_1^{(g)})\leftarrow \widehat{F}_1^{(g)}(q,h)$\;
  }
}

\BlankLine
\ForEach{$(q,h)\in\mathcal Q\times\mathcal H$}{
  Compute the across-grid mean and sd:
  $\bar S_{q,h}\leftarrow \frac{1}{G}\sum_{g=1}^G \mathrm{Score}_{q,h}(w_1^{(g)})$, and $
    s_{q,h}\leftarrow \left\{\frac{1}{G-1}\sum_{g=1}^G\big(\mathrm{Score}_{q,h}(w_1^{(g)})-\bar S_{q,h}\big)^2\right\}^{1/2}$.\;
  \For{$g=1,\dots,G$}{
    Standardize across $w_1$ (z-score):
    $
      Z_{q,h}(w_1^{(g)})\leftarrow \frac{\mathrm{Score}_{q,h}(w_1^{(g)})-\bar S_{q,h}}{\max\{s_{q,h},\varepsilon\}}.
    $
  }
}

\BlankLine
\ForEach{$q\in\mathcal Q$}{
  \For{$g=1,\dots,G$}{
    Aggregate over regimes by the empirical mean over $h$: $\mu_q(w_1^{(g)})\leftarrow \frac{1}{|\mathcal H|}\sum_{h\in\mathcal H} Z_{q,h}(w_1^{(g)})$.
  }
  Select $g^\star(q)\leftarrow \arg\max_{g}\ \mu_q(w_1^{(g)})$\;
  Set $w_1^\star(q)\leftarrow w_1^{(g^\star(q))}$;\quad $X^\star(q)\leftarrow X^{(g^\star(q))}$\;
}
\Return{$\{w_1^\star(q),X^\star(q)\}_{q\in\mathcal Q}$}\;
\end{algorithm}

\section{Numerical Study}\label{sec:simulation}

\subsection{Simulation design and factor settings}\label{sec:sim_setup}
We run a simulation study to benchmark sparse-design construction and optimization algorithms under a treatment-cardinality constraint. A design is represented by a binary matrix $\mathbf{X}\in\{0,1\}^{n\times p}$, where $\mathbf{X}_{ti}=1$ indicates that factor $i$ is activated in run $t$. The constraint requires exact sparsity in every run, i.e., $\mathbf{X}\mathbf{1}_p=k\,\mathbf{1}_n$, so each row contains exactly $k$ ones. Each simulated instance is indexed by the number of factors $p$, the run-size ratio $n/p$, and the sparsity ratio $k/p$. We explicitly study three levels of constraint strength by varying $k/p$: strong constraint ($\frac{k}{p}=0.1$), moderate constraint ($\frac{k}{p}=0.25$), and weak constraint ($\frac{k}{p}=0.5$). For each constraint level, we consider $p\in\{20,40,60,80\}$ and $\frac{n}{p}\in\{1.5,\,3\}$ so that $n=\lfloor (n/p)\,p\rfloor$ and $k=\lfloor (k/p)\,p\rfloor$. This range spans settings that are representative of practice: for example, $p\approx20–40$ corresponds to small-to-moderate screening studies in engineering prototyping or industrial process development, while $p\approx60–80$ matches higher-dimensional screening problems such as feature-ablation experiments in machine learning pipelines, large-factor factorial screening, or assay/omics-style perturbation panels where only a small subset of factors can be activated per run. The two run budgets $n/p\in \{1.5,3\}$ reflect common regimes in which experiments are limited to on the order of one to a few runs per factor, ranging from tightly budgeted pilot studies to more adequately sampled screening campaigns. For each $(p,n/p,k/p)$ configuration, we generate $R=30$ independent replications.

Within each replication, we construct feasible designs and, when applicable, further refine them using a feasibility-preserving local search. Feasibility is always enforced by the exact row-sum constraint $\mathbf{X}\mathbf{1}_p=k\,\mathbf{1}_n$. We deliberately focus on \emph{model-free} design criteria in the optimization stage, i.e., objectives that depend only on the combinatorial structure of $\mathbf{X}$ or inter-run Hamming distances, so that all competitors are compared on the same footing without privileging a particular response model.  This restriction is imposed to ensure a fair comparison across criteria.

Feasible initial designs are generated by fast construction schemes, including \emph{Random-TCARD} (independent random \(k\)-subsets in each row) and greedy heuristics that target marginal replication balance, with optional encouragement of pairwise concurrence regularity. Starting from such a feasible seed, we apply the CE local search described in Algorithm~\ref{alg:CE_TCARD}. Across all CE runs, the search template is fixed; only the objective used to score and accept swaps varies. The full list of compared methods and their objective definitions is given in Section~\ref{sec:sim_methods}. To ensure fair comparison across iterative optimizers, all CE methods use the same initialization protocol and stopping rule. Unless otherwise stated, we use a standardized greedy seed followed by a light random perturbation. Each run terminates when it reaches a prescribed computational budget (such as a maximum number of sweeps or a wall-clock cap) or when a full sweep produces no improvement beyond a specified tolerance. For objectives with expensive evaluations, each sweep may examine only a fixed number of randomly sampled candidate swaps per row; for objectives with fast incremental updates, all $k(p-k)$ within-row swaps are evaluated.

Each returned design is evaluated using a unified set of diagnostics, covering balance and concurrence structure, and downstream statistical performance. Importantly, information-matrix-based quantities are used only as post-hoc diagnostics to avoid model-dependent advantages. The primary evaluation metrics are summarized in Section~\ref{sec:sim_metrics}.

\subsection{Methods for comparison}\label{sec:sim_methods}
We compare a CE framework applied to several \emph{model-free} criteria, together with greedy feasible-construction baselines. Across the CE runs, the algorithmic template is held fixed and methods differ only in the criterion used to evaluate and accept exchanges. 


For the main comparison (Section~\ref{subsubsec:main_sim}), we fix the weights in the $\Phi_{\mathrm{BCD}}$ at \(w_1=w_2=1\) across all scenarios and evaluation metrics. This yields a parameter-free default and avoids redefining the criterion from one scenario to another. Under this default specification, we compare methods only in terms of structural design quality and projection-based information quality. In a separate sensitivity analysis (Section~\ref{subsubsec:tune_w1}), we additionally examine a data-free reweighting strategy described in Algorithm~\ref{alg:tune_w1_phi_meanmean}: a target-aware oracle tuning scheme indexed by the projection order \(q\). These auxiliary experiments are used to assess the adaptivity of \(\Phi_{\mathrm{BCD}}(\mathbf X)\) for downstream screening performance.

The other CE variants considered are as follows. First, the CE-UE$(s^2)$ optimizes the classical UE$(s^2)$ criterion. For two-level TCARD designs we have already established an exact identity showing that UE$(s^2)$ is equivalent to a weighted combination of the same two dispersion components underlying $\Phi_{\mathrm{BCD}}(\mathbf{X})$, differing only through the weights determined by $(p,k)$. Including UE$(s^2)$ therefore provides a principled benchmark that targets the same structural quantities as $\Phi_{\mathrm{BCD}}$, but with the weight profile implied by UE$(s^2)$.

To disentangle the roles of the two structural components, CE-V1 and CE-V2 optimize the marginal replication dispersion and pairwise concurrence dispersion separately. Specifically, CE-V1 minimizes $\mathcal{V}_1(\mathbf{X})$ and CE-V2 minimizes $\mathcal{V}_2(\mathbf{X})$, as defined in~\eqref{eq:V1V2}.
These variants isolate the contribution of each term and check that gains are not driven solely by optimizing one component.

We also include distance-based benchmarks that explicitly target space-filling behavior in Hamming space. CE-$\Phi_p$ (Morris–Mitchell) optimizes the classical Morris–Mitchell criterion,
\[
\Phi_p(\mathbf{X})=\Big(\sum_{i<j} d_{ij}^{-p}\Big)^{1/p}, \qquad p>0,
\]
where $d_{ij}$ denotes the Hamming distance between runs $i$ and $j$. Although $\Phi_p$ is typically motivated for continuous space-filling designs, it is equally well-defined for two-level designs and naturally measures separation of feasible treatment combinations under the TCARD constraint. Moreover, for binary designs the Hamming distance admits an explicit representation in terms of pairwise overlaps: if each row has exactly $k$ ones, then $d_{ij}=2(k-\lambda_{ij}^{\operatorname{(row)}})$, where $\lambda_{ij}^{\operatorname{(row)}}$ is the number of shared active factors between runs $i$ and $j$. Hence $\Phi_p(\mathbf{X})$ can be viewed as a function of the row-wise concurrence structure (equivalently, $\mathbf{XX}^T$). We emphasize, however, that this is a run-to-run criterion and is not determined solely by the column-wise summaries $(\mathbf{r},\boldsymbol{\Lambda})$. It captures a different aspect of design geometry than the replication/concurrence dispersion measures.

Similarly, CE-maximin and CE-minimax optimize two robustness-style distance criteria derived from $\{d_{ij}\}$. The maximin criterion maximizes $\min_{i<j} d_{ij}$, whereas the minimax criterion controls worst-case proximity by discouraging very small distances through a worst-case penalty. These criteria provide standard robustness-oriented baselines that explicitly target worst-case inter-run separation. Like Morris–Mitchell, these objectives can be expressed through the row-wise overlap counts $\lambda_{ij}^{\operatorname{(row)}}$, but they are not reducible to the column-wise summaries $(\mathbf{r},\boldsymbol{\Lambda})$. We therefore view them as complementary baselines that target worst-case inter-run separation rather than balance of factor usage.

Finally, we consider two greedy feasible-construction baselines: \textit{greedy-rep} and \textit{greedy-rep-pair}. Both sequentially build a feasible TCARD design using lightweight heuristics that target marginal replication balance, with \textit{greedy-rep-pair} additionally encouraging pairwise concurrence regularity. In our experiments, \textit{greedy-rep} is typically much worse than the other methods across most metrics. Including it in the main figures compresses the y-axis and obscures the differences among the CE variants, which are often clustered. For this reason, we omit \textit{greedy-rep} from the main-body plots. 
The \textit{greedy-rep-pair} is retained as a more competitive greedy baseline and as the default initialization for the CE runs. All methods enforce the same feasibility constraint, and within each replication we align random seeds across methods so that observed differences primarily reflect the choice of criterion and construction strategy rather than stochastic variation in initialization.

\subsection{Preference metrics}\label{sec:sim_metrics}
Each returned design is evaluated using a unified set of diagnostics. Using the earlier definitions of $\mathbf{r}$, $\boldsymbol{\Lambda}$, and the centered information matrix $\mathbf{C}$, we evaluate each returned design via (i) balance/dispersion summaries derived from $(\mathbf{r},\boldsymbol{\Lambda})$, (ii) eigenvalue-based summaries of $\mathbf{C}$ in the main comparison study. In the separate weight-tuning study, we focus on the downstream statistical performance under a common main-effect model, assessed through Monte Carlo experiments that apply the same estimation and variable-selection pipeline to every design. 
Importantly, all compared methods optimize model-free criteria that depend on $\mathbf{X}$ alone, so that no method is advantaged by a response-model-specific objective. 

To quantify aliasing induced by replication and concurrence irregularity, we report $B_1$--efficiency and $B_2$--efficiency, defined as efficiency--normalized versions of the imbalance measures in~\eqref{eq:B1B2_counts}. Let $B_1(\mathbf{X})$ and $B_2(\mathbf{X})$ be as defined in~\eqref{eq:B1B2_counts}. We define
$$
B_1-\operatorname{eff}(\mathbf{X})=\frac{B_1(\mathbf{X}_{\operatorname{full}})}{B_1(\mathbf{X})}, \quad B_2-\operatorname{eff}(\mathbf{X})=\frac{B_2(\mathbf{X}_{\operatorname{full}})}{B_2(\mathbf{X})},
$$
where $\mathbf{X}_{\operatorname{full}}$ is the \textit{full k-combination design} containing all $p\choose k$ admissible treatment combinations. Since $\mathbf{X}_{\operatorname{full}}$ generally has $n={p\choose k}$ and therefore does not match the run size of the candidate design, it is used only as an ideal reference: it represents the best aliasing/balance one can attain under the cardinality constraint when all combinations are available. With this normalization, values closer to $1$ indicate that design $\mathbf{X}$ approaches the best balance (least aliasing) achievable within the TCARD structure.


For information-matrix quality, we report a projection-based spectral diagnostic that remains well-defined and interpretable across sparse and tightly budgeted regimes. Specifically, for each design $\mathbf X$ and each projection order $q\in\{3,4,5\}$, we sample a large number of $q$-factor subsets $J\subset\{1,\dots,p\}$ and compute the log-determinant of the corresponding projected centered information matrix
\[
\mathbf C_J
=
\mathbf X_J^\top \mathbf X_J
-
\frac{1}{n}\mathbf r_J \mathbf r_J^\top,
\]
where $\mathbf X_J$ is the restriction of $\mathbf X$ to the columns indexed by $J$ and $\mathbf r_J$ is the corresponding replication vector. We then average $\log\det(\mathbf C_J)$ over the sampled $q$-subsets. Larger values indicate better-conditioned projections and more informative low-dimensional subspaces, which are especially relevant in the sparse-design settings studied here.

Downstream screening performance is evaluated only in the weight-tuning study for \(\Phi_{\mathrm{BCD}}\). We conduct Monte Carlo experiments under a common main-effects response model while retaining the projection-based setup. In each replication, we first sample a target subspace $J\subset\{1,\dots,p\}$ of size $q$, where $q$ is the projection order under consideration, and set the active main-effect set equal to this subspace $S=J$, $|S|=q$. Thus, the downstream task is explicitly aligned with recovery of a $q$-dimensional active subspace. Given $S$, the response is generated under the main-effects model
\begin{equation}\label{eq:main_truth_model_proj}
y_i
=
\beta_0
+
\sum_{j\in S}\beta_j x_{ij}
+
\varepsilon_i,
\qquad
\varepsilon_i\sim\mathcal N(0,\sigma^2),
\end{equation}
for $i=1,\dots,n$, where $\mathbf X\in\{0,1\}^{n\times p}$ is the TCARD design matrix. In our simulations, all active main effects are assigned a common positive coefficient, $\beta_j=\nu$ for $j\in S$, while $\beta_j=0$ for $j\notin S$. Thus, the scalar $\nu$ controls the signal amplitude. The intercept is set to $\beta_0=\mu$, so that the conditional mean is anchored at $\mu$ when all factors are at the baseline level $x_{ij}=0$.

All designs are evaluated under the same single-stage screening pipeline. We first center and column-standardize the main-effect design matrix and then fit an $\ell_1$-penalized least-squares model. For each candidate value of the lasso penalty $\lambda_{\mathrm{lasso}}$, coefficients below a small threshold are truncated to zero, and in the known-positive setting considered here any negative coefficients are also truncated to zero. Candidate models are then refit by ordinary least squares on the selected main effects, and the final model is chosen by BIC from the range of $\lambda_{\mathrm{lasso}}$ values retained after cross-validation. This yields a final selected active set $\widehat S$ for each realized response vector. From the final fitted model, we compute screening and prediction metrics for the main effects. Specifically, we report $
\mathrm{Precision}
=
\frac{|\widehat S\cap S|}{\max(|\widehat S|,1)}$, $\mathrm{Recall}
=
\frac{|\widehat S\cap S|}{|S|}$, and $
\mathrm{F1}
=
\frac{
2\,
\mathrm{Precision}\,
\mathrm{Recall}
}{
\mathrm{Precision}+\mathrm{Recall}
}$.
In addition, letting $\mu_i$ denote the true conditional mean from \eqref{eq:main_truth_model_proj} and $\widehat\mu_i$ the fitted conditional mean from the final refitted model, we compute the mean-function error $
\mathrm{MSE}_{\mu}
=
\frac{1}{n}\sum_{i=1}^n(\widehat\mu_i-\mu_i)^2$,
which measures how accurately the fitted model recovers the true signal. 
All reported downstream metrics are computed under the same data-generating mechanism and the same fitting pipeline for every competing design, so that performance differences can be attributed to the design itself rather than to differences in the estimation procedure.

\subsection{Simulation Results and Empirical Investigation}\label{sec:sim_results}

\subsubsection{Main comparison}\label{subsubsec:main_sim}

We begin with the main comparison under the default untuned specification of CE-$\Phi_{\mathrm{BCD}}$, with $w_1=w_2=1$, in order to assess its baseline performance relative to the competing model-free criteria across the three constraint regimes. Figure~\ref{fig:sim_core_combined} together with the projection-based summaries in Figure~\ref{fig:proj_logdet_mean_box} show that no single criterion uniformly dominates all diagnostics across all settings. Rather, the relative performance depends on which structural aspect of the TCARD design is emphasized. Against this backdrop, CE-$\Phi_{\mathrm{BCD}}$ is best interpreted as a balanced default criterion: it does not necessarily attain the optimum in every individual panel, but it remains consistently competitive across the aliasing-oriented metrics and the projection-based information summaries.

Under the strong-constraint regime ($k/p=0.1$), the cardinality restriction itself becomes the dominant structural bottleneck: with only about $10\%$ of factors active per run, the feasible TCARD space is substantially tighter than in the other regimes. As a result, most methods already achieve nearly saturated $B_1$-efficiency, indicating that marginal replication balance is comparatively easy to maintain once the row-sum constraint is so restrictive. By contrast, $B_2$-efficiency still retains some ability to discriminate among methods. Within this compressed setting, CE-$\Phi_{\mathrm{BCD}}$ remains consistently among the best-performing criteria: it preserves near-perfect $B_1$-efficiency, stays in the leading group for $B_2$-efficiency, and is also highly competitive in the projection-based log-determinant summaries across $q=3,4,5$. This pattern suggests that, even when the feasible design space is strongly restricted, the composite criterion continues to exploit the remaining degrees of freedom effectively. In particular, although the scope for improving marginal and pairwise summaries is necessarily limited under such extreme sparsity, CE-$\Phi_{\mathrm{BCD}}$ is still able to translate these limited adjustments into strong projected information quality without incurring visible deterioration in balance-based diagnostics. Distance-based criteria and single-component criteria can occasionally match or exceed CE-$\Phi_{\mathrm{BCD}}$ in individual panels, but their advantages are less uniform across the full collection of metrics. Thus, the main message in the strong-constraint regime is that CE-$\Phi_{\mathrm{BCD}}$ remains a particularly robust default: when only limited structural improvement is possible, it continues to deliver one of the strongest overall compromises between balance and projected information quality.

In the moderate-constraint regime ($k/p=0.25$), which is the most practically relevant setting among those considered here, the feasible design space remains rich enough for local search to produce meaningful improvements without becoming so unconstrained that most reasonable criteria perform similarly. As a result, the competing criteria behave more coherently, and many CE variants cluster near one in $B_1$-efficiency, indicating that marginal replication balance is relatively easy to achieve and remains stable across objectives. The more informative separation therefore shifts to $B_2$-efficiency and the projection-based information summaries, where CE-$\Phi_{\mathrm{BCD}}$ and CE-V2 are typically among the strongest methods. This is the regime in which the composite nature of $\Phi_{\mathrm{BCD}}$ is most clearly beneficial. Once marginal balance is largely controlled, further gains depend primarily on how effectively the criterion regularizes pairwise concurrence without sacrificing replication balance. CE-$\Phi_{\mathrm{BCD}}$ achieves this tradeoff well: it retains near-best $B_1$-efficiency while remaining in the leading group for $B_2$-efficiency and projected log-determinant across multiple $(p,n/p)$ combinations. More broadly, the moderate-constraint regime is where criterion choice appears to matter most in practice: marginal balance is no longer the main discriminator, and methods that better control concurrence regularity tend to deliver more favorable information geometry and, consequently, more stable downstream performance.

Under the weak-constraint regime ($k/p=0.5$), 
the feasible design space remains relatively rich, and many criteria already produce highly regular designs. In the aliasing-oriented summaries, CE-$\Phi_{\mathrm{BCD}}$ and CE-V1 remain essentially saturated in $B_1$-efficiency because both criteria directly penalize replication dispersion, whereas CE-$UE(s^2)$ exhibits noticeably weaker $B_1$-efficiency, especially under the tighter budget($n/p=1.5$), while achieving the strongest $B_2$-efficiency among the CE-based methods. This pattern is consistent with Theorem~\ref{thm:UE_equiv}: when $k/p=0.5$, the coefficient $p-4k+3=3-p$ is negative for all settings considered here, so minimizing $UE(s^2)$ places relatively greater emphasis on pairwise concurrence regularity and can reduce concurrence dispersion at the expense of increased variability in $\mathbf r$.
The more relevant question in this regime is how this structural tradeoff carries over to the projection-based information summaries. Empirically, CE-$UE(s^2)$ is often among the leading methods in the projected log-determinant panels, particularly when $n/p=3$, whereas its advantage becomes less consistent when $n/p=1.5$. 
A useful interpretation comes from the identity
$\mathbf{C}=\mathbf{X}^\top\mathbf{X}-\frac{1}{n}\mathbf{r}\mathbf{r}^\top$.
Here, \(\mathbf{X}^\top\mathbf{X}\) reflects the underlying concurrence structure that \(UE(s^2)\) tends to regularize, whereas the correction term \(\frac{1}{n}\mathbf{r}\mathbf{r}^\top\) captures the effect of replication imbalance. Thus, improving concurrence regularity does not automatically guarantee better projected information quality, because some of that gain may be offset if the replication counts become less even. This tradeoff is more favorable when the run budget is relatively generous. When \(n/p=3\), the benefit of improved concurrence regularity is more likely to remain visible in the projected information matrices, and CE-\(UE(s^2)\) is therefore often among the leading methods in the projected log-determinant panels. When \(n/p=1.5\), however, replication imbalance becomes more consequential relative to the available information, so the gains from better concurrence regularity can be partially offset. This explains why the projected log-determinant advantage of CE-\(UE(s^2)\) becomes less consistent under the tighter budget.

The competitive projected log-determinant of CE-$\Phi_p$ in this regime is also understandable from its objective. The criterion $\Phi_p$ promotes separation among runs in Hamming space. Under the weak-constraint regime, the feasible set of $k$-subsets is rich, so the coordinate-exchange search finds runs that are well-separated 
across many directions. This advantage is therefore a projection-geometry effect driven by run separation, not by direct regularization of replication or pairwise concurrence. CE-$\Phi_p$ does not explicitly target the column-wise aliasing structure, and its advantage in this regime is not expected to be uniform across the other regimes. Note that this improvement comes at a computational cost. Evaluating $\Phi_p$ requires all $\binom{n}{2}$ pairwise Hamming distances and updating $O(n)$ terms after each swap, compared with the $O(k)$ incremental update of $\Phi_{\mathrm{BCD}}$. In practice, CE-$\Phi_p$ requires approximately three to five times the computation time of CE-$\Phi_{\mathrm{BCD}}$ for the same number of restarts.

Taken together, no single criterion uniformly dominates all diagnostics across regimes. This is expected because the compared objectives emphasize different aspects of TCARD geometry. Accordingly, we interpret CE-\(\Phi_{\mathrm{BCD}}\) primarily as a balanced default rather than as a universally optimal rule for every metric. Its main advantage is that it jointly regularizes the two column-wise dispersion components while remaining model-free, which leads to consistently competitive projection and spectral performance and avoids the one-sided behavior seen in criteria that optimize only one component.

\begin{sidewaysfigure}[htp]
\centering
\includegraphics[width=\linewidth]{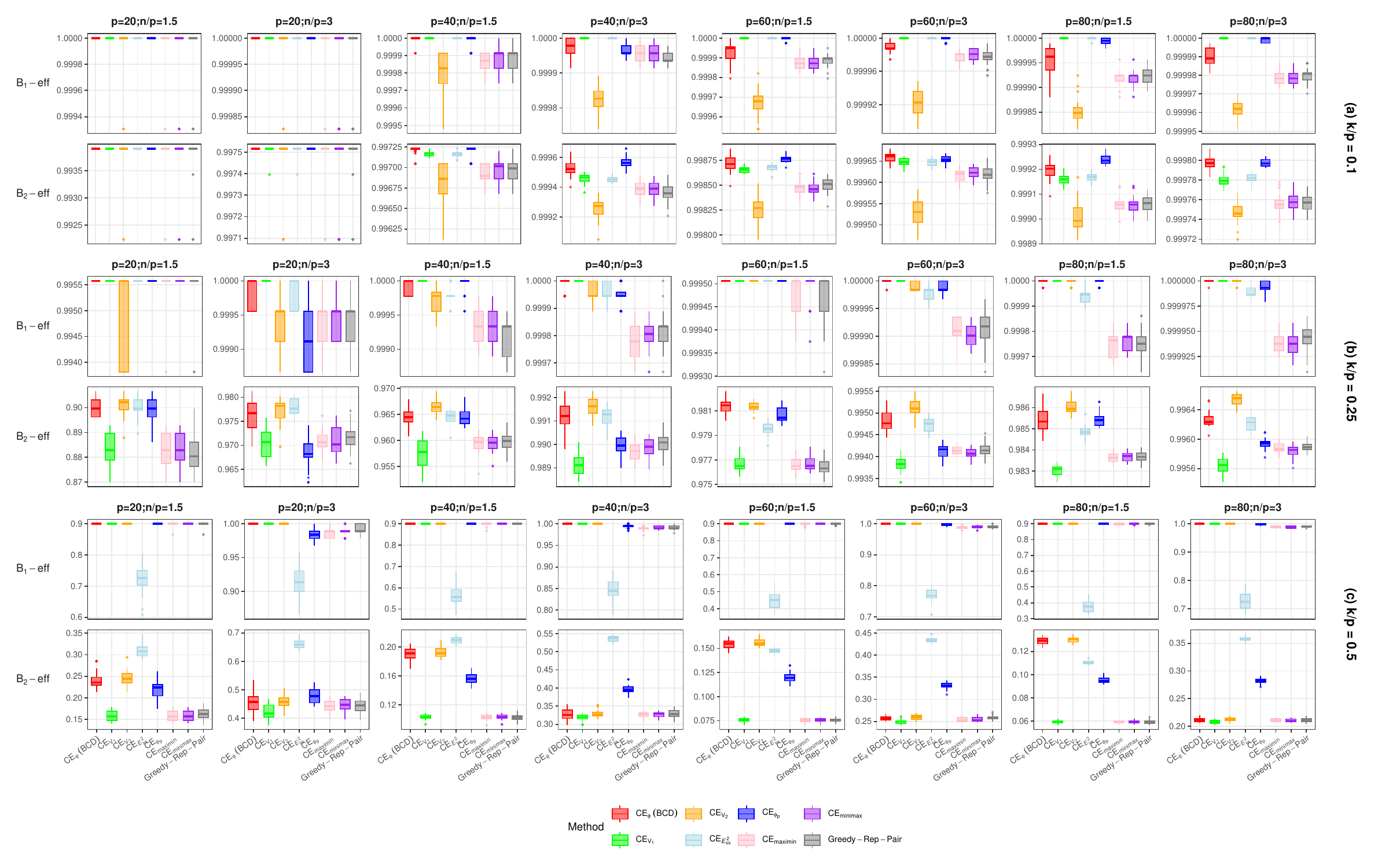}
\caption{Aliasing metrics under the main comparison: $B_1$-efficiency (replication balance) and $B_2$-efficiency (concurrence regularity) with $w_1=w_2=1$.}
\label{fig:sim_core_combined}
\end{sidewaysfigure}

\begin{sidewaysfigure}[htp]
\centering
\includegraphics[width=\linewidth]{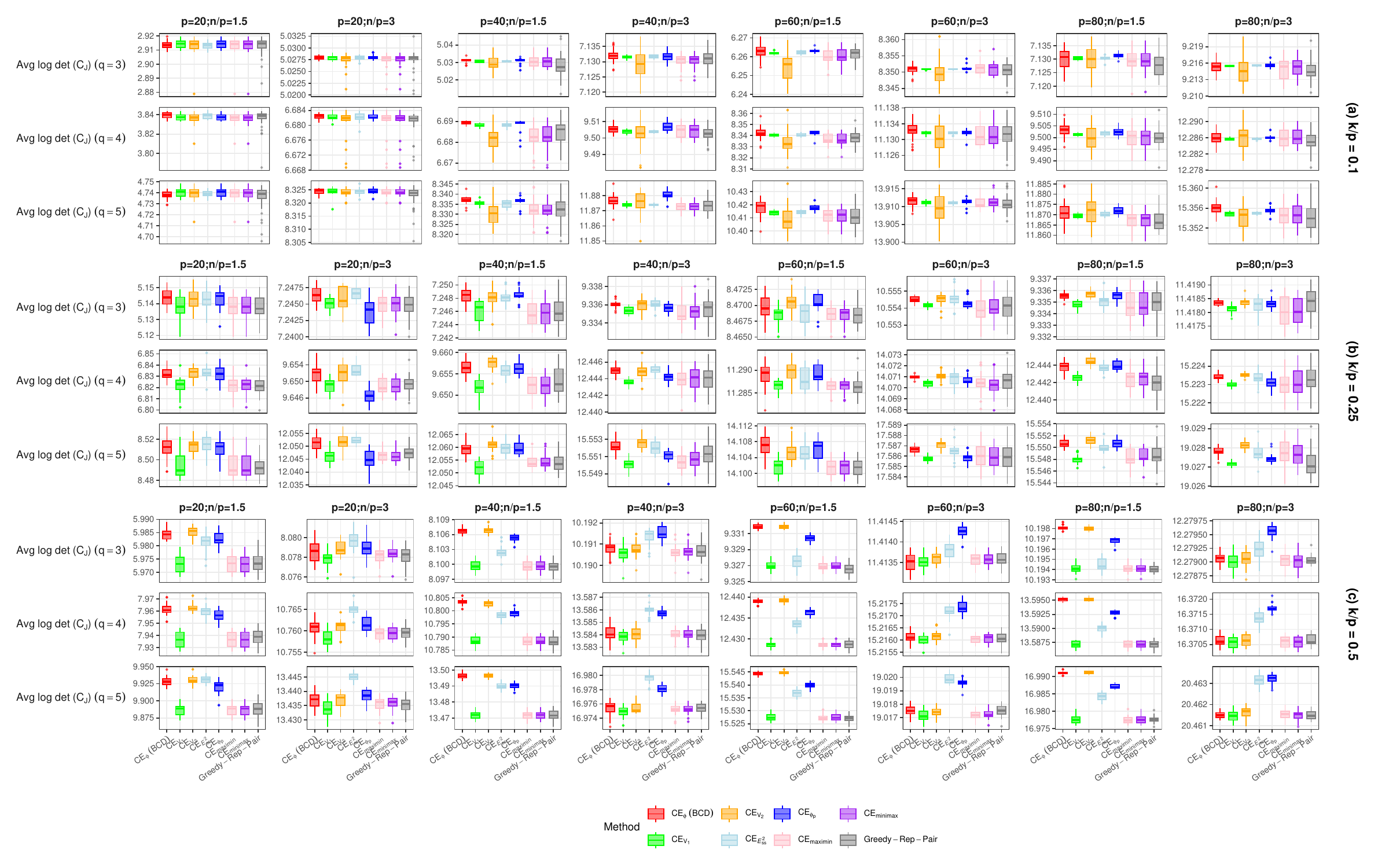}
\caption{Projection-based information quality (average of ${\log\det(\mathbf{C}_J)}$, $q\in\{3,4,5\}$) for all compared methods with $w_1=w_2=1$.}
\label{fig:proj_logdet_mean_box}
\end{sidewaysfigure}

\subsubsection{Weight sensitivity: oracle $q$-aware tuning}\label{subsubsec:tune_w1}

We next investigate whether tuning the weight $w_1$ in the composite criterion can improve downstream performance when oracle knowledge of the target projection order $q$ is available. Since the moderate-constraint regime with small budget ($k/p=0.25, n/p=1.5$) is the most practically representative setting considered here and also provides the clearest separation among tuned designs, we present its downstream F1, precision, recall, and mean-function error results in the main text. Results for the other regimes and budgets are reported in Appendix Section~\ref{app:add_num_res}. 

\begin{sidewaysfigure}[htp]
\centering
\includegraphics[width=\linewidth]{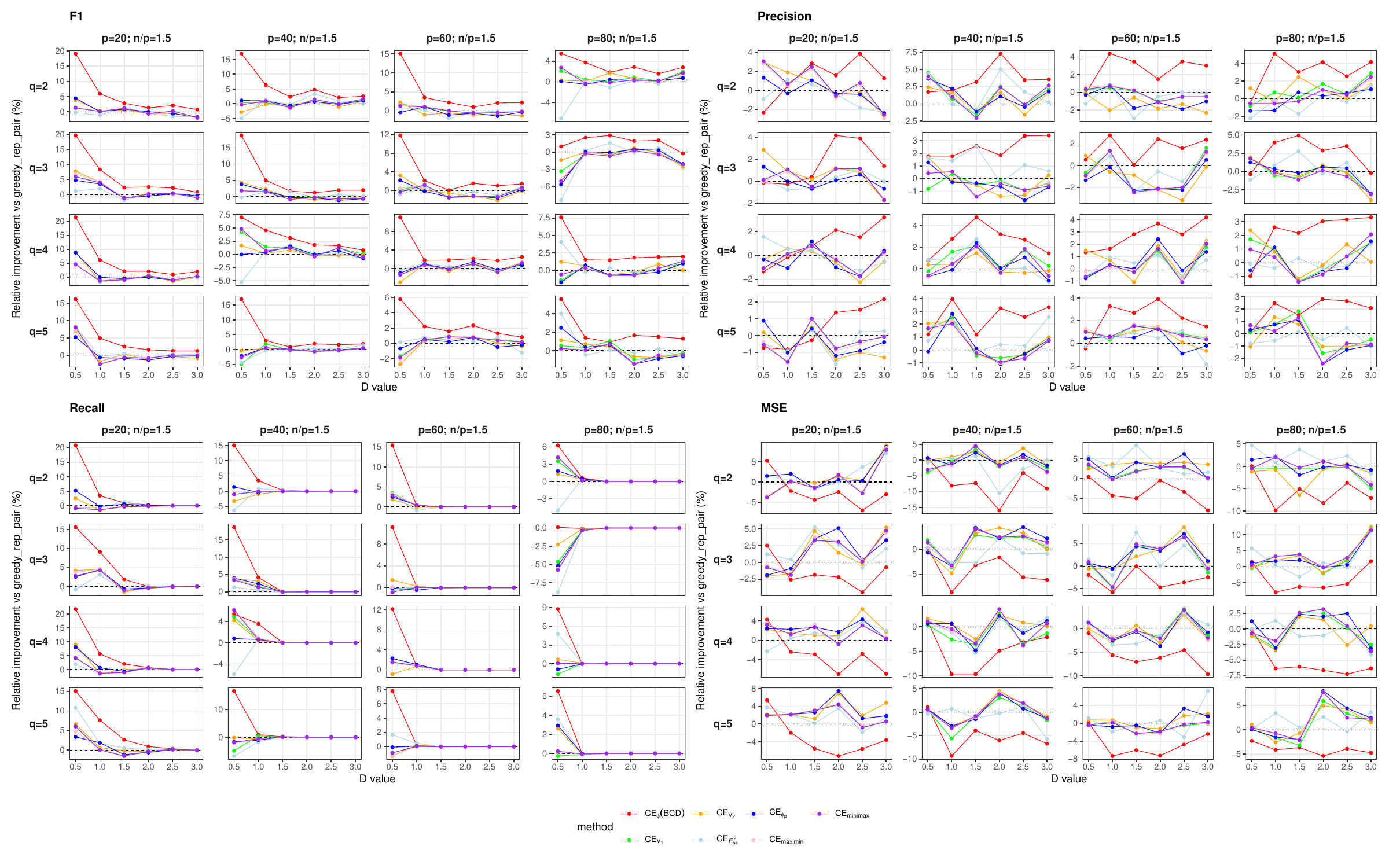}
\caption{Relative improvement over the greedy-rep-pair baseline in F1, precision, recall, and MSE across experimental settings ($k/p=0.25, n/p=1.5$)}
\label{fig:w1_lineplot}
\end{sidewaysfigure}

Using F1-score as a representative downstream objective, oracle \(q\)-aware tuning yields the clearest gains for CE-\(\Phi_{\mathrm{BCD}}\), with the improvement concentrated in the low-\(D\) region where design quality has the strongest effect on screening performance. These gains are most visible at smaller projection orders, and they generally attenuate as \(D\) increases, so that the separation among methods becomes less pronounced in the high-\(D\) regime. The precision and recall panels clarify the mechanism behind this pattern. When \(D\) is small, the F1 improvement is driven primarily by recall, indicating substantially better recovery of the active set. As \(D\) becomes larger, the recall advantage shrinks markedly, and the remaining F1 gain is explained more by modest but persistent improvements in precision. Thus, the source of the F1 advantage shifts from sensitivity in the low-\(D\) regime to selectivity in the high-\(D\) regime. The MSE results show a more compressed pattern, with smaller relative changes than those observed for F1, but CE-\(\Phi_{\mathrm{BCD}}\) still delivers mostly negative relative improvements across scenarios, indicating lower estimation error than the baseline. Taken together, these results show that oracle tuning of CE-\(\Phi_{\mathrm{BCD}}\) improves variable-selection quality without sacrificing predictive accuracy, whereas the remaining criteria stay much closer to zero across all four metrics and exhibit less systematic benefit.

Overall, these results show that the weight $w_1$ is a meaningful calibration knob in the composite criterion. When the target projection scale is known, tuning $w_1$ can lead to tangible downstream gains. We emphasize, however, that F1 is used here only as one concrete example of a downstream task. The same tuning strategy can be carried out with other objectives, such as prediction error or alternative screening metrics, depending on the scientific goal. In this sense, the composite criterion provides a flexible interface through which the design can be calibrated to the needs of the downstream analysis.

\section{Case Study: Prompt-Component Screening for a Large Language Model}
\label{sec:case_study}

Large language models are highly sensitive to how tasks are described to them, and a large body of work has developed natural-language prompting techniques that can substantially change a model's behavior without any parameter updates \citep{brown2020language}. Recent surveys catalog dozens of such techniques spanning reasoning scaffolds, role setting, self-verification, in-context examples, and output formatting \citep{sahoo2024systematic}. The marginal effect of any individual technique, however, is task- and model-dependent: chain-of-thought prompting has been shown to decrease accuracy on a range of tasks where explicit verbal reasoning interferes with the model's default behavior \citep{liu2024mindyourstep}, and adding a persona to the system prompt does not reliably improve performance on objective tasks \citep{zheng2024helpful}. Combinations of techniques are also difficult to reason about a priori: prompt components can interact non-additively, so that adding one technique may enhance, attenuate, or even cancel the effect of another \citep{khojah2025impact}. Choosing an effective prompt is therefore a screening problem: out of a larger set of candidates, we want to identify the small subset that actually improves performance on the target task, using as few evaluation runs as possible.

A cardinality constraint arises naturally in this setting. Context windows are finite, evaluation budgets are limited, and prompts loaded with too many simultaneous instructions tend to confuse the model rather than help it. We therefore restrict attention to prompts that activate exactly $k = 3$ components out of $p = 15$ candidates, yielding a small instruction budget motivated by recent evidence that LLMs’ ability to follow all instructions deteriorates as the number of simultaneous directives increases \citep{harada2025curse}. This section applies the TCARD framework to this concrete instance, with the goal of identifying which components help, which hurt, and which are inactive, on a total budget of 21 LLM evaluation runs. The case study is a worked example rather than a benchmark, but it gives us a real response surface on which to compare design-construction methods under exactly the conditions our simulations target.

We use GSM8K \citep{cobbe2021training} as the downstream task. GSM8K contains 8,500 grade-school math word problems written by human problem writers, split into 7,500 training problems and 1,000 test problems. Each problem takes between 2 and 8 steps to solve, and the solutions mainly use basic arithmetic. The task is well-matched to our goal because it is hard enough that prompt choice has a measurable effect on accuracy, but structured enough that the accuracy signal is stable across repeated evaluation.

\subsection{Experimental setup}
\label{sec:case_study_setup}

We screen $p = 15$ candidate prompt components, each either present (coded $1$) or absent (coded $0$) in a given prompt. The components are listed in Table~\ref{tab:components_list} and cover a range of common prompt-engineering techniques: role setting, reasoning strategy, in-context examples, self-verification, and output formatting. Each component is a short instruction that can be added to the prompt on its own, and their order in the prompt is fixed, so the only thing that changes between runs is which components are turned on.

\begin{table*}[h]
\centering
\caption{The 15 candidate prompt components used in the case study, listed with the exact instruction text appended to each prompt. }
\label{tab:components_list}
\small
\begin{tabular}{lp{10cm}p{4cm}}
\toprule
 Label & Instruction text & Reference \\
\midrule
 StepByStep    & ``Think through this step by step.'' & \citet{kojima2022large} \\
 ExpertRole    & ``You are an expert mathematician.'' & \citet{zheng2024helpful} \\
 OneExample    & One worked example prepended to the prompt (see below). & \citet{brown2020language} \\
 ReadCarefully & ``Read the question carefully and make sure you use exactly the numbers given in the problem.'' & \citet{deng2023rephrase} \\
 Verify        & ``After you finish, verify your answer is correct.'' & \citet{weng2023verify} \\
 Restate       & ``Begin by restating the problem in your own words.'' & \citet{yugeswardeenoo2024question} \\
 Units         & ``Pay attention to units and convert them if necessary.'' & \citet{park2022language} \\
 Estimate      & ``Before solving, estimate what a reasonable answer would be.'' & \citet{ma2024large} \\
 DrawTable     & ``Organize the given information in a table or list before solving.'' & \citet{wang2024chaintable} \\
 ShowArith     & ``Write out each arithmetic calculation explicitly; do not skip steps.'' & \citet{nye2021scratchpad} \\
 Algebra       & ``Define variables and set up equations before computing.'' & \citet{chen2022pot} \\
 SubProblems   & ``If the problem has multiple parts, solve each part separately.'' & \citet{zhou2023leasttomost} \\
 Concise       & ``Keep your reasoning concise and avoid unnecessary text.'' & \citet{xu2025chainofdraft} \\
 DoubleCheck   & ``Double-check each intermediate calculation before proceeding.'' & \citet{madaan2023selfrefine} \\
 Summary       & ``End with a one-sentence summary stating your final answer.'' & \citet{zhou2023thot} \\
\bottomrule
\end{tabular}

\vspace{-1ex}
\begin{quote}\small
\textit{Here is a worked example:} \\
\textit{Question:} A farmer has 3 fields. Each field has 12 rows of corn, and each row has 8 stalks. How many stalks of corn does the farmer have in total? \\
\textit{Solution:} Each field has $12 \times 8 = 96$ stalks. The farmer has 3 fields, so the total is $3 \times 96 = 288$ stalks. The final answer is 288.
\end{quote}
\end{table*}

Under the TCARD requirement $X \mathbf{1}_p = k \mathbf{1}_n$, every prompt uses exactly $k = 3$ components out of $p = 15$, giving $\binom{15}{3} = 455$ possible configurations. Our evaluation budget allows $n = 21$ runs per design, covering about $4.6\%$ of the configuration space. This places the case study in the moderate-constraint regime ($k/p = 0.2$, $n/p \approx 1.4$) studied by simulation in Section~\ref{sec:simulation}. The response model is the open-weight Llama~3.1~8B instruction-tuned model \citep{dubey2024llama}, queried at temperature $0$ through a local inference server for reproducibility. For each run $t$ of a design, we construct the prompt by activating the components indicated by row $\mathbf{X}_t$, evaluate the model on a fixed sample of $M = 200$ problems from the GSM8K training split, and record the percent correct as the response $\mathbf{y}_t \in [0, 100]$. The same 200-problem sample is used across all runs and all designs, so differences in $\mathbf{y}$reflect only the prompt configuration.

\subsection{Pilot calibration of the weight parameter}
\label{sec:case_study_pilot}

The $w_1$ tuning step in Algorithm~\ref{alg:tune_w1_phi_meanmean} needs a screening plan with plausible values for the effect size $D$ and the noise level $\sigma$. Rather than guess at these, we run a small one-at-a-time pilot before building any designs. The pilot has one baseline run (no components active) and 15 singleton runs (one component on at a time), each evaluated on $M_0 = 50$ problems. This costs $16 \times 50 = 800$ LLM calls, which is small compared to the main experiment, and it gives us a quick estimate of each component's marginal effect.

From the pilot experiment we obtain the absolute pilot effects $\widehat D$ across the 15 components. The actual number of active components in the fitted response is not known before the experiment is run, so we build three tuned TCARD designs with different $w_q$ at $q \in \{3, 4, 5\}$, covering the range of plausible active counts under the $k = 3$ constraint. For each $w_q$, we sweep $w_1$ over a logarithmic grid, build a TCARD design by minimizing $\Phi$ at each $w_1$, and pick the value $w_1^\star$ whose simulated screening $F_1$ is largest standardized across the $(q, \hat{D})$-grid. This gives three committed designs: CE-$\Phi_{\mathrm{BCD}}(w_{q=3})$, CE-$\Phi_{\mathrm{BCD}}(w_{q=4})$, and CE-$\Phi_{\mathrm{BCD}}(w_{q=5})$. Comparing their outputs on the same experiment acts as a sensitivity analysis across plausible priors for $q$, characterizing whether the framework's strong-effect identification is robust or stable in the region around $q=k$. As a benchmark for the contribution of tuning itself, we also build CE-$\Phi_{\mathrm{BCD}}$(default), which uses $w_1=w_2=1$ across $q$. The pilot uses no data from the main screening experiment, and all three tuned designs are finalized before any further LLM queries.

We then fit all competing TCARD methods from Section~\ref{sec:sim_methods} on the present instance $(n, p, k) = (21, 15, 3)$ and run the main experiment on each: the three tuned CE-$\Phi_{\mathrm{BCD}}$ variants, CE-$\Phi_{\mathrm{BCD}}$(default), CE-$V_1$, CE-$V_2$, CE-UE($s^2$), CE-$\Phi_p$, greedy-rep-pair, and Random Design. CE-$V_1$ gave degenerate screening output with $R^2 = 0$ and no components selected along the cross-validated regularization path, and hence it is left out of the comparison below.

\subsection{Modeling and analysis of each design's output}
\label{sec:case_study_screening}
 
\begin{table*}[t]
\centering
\caption{Design diagnostics and screening-quality summaries for the prompt-component case study ($n=21$, $p=15$, $k=3$). $\mathcal{V}_1$ and $\mathcal{V}_2$ are replication imbalance and pairwise concurrence dispersion (lower is better). $R^2_{\mathrm{refit}}$ is the $R^2$ of OLS refit on the Lasso-selected support; $R^2_{\mathrm{full}}$ is the $R^2$ of the full no-intercept OLS on all $p=15$ main effects. `SE ratio' is the largest-to-smallest standard-error ratio across components, computed over Lasso-selected components for the refit and over all $p=15$ components for the full OLS; the ideal value is 1. `---' indicates fewer than two components with positive SE (ratio undefined). `\#Sel' is the number of Lasso-selected components, `\#$p<0.05$' is the number of components significant in the full OLS at $p<0.05$, and `\#Overlap' is the number of components flagged by both.}
\label{tab:design_and_screening}
\begin{tabular}{lrrrrrrrrr}
\toprule
 & & & \multicolumn{3}{c}{Lasso + OLS refit} & \multicolumn{3}{c}{Full OLS} &  \\
\cmidrule(lr){4-6}\cmidrule(lr){7-9}
Design & $\mathcal{V}_1$ & $\mathcal{V}_2$ & $R^2_{\mathrm{refit}}$ & SE ratio & \#Sel &$R^2_{\mathrm{full}}$ & SE ratio & \#$p<0.05$ & \#Overlap \\
\midrule
CE-$\Phi_{\mathrm{BCD}}$($w_{q=3}$) & 0 & 34 & 0.91 & 1.15 & 8 & 0.95 & 1.26 & 3 & 3 \\
CE-$\Phi_{\mathrm{BCD}}$($w_{q=4}$) & 0 & 24 & 0.86 & 1.05 & 5 & 0.94 & 1.16 & 4 & 4 \\
CE-$\Phi_{\mathrm{BCD}}$($w_{q=5}$) & 0 & 46 & 0.61 & 1.16 & 4 & 0.84 & 1.37 & 0 & 0 \\
CE-$\Phi_{\mathrm{BCD}}$(default) & 0 & 28 & 0.18 & --- & 1 & 0.79 & 1.21 & 1 & 1 \\
CE-UE($s^2$) & 2 & 26 & 0.41 & 1.08 & 2 & 0.87 & 1.18 & 2 & 2 \\
CE-$V_2$ & 4 & 28 & 0.28 & --- & 1 & 0.74 & 1.33 & 1 & 1 \\
CE-$\Phi_p$ & 2 & 32 & 0.33 & 1.00 & 2 & 0.84 & 1.19 & 2 & 2 \\
Greedy-rep-pair & 2 & 54 & 0.66 & 1.06 & 4 & 0.79 & 7.63 & 0 & 0 \\
Random Design & 40 & 138 & 0.52 & 1.16 & 3 & 0.84 & 2.87 & 1 & 1 \\
\bottomrule
\end{tabular}
\end{table*}

\begin{figure}[htb]
\centering
\includegraphics[width=0.9\linewidth]{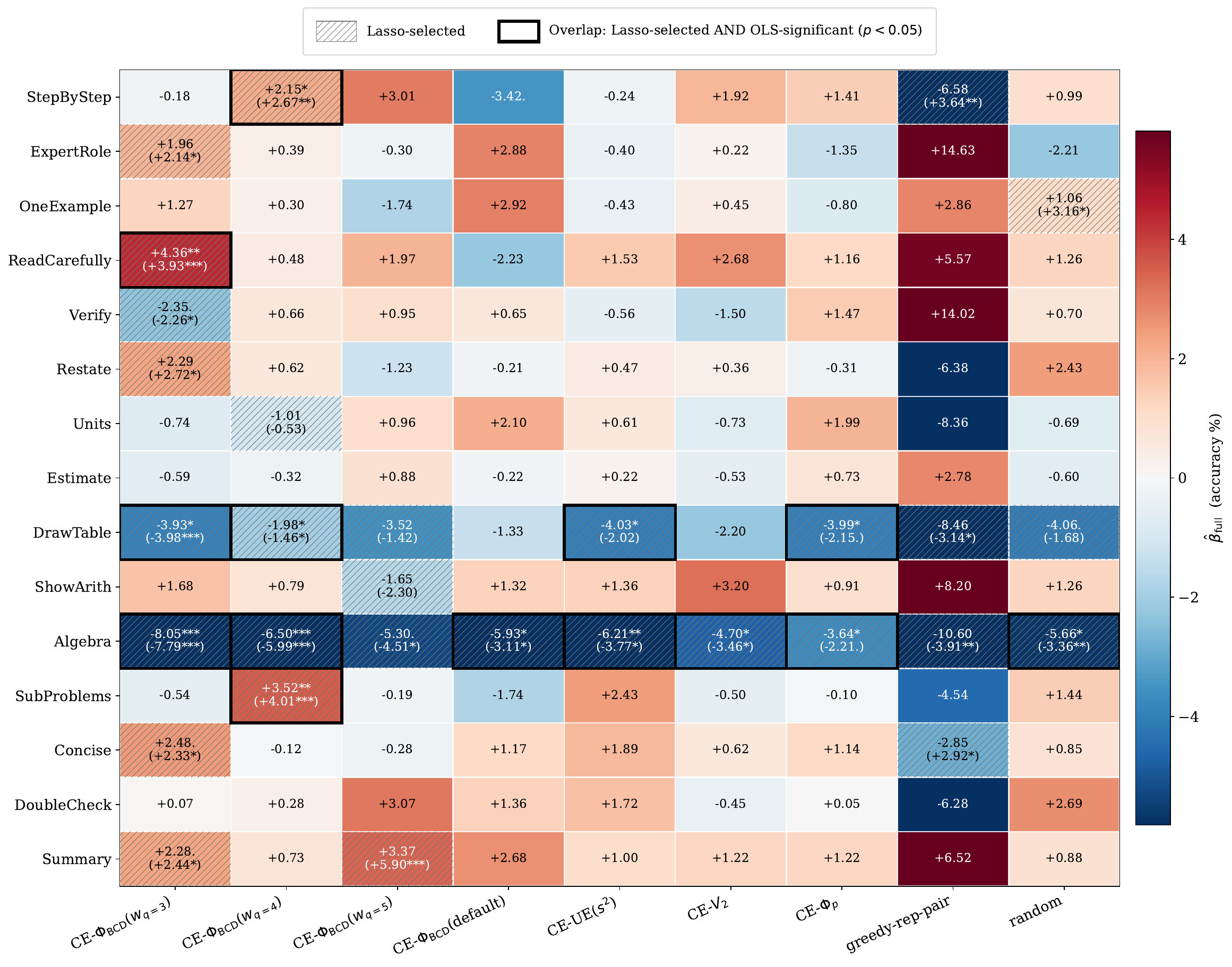}
\caption{Design-quality diagnostic comparing two models of the same response. Each cell shows the full no-intercept OLS coefficient $\hat \beta_{\mathrm{full}}$ on the first line (with full-OLS significance stars) and, where the component was selected by cross-validated Lasso, the OLS-refit coefficient $\hat \beta_{\mathrm{refit}}$ in parentheses on the second line (with refit significance stars). Cell background encodes $\hat \beta_{\mathrm{full}}$ on a diverging scale. Diagonal hatching marks Lasso-selected cells. A thick black border marks components that are both Lasso-selected and significant in the full OLS at $p<0.05$. Significance codes: ${}^{***}\,p<0.001$, ${}^{**}\,p<0.01$, ${}^{*}\,p<0.05$, ${}^{.}\,p<0.10$.}
\label{fig:overlap_heatmap}
\end{figure}

Table~\ref{tab:design_and_screening} reports replication imbalance $\mathcal{V}_1$ and pairwise concurrence dispersion $\mathcal{V}_2$, along with screening-quality summaries from two parallel analyses. 
Both use the centered response $\mathbf{y}_c = \mathbf{y} - \bar y \mathbf{1}_n$ and the uncentered design matrix $\mathbf{X}$, and both are fit without an intercept column: under the TCARD constraint, the all-ones vector lies in the column space of $\mathbf{X}$, so $\mathbf{y}_c = \mathbf{X}\boldsymbol{\beta} + \varepsilon$ is identified, and the resulting coefficients sum to zero under balanced replication. \emph{Model~A} is a 5-fold cross-validated Lasso (with standardized columns and no intercept) followed by an OLS refit on the Lasso-selected support. \emph{Model~B} is the no-intercept OLS on all $p = 15$ components, with $n - p = 10$ residual degrees of freedom, which gives classical $t$-tests for every component and does not suffer from selection bias. We use Model~A for parsimony and for downstream validation, and Model~B as a robust check on significance. The `\#Overlap' column in Table~\ref{tab:design_and_screening} counts the components flagged by both analyses.

A good screening design has to satisfy three requirements at once. \emph{Identifiability} requires that every component be tested approximately the same number of times, so that no effect is systematically under-sampled, which is what $\mathcal{V}_1$ measures. \emph{Separability} requires that every pair of components co-occur in a similar number of runs, so that their effects can be estimated free of collinearity, which is what $\mathcal{V}_2$ measures. \emph{Uniform precision} requires that all coefficient estimates carry comparable standard errors, so that inference on a small effect is not systematically harder than inference on a large one, which is what the SE ratio measures. These three requirements are not independent, since a design that optimizes one at the cost of another will fail downstream on either fit quality (low $R^2$), selection quality (few cross-verified components), or both.

The three tuned CE-$\Phi_{\mathrm{BCD}}$ variants satisfy all three requirements jointly. Each achieves perfect replication balance ($\mathcal{V}_1 = 0$), concurrence dispersion in the range $\mathcal{V}_2 \in [24, 46]$, full-OLS SE ratios between $1.16$ and $1.37$ (within a factor of $1.4$ of the theoretical ideal of $1$), and refit SE ratios as low as $1.05$. These structural properties translate directly into screening performance: the three variants reach $R^2_{\mathrm{refit}} = 0.61$--$0.91$ and $R^2_{\mathrm{full}} = 0.84$--$0.95$, and each recovers between three and four components that are flagged by both analyses. All three independently identify Algebra and DrawTable as strong negative effects. This is an empirically interesting finding given that algebraic formulation \citep{chen2022pot} and tabular reasoning \citep{wang2024chaintable} are both documented as helpful prompting strategies in the literature, but consistent with recent evidence that structured reasoning techniques can reduce performance on tasks where the additional structure interferes with the model's default behavior \citep{liu2024mindyourstep}. On the positive side, CE-$\Phi_{\mathrm{BCD}}$($w_{q=3}$) and CE-$\Phi_{\mathrm{BCD}}$($w_{q=4}$) jointly flag SubProblems and StepByStep as helpful candidates, both of which are supported as effective prompting strategies for math reasoning \citep{zhou2023leasttomost, kojima2022large}. 

The failure modes of the competing methods each show what happens when one of the three requirements is missed. Random is disqualified on identifiability and separability ($\mathcal{V}_1 = 40$, $\mathcal{V}_2 = 138$): its full-OLS SE ratio of $2.87$ prevents meaningful inference, and only Algebra survives both analyses. Greedy-rep-pair corrects replication balance to $\mathcal{V}_1 = 2$ but ignores pairwise structure ($\mathcal{V}_2 = 54$), and the consequence is an SE ratio of $7.63$: its least-precisely-estimated coefficient has a standard error over seven times larger than its most-precisely-estimated one, and zero components reach significance in the full OLS despite four Lasso selections. Both baselines illustrate the same failure pattern where a Lasso fit that does not survive unregularized testing, leaving the practitioner with a screening result they cannot defend.

The more substantive comparison is against CE-$V_2$ and CE-UE($s^2$). Both attain reasonable replication balance and lie in the low-$\mathcal{V}_2$ cluster, yet they recover far less structure than the proposed method. The CE-$V_2$ flags only Algebra with $R^2_{\mathrm{refit}} = 0.28$ and one cross-verified component. The CE-UE($s^2$) flags Algebra and DrawTable with $R^2_{\mathrm{refit}} = 0.41$ and two cross-verified components. The three tuned CE-$\Phi_{\mathrm{BCD}}$ variants all substantially outperform both on every fit metric while recovering two to four times as many cross-verified components. This is because CE-$V_2$ controls concurrence dispersion in isolation, CE-UE($s^2$) controls a single-matrix summary that does not separately penalize replication imbalance, and neither captures enough of the projected information matrix to produce a reliable screening result. The proposed criterion controls both components jointly, and the empirical advantage follows.

Internal to the CE-$\Phi_{\mathrm{BCD}}$ family, CE-$\Phi_{\mathrm{BCD}}$(default) provides an ablation of the tuning step. The default variant is structurally balanced by any conventional standard, yet produces $R^2_{\mathrm{refit}} = 0.18$ and only one Lasso-selected component. Structural balance without pilot-informed weight tuning is not enough: the weight configuration is what aligns the design's projected information matrix with the scale of effects the downstream pipeline is looking for.

\subsection{Test-split validation}
\label{sec:case_study_validation}


\begin{figure}[h]
\centering
\includegraphics[width=\linewidth]{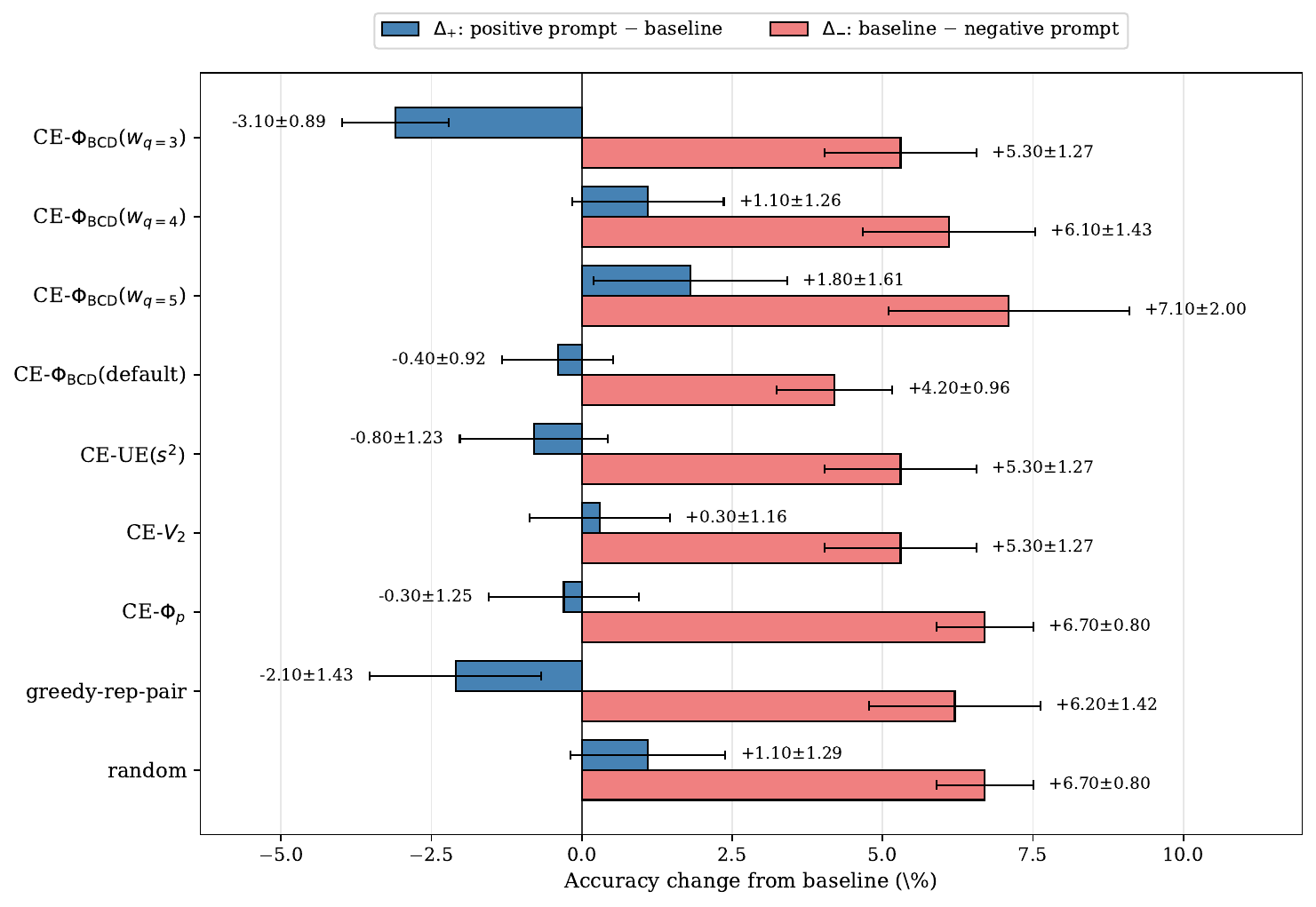}
\caption{Movement from baseline accuracy for each design, averaged across $S = 10$ validation seeds (mean $\pm$ SE). $\Delta_{+}$ (blue) is how much the positive prompt raises accuracy above baseline; $\Delta_{-}$ (red) is how much the negative prompt lowers it below baseline. Both positive $\Rightarrow$ correct screening in both directions.}
\label{fig:validation_bargap}
\end{figure}

A remaining concern is that the fit statistics in Table~\ref{tab:design_and_screening} are computed on the same response vector used to fit the screening models, and may reward designs that produce ungeneralizable effects. We address this with a train-test validation protocol on a disjoint set of GSM8K problems sampled from the test split. For each design, we construct two prompt configurations of cardinality $k=3$ from its full-OLS coefficients: the \emph{positive} configuration activates the three components with the largest positive $\hat\beta_{\mathrm{full}}$, and the \emph{negative} configuration activates the three with the most negative $\hat\beta_{\mathrm{full}}$. Both configurations, along with the same baseline (no components active, mean accuracy $\approx 83\%$), are evaluated on $M_{\mathrm{val}} = 100$ GSM8K problems with the same model and decoding parameters as the screening experiment. We repeat the evaluation across $S = 10$ independent seeds. The validation experiment tests two predictions. A design whose positive coefficients track truly helpful components should produce $\Delta_{+} = (\text{positive prompt accuracy}) - (\text{baseline accuracy}) > 0$. A design whose negative coefficients track truly harmful components should produce $\Delta_{-} = (\text{baseline accuracy}) - (\text{negative prompt accuracy}) > 0$. 

Figure~\ref{fig:validation_bargap} shows a clear asymmetry between the two sides of the validation test. The negative side is easy: every design except CE-$\Phi_{\mathrm{BCD}}$($w_{q=3}$) gets it essentially right. Eight of the nine designs achieve $\Delta_{-}$ between $+4.2\%$ and $+7.1\%$, confirming that their negative-side predictions generalize. This is the consequence of a shared structural fact, namely that Algebra and DrawTable (identified as harmful in Section~\ref{sec:case_study_screening}) produce large, robust negative effects, and nearly every design in the comparison places both of them in its negative top-3 by coefficient ranking. Even CE-$\Phi_{\mathrm{BCD}}$(default), which recovered only Algebra under Lasso selection, assigns DrawTable to its negative top-3 by $\hat\beta_{\mathrm{full}}$ and benefits accordingly ($\Delta_{-} = +4.2\%$). On this instance, identifying harmful components is not what distinguishes a good screening design from a bad one.

However, the positive side is hard, and only the tuned CE-$\Phi_{\mathrm{BCD}}$ variants with well-matched $q$ succeed. Only three designs achieve meaningfully positive $\Delta_{+}$: $w_{q=5}$ at $+1.80\% \pm 1.61\%$, $w_{q=4}$ at $+1.10\% \pm 1.26\%$, and Random Design at $+1.10\% \pm 1.29\%$. Every other design sits at or below zero, where the three components identified as most beneficial actively do not improve or even hurt test-split accuracy. This asymmetry between large negative effects and small positive ones is consistent with a near-ceiling regime: Llama 3.1 8B already reaches $\approx 83\%$ on GSM8K under a null prompt \citep{dubey2024llama}, leaving limited headroom for individual prompting techniques to produce large positive gains, while poorly-matched techniques can still cause substantial drops \citep{cheng2025revisiting}. The statistical evidence required to identify small positive effects on a 21-run budget is therefore stronger than the evidence required to identify large negative ones, and most designs in the comparison fail to meet that bar. Among the designs that do, CE-$\Phi_{\mathrm{BCD}}$($w_{q=4}$) and CE-$\Phi_{\mathrm{BCD}}$($w_{q=5}$) are the only ones backed by screening-stage evidence. The proposed criterion therefore holds a narrow but defensible advantage on the harder side of the validation test.

Moreover, Random Design's $\Delta_{+} = +1.10\%$ matches $w_{q=4}$, and its $\Delta_{-} = +6.7\%$ is among the largest in the figure. Against the screening evidence of Table~\ref{tab:design_and_screening}, this performance is not reproducible. Random Design's Lasso fit recovered only a single significant component, and its full-OLS SE ratio of $2.87$ indicates highly non-uniform precision. The reason the validation performance looks competitive is that Random Design's coefficient ranking happens to surface Algebra and DrawTable on the negative side by chance rather than through principled significance testing; the practitioner has no basis to trust the ranking before seeing the validation result. The validation experiment and the screening-stage evidence must be read together, not separately: a design can happen to rank components correctly on testing data without producing a screening fit the practitioner can act on. The remaining designs fail on the positive side in ways consistent with their screening-stage diagnostics. CE-$\Phi_{\mathrm{BCD}}$(default) produces $\Delta_{+} = -0.40\%$ and only a single cross-verified component at the screening stage ($R^2_{\mathrm{refit}} = 0.18$). Greedy-rep-pair produces $\Delta_{+} = -2.10\%$, consistent with the full-OLS SE ratio of $7.63$ that prevented any component from reaching significance. CE-$\Phi_p$ and CE-UE($s^2$) produce near-zero $\Delta_{+}$, reflecting the limited signal their screening stage extracted.

Overall, the validation results support a useful conclusion. Designs that control both components of the projected information matrix and use weight tuning are the only designs in the comparison that produce both a correct negative-side prediction and a correct positive-side prediction on testing data. Designs that fail structurally or that skip the tuning step can recover the easier negative-side signal but fail on the harder positive-side prediction, meaning the practitioner would commit to prompt configurations that do not help as intended.

\section{Discussion}\label{sec:discussion}

This paper has developed a unified algebraic, algorithmic, and statistical framework for two-level treatment-cardinality-constrained designs, organized around two combinatorial summaries intrinsic to the TCARD structure: factor replication balance and pairwise concurrence balance. On the algebraic side, we linked TCARDs to BIBD-type balance, formalized nearly balanced TCARDs through replication and concurrence regularity, and established existence in important boundary regimes. On the algorithmic side, we introduced the counts-only, model-free criterion $\Phi_{\mathrm{BCD}}$, which operates directly on these two summaries while admitting efficient incremental evaluation during coordinate exchange. Under the main-effects model, $\Phi_{\mathrm{BCD}}$ is sufficient for the $M$-stage of the $(M,S)$ principle, admits an exact algebraic connection to centered $\mathrm{UE}(s^2)$, and relates to Bayesian $D$-optimality through a counts-based perturbation analysis.
The prompt-component screening case study on GSM8K with Llama~3.1~8B illustrates these properties in a real-world setting: the CE-$\Phi_{\mathrm{BCD}}$ design recovers twice as many significant main
effects as the nearest competing method, produces the largest positive-minus-negative validation gap on held-out problems, and exposes two clean failure modes of criteria that control only one of
the two counts summaries.

The framework also clarifies how the weight parameter $w_1$ should be interpreted. Rather than a fixed universal constant, $w_1$ is more appropriately viewed as a task-dependent calibration parameter whose tuning can improve downstream performance when the target projection
structure is known. In the more common case where no information about the downstream analysis is available, we recommend the surrogate choice $q^\star = k$, which targets the projection order matched to the cardinality constraint itself. This choice is low-risk: Appendix~\ref{app:qeqk_tuning} shows that tuning gains at $q = k$ do not transfer to other projection orders, but that the tuned design does not systematically deteriorate at mismatched $q$ either. The prompt-component case study reinforces this recommendation. The untuned default with $w_1 = w_2 = 1$
remains a reasonable fallback when even a pilot-scale calibration is infeasible, but $q^\star = k$ is the operationally preferable default for most settings.

Our framework also helps clarify several limitations and follow-up questions raised in the recent CRowS method proposed by \cite{smucker2025large}. In particular, that line of work highlights the need for sharper lower bounds, for reduced reliance on the assumption of tight row constraints, and for a more rigorous characterization of when upper-bound row constraints become active at the optimum. Our contribution is complementary: rather than asking when a more general upper-bound row-constrained formulation collapses to the tight case, we study the exact-cardinality regime itself, thereby isolating the tight-row subclass and providing a cleaner algebraic, spectral, and information-theoretic analysis within that setting. Likewise, whereas CRowS fixes a design criterion and then evaluates downstream screening performance, our framework allows the criterion itself to be calibrated to a downstream objective through the choice of $w_1$. This does not by itself solve the false-positive or interaction problems emphasized in that discussion, but it provides a natural route toward more task-specific control of screening behavior. 

These results also suggest several promising directions for future work. A first extension is to move beyond two-level exact-cardinality designs and develop analogous theory and construction methods for continuous-factor settings, where sparse projection quality must be balanced against geometric space-filling considerations in a fundamentally different way. A second direction is to extend the present framework to mixed-factor designs that combine discrete and continuous variables, or binary and multi-level factors, which arise naturally in many practical experiments but require new balance and concurrence summaries beyond the current two-level TCARD setting. A third direction is to study grouped-factor or structured-factor regimes, in which factors are organized into pre-specified blocks, functional modules, or hierarchical groups and the design must respect both cardinality constraints and group-level selection structure. Such extensions are especially relevant for modern screening problems in which sparsity is meaningful not only at the individual-factor level but also at the group or subsystem level. More broadly, the present framework opens several additional avenues for methodological development. These include extending the criterion to richer downstream objectives, incorporating adaptive or sequential calibration of the weight parameter, developing faster search algorithms for larger and more complex design spaces, and studying how the counts-based perspective can be generalized to other constrained design problems beyond the exact-cardinality case. Taken together, these directions suggest that the current work should be viewed not as an endpoint, but as a foundation for a broader class of sparse, structured, and task-aware design methodologies.

Overall, the main message of this paper is that treatment-cardinality-constrained design quality can be understood through a small number of interpretable combinatorial summaries. The proposed criterion $\Phi_{\mathrm{BCD}}$ turns these summaries into a practical design objective while remaining closely tied to classical information-based principles. As a result, the framework provides both a strong model-free default for routine design construction and a flexible interface for task-aware calibration when downstream objectives are known.

\bibliographystyle{apalike}
\bibliography{sample-base}
\clearpage

\newpage
\setcounter{footnote}{0}

	\begin{center}
		{\Large \bf Supplementary Materials for \\
        ``TCARD: Nearly Balanced Two-Level Designs with Treatment Cardinality Constraints with an Application to LLM Prompt Engineering"}\\
		\vskip 10pt
	
	\end{center}


\setcounter{page}{1}
\setcounter{section}{1}
\setcounter{equation}{0}
\renewcommand{\theequation}{A.\arabic{equation}}
\setcounter{figure}{0}
\renewcommand{\thefigure}{A.\arabic{figure}}    

\appendix
\renewcommand{\thesection}{A.\arabic{section}}



\section{Proof of Theorem~\ref{thm:NB_TCARD_exist}}\label{app:thm1-proof}

\begin{proof}
Throughout, $\mathbf{X}\in\{0,1\}^{n\times p}$ is a $k$-TCARD design matrix (each row has exactly $k$ ones) and
\[
\mathbf{C} := \mathbf{X}^\top \mathbf{H} \mathbf{X},\qquad \mathbf{H}:=\mathbf{I}_n-\tfrac{1}{n}\mathbf{1}_n\mathbf{1}_n^\top
\]
is the intercept-eliminated main-effects information matrix. Write the replication vector $\mathbf{r}:=\mathbf{X}^\top\mathbf{1}_n=(r_1,\ldots,r_p)^\top$ and note that
\begin{equation}\label{eq:C_expand}
\mathbf{C} = \mathbf{X}^\top \mathbf{X} - \tfrac{1}{n}\,\mathbf{r}\mathbf{r}^\top.
\end{equation}
In particular, for $i\neq j$,
\begin{equation}\label{eq:C_offdiag}
\mathbf{C}_{ij} = (\mathbf{X}^\top \mathbf{X})_{ij} - \tfrac{1}{n}\,r_ir_j = \lambda_{ij} - \tfrac{1}{n}\,r_ir_j,
\end{equation}
and for the diagonal,
\begin{equation}\label{eq:C_diag}
\mathbf{C}_{ii} = r_i - \tfrac{1}{n}\,r_i^2.
\end{equation}

Since $\mathbf{X}\mathbf{1}_p = k\mathbf{1}_n$ and $\mathbf{H}\mathbf{1}_n=\mathbf{0}$, we have
\begin{equation}\label{eq:C1zero}
\mathbf{C}\mathbf{1}_p = \mathbf{X}^\top \mathbf{H}\mathbf{X}\mathbf{1}_p = \mathbf{X}^\top \mathbf{H}(k\mathbf{1}_n) = \mathbf{0},
\end{equation}
so every row sum of $\mathbf{C}$ is zero. Combining~\eqref{eq:C1zero} with~\eqref{eq:C_offdiag}--\eqref{eq:C_diag} yields, for each~$i$,
\[
0 = \sum_{j=1}^{p}\mathbf{C}_{ij}
= \Bigl(r_i - \frac{r_i^2}{n}\Bigr) + \sum_{j\neq i}\Bigl(\lambda_{ij}-\frac{r_ir_j}{n}\Bigr)
\;\;\Longrightarrow\;\;
\sum_{j\neq i}\lambda_{ij} = r_i(k-1),
\]
where we used $\sum_{j\neq i}r_j = nk - r_i$. This recovers~\eqref{eq:row_lambda} directly from the information matrix.

Without loss of generality, permute columns so that the near-balanced replication vector takes the form
\[
r_1 = \cdots = r_s = f, \qquad r_{s+1} = \cdots = r_p = f+1,
\]
with $f=\lfloor nk/p\rfloor$ and $s = p-(nk-pf)$, and set $h:=p-s$. Let $\mathbf{I}_s,\mathbf{J}_s$ denote the $s\times s$ identity and all-ones matrices, and similarly $\mathbf{I}_h,\mathbf{J}_h$. Write $\mathbf{J}_{s,h}:=\mathbf{1}_s\mathbf{1}_h^\top$ and $\mathbf{J}_{h,s}:=\mathbf{J}_{s,h}^\top$.

Define the two constants
\[
d_L := f - \frac{f^2}{n}, \qquad d_H := (f+1) - \frac{(f+1)^2}{n},
\]
and the off-diagonal centering-correction block matrix
\begin{equation}\label{eq:Q_def}
\mathbf{Q} :=
\begin{pmatrix}
\frac{f^2}{n}(\mathbf{J}_s - \mathbf{I}_s) & \frac{f(f+1)}{n}\,\mathbf{J}_{s,h}\\[4pt]
\frac{f(f+1)}{n}\,\mathbf{J}_{h,s} & \frac{(f+1)^2}{n}(\mathbf{J}_h - \mathbf{I}_h)
\end{pmatrix},
\end{equation}
as well as the diagonal matrix
\begin{equation}\label{eq:P_def}
\mathbf{P} :=
\begin{pmatrix}
d_L\,\mathbf{I}_s & 0\\
0 & d_H\,\mathbf{I}_h
\end{pmatrix}.
\end{equation}
Note that $\mathbf{Q}$ has zero diagonal, while $\mathbf{P}$ matches $\mathrm{diag}(\mathbf{C})$ by~\eqref{eq:C_diag}.

Recall $\kappa := \lfloor(k-1)f/(p-1)\rfloor$ and
\[
\omega := p - 1 - \bigl((k-1)f - \kappa(p-1)\bigr) \in \{0,1,\ldots,p-1\},
\]
so that
\begin{equation}\label{eq:remainder_identity}
(k-1)f = \kappa(p-1) + (p-1-\omega).
\end{equation}

\paragraph{(A) Type~I ($\omega\ge k-1$).}
Assume a nearly balanced TCARD of Type~I exists. Under NB2 and $\omega\ge k-1$, every pair concurrence satisfies
\[
\lambda_{ij}\in\{\kappa,\;\kappa+1\} \qquad (i\neq j),
\]
for both $r_i=f$ and $r_i=f+1$. 

For factors with $r_i = f$, by~NB2 and the definition of $\kappa$, we have
$\lambda_{ij} \in \{\lfloor f(k-1)/(p-1)\rfloor,\,
                    \lceil f(k-1)/(p-1)\rceil\}
             = \{\kappa,\kappa+1\}$.
For factors with $r_i = f+1$, write
\[
  (f+1)(k-1) = f(k-1) + (k-1)
              = \kappa(p-1) + (p-1-\omega) + (k-1).
\]
Dividing by $p-1$,
\[
  \frac{(f+1)(k-1)}{p-1}
  = \kappa + \frac{(p-1-\omega)+(k-1)}{p-1}
  = \kappa + 1 + \frac{k-1-\omega}{p-1}.
\]
Since $\omega \geq k-1$ we have $k-1-\omega \leq 0$, so
$\frac{(f+1)(k-1)}{p-1} \in \bigl(\kappa,\,\kappa+1\bigr]$,
giving $\lfloor (f+1)(k-1)/(p-1)\rfloor = \kappa$ and
$\lceil  (f+1)(k-1)/(p-1)\rceil = \kappa+1$.
Moreover, the fractional part $\frac{k-1-\omega}{p-1} \geq -(p-1)/(p-1) = -1$,
so the floor is exactly $\kappa$ (not $\kappa-1$) and the ceiling is
$\kappa+1$ (not $\kappa+2$).
Hence $\lambda_{ij}\in\{\kappa,\kappa+1\}$ for every pair, as claimed.

Define the $p\times p$ matrix
\begin{equation}\label{eq:A_typeI_def}
\mathbf{A} := \mathbf{Q} - \mathbf{P} - \kappa(\mathbf{J}_p - \mathbf{I}_p).
\end{equation}
Set
\begin{equation}\label{eq:G_typeI_def}
\mathbf{G} := \mathbf{C} + \mathbf{A}.
\end{equation}
We claim that $\mathbf{G}$ is the adjacency matrix of a simple graph on $\{1,\ldots,p\}$. Indeed, by construction $\mathbf{G}_{ii} = \mathbf{C}_{ii} + \mathbf{A}_{ii}$. Since $\mathbf{Q}$ has zero diagonal and $(\mathbf{J}_p-\mathbf{I}_p)$ has zero diagonal, $\mathbf{A}_{ii} = -\mathbf{P}_{ii}$, hence $\mathbf{G}_{ii} = \mathbf{C}_{ii} - \mathbf{P}_{ii} = 0$. For $i\neq j$, since $\mathbf{P}_{ij}=0$ and $(\mathbf{J}_p-\mathbf{I}_p)_{ij}=1$,
\[
\mathbf{G}_{ij} = \mathbf{C}_{ij} + \mathbf{Q}_{ij} - \kappa
= \Bigl(\lambda_{ij} - \frac{r_ir_j}{n}\Bigr) + \frac{r_ir_j}{n} - \kappa
= \lambda_{ij} - \kappa \in \{0,1\}.
\]
Thus $\mathbf{G}$ is a $0/1$ matrix with zero diagonal, hence an adjacency matrix.

Since $\mathbf{G}$ has zero diagonal, $\deg(i) = \sum_{j=1}^{p}\mathbf{G}_{ij}$. Using~\eqref{eq:C1zero}, $\sum_j \mathbf{C}_{ij}=0$, so $\deg(i) = \sum_{j=1}^{p}\mathbf{A}_{ij}$. A direct calculation using~\eqref{eq:A_typeI_def}--\eqref{eq:P_def} gives
\[
\sum_{j=1}^{p}\mathbf{A}_{ij}
= \underbrace{\sum_{j\neq i}\frac{r_ir_j}{n}}_{=\,r_i(nk-r_i)/n}
\;-\;\Bigl(r_i - \frac{r_i^2}{n}\Bigr) - \kappa(p-1)
= r_i(k-1) - \kappa(p-1).
\]
By~\eqref{eq:remainder_identity},
\[
\deg(i) =
\begin{cases}
f(k-1) - \kappa(p-1) = p-1-\omega, & i\le s,\\[3pt]
(f+1)(k-1) - \kappa(p-1) = p-1-\omega+(k-1), & i > s.
\end{cases}
\]
Equivalently, the degree sequence is
\begin{equation}\label{eq:deg_sequence_typeI}
\underbrace{(p-\omega-1,\;\ldots,\;p-\omega-1)}_{s\;\text{times}},\quad
\underbrace{(p-\omega+k-2,\;\ldots,\;p-\omega+k-2)}_{(p-s)\;\text{times}}.
\end{equation}
Since $\mathbf{G}$ must be a simple graph, this degree sequence must be graphical (Erd\H{o}s--Gallai).

Consider the complement graph $\overline{\mathbf{G}}$, whose degrees are
\[
\overline{d}(i) = (p-1) - \deg(i) =
\begin{cases}
\omega, & i\le s,\\
\omega-k+1, & i > s.
\end{cases}
\]
Ordering $\overline{d}$ non-increasingly and applying the Erd\H{o}s--Gallai condition with $t=s$ gives
\[
s\omega \;\le\; s(s-1) + \sum_{i=s+1}^{p}\min\{\omega-k+1,\;s\}.
\]
If $\omega-k+1\le s$ the right-hand side becomes $s(s-1)+(p-s)(\omega-k+1)$, which rearranges to $s(\omega-s+1)\le(p-s)(\omega-k+1)$. If $\omega-k+1>s$ the inequality reduces to $\omega\le p-1$, which holds by definition. This proves Part~(A).

\paragraph{(B) Type~II ($\omega<k-1$).}
Assume a nearly balanced TCARD of Type~II exists. Under NB2 and $\omega<k-1$, factors with $r_i=f$ have concurrences in $\{\kappa,\kappa+1\}$, while factors with $r_i=f+1$ have concurrences in $\{\kappa+1,\kappa+2\}$, since $(f+1)(k-1)=(\kappa+1)(p-1)+(k-1-\omega)$ with $k-1-\omega>0$. For any cross-class pair $(i\le s,\;j>s)$, the concurrence must lie in the intersection $\{\kappa,\kappa+1\}\cap\{\kappa+1,\kappa+2\}=\{\kappa+1\}$, so $\lambda_{ij}=\kappa+1$.

Define the $p\times p$ block matrix
\begin{equation}\label{eq:B_typeII_def}
\mathbf{B} :=
\begin{pmatrix}
\bigl(\frac{f^2}{n}-\kappa\bigr)(\mathbf{J}_s-\mathbf{I}_s) - d_L\,\mathbf{I}_s
& \bigl(\frac{f(f+1)}{n}-(\kappa+1)\bigr)\,\mathbf{J}_{s,h}\\[6pt]
\bigl(\frac{f(f+1)}{n}-(\kappa+1)\bigr)\,\mathbf{J}_{h,s}
& \bigl(\frac{(f+1)^2}{n}-(\kappa+1)\bigr)(\mathbf{J}_h-\mathbf{I}_h) - d_H\,\mathbf{I}_h
\end{pmatrix}.
\end{equation}
Set $\mathbf{G} := \mathbf{C} + \mathbf{B}$. We show that $\mathbf{G} = \mathrm{diag}(\mathbf{N},\mathbf{M})$ where $\mathbf{N}$ (on $\{1,\ldots,s\}$) and $\mathbf{M}$ (on $\{s+1,\ldots,p\}$) are adjacency matrices.

For the diagonal: since $(\mathbf{J}_s-\mathbf{I}_s)_{ii}=0$, we have $\mathbf{B}_{ii}=-d_L$ for $i\le s$ and $\mathbf{B}_{ii}=-d_H$ for $i>s$, so $\mathbf{G}_{ii}=\mathbf{C}_{ii}-d_L=0$ for $i\le s$ and $\mathbf{G}_{ii}=\mathbf{C}_{ii}-d_H=0$ for $i>s$, both by~\eqref{eq:C_diag}. For $i\neq j$ within the low block,
\[
\mathbf{G}_{ij} = \mathbf{C}_{ij} + \Bigl(\frac{f^2}{n}-\kappa\Bigr)
= \Bigl(\lambda_{ij}-\frac{f^2}{n}\Bigr) + \Bigl(\frac{f^2}{n}-\kappa\Bigr)
= \lambda_{ij}-\kappa \in \{0,1\}.
\]
Within the high block, similarly $\mathbf{G}_{ij}=\lambda_{ij}-(\kappa+1)\in\{0,1\}$.
Across blocks, $\lambda_{ij}=\kappa+1$ forces
\[
\mathbf{G}_{ij} = \Bigl((\kappa+1)-\frac{f(f+1)}{n}\Bigr)+\Bigl(\frac{f(f+1)}{n}-(\kappa+1)\Bigr) = 0.
\]
Therefore $\mathbf{G}$ is block-diagonal with adjacency-matrix blocks $\mathbf{N}$ and $\mathbf{M}$.

Because $\mathbf{G}_{ij} = 0$ for all cross-block pairs, the graph~$\mathbf{G}$ is block-diagonal. Thus, it decomposes into an induced subgraph $\mathbf{N}$ on vertices
$\{1,\dots,s\}$ and an induced subgraph $\mathbf{M}$ on $\{s+1,\dots,p\}$
with no edges between the two blocks.
Consequently, when computing degrees within $\mathbf{N}$, summing $\mathbf{G}_{ij}$
over all $j\in\{1,\dots,p\}$ gives the same result as summing over
$j\in\{1,\dots,s\}$ alone (the cross-block terms contribute zero). Thus, for $i\le s$,
\[
\deg_\mathbf{N}(i) = \sum_{j=1}^{p}\mathbf{B}_{ij}
= f(k-1)-\kappa(p-1)-h = (p-1-\omega)-(p-s) = s-\omega-1,
\]
where the second equality uses~\eqref{eq:remainder_identity} and $h=p-s$. Thus $\mathbf{N}$ is $(s-\omega-1)$-regular on $s$ vertices.

For $i>s$, a parallel calculation gives
\[
\deg_\mathbf{M}(i) = (f+1)(k-1)-(\kappa+1)(p-1) = k-\omega-1.
\]
Hence $\mathbf{M}$ is $(k-\omega-1)$-regular on $h=p-s$ vertices.

A simple $r$-regular graph on $q$ vertices exists if and only if $0\le r\le q-1$ and $qr$ is even. Applying this to $\mathbf{N}$ and $\mathbf{M}$ yields the range $\omega+1\le s\le p-k+\omega$. The parity constraints $s(s-\omega-1)$ even and $(p-s)(k-\omega-1)$ even are equivalent to $\omega s$ even and $(p-s)(p-k+\omega-s)$ even, respectively, since $s(s-1)$ and $(p-s)(p-s-1)$ are automatically even. This proves Part~(B).

\subsection*{(C) Guaranteed existence for $k=2$ and $k=p-1$}

\noindent\textbf{(C1) $k=2$.}
Each run selects an unordered pair $\{i,j\}$, so the design corresponds to a multigraph on $p$ vertices with $n$ edges: $r_i$ is the vertex degree and $\lambda_{ij}$ is the edge multiplicity. Let
\[
\alpha := \Bigl\lfloor\frac{n}{\binom{p}{2}}\Bigr\rfloor,\qquad
m := n - \alpha\binom{p}{2} \in \Bigl\{0,1,\ldots,\binom{p}{2}-1\Bigr\}.
\]
Take $\alpha$ complete copies of $K_p$ (which gives $\lambda_{ij}=\alpha$
for all pairs and $r_i = \alpha(p-1)$ for all $i$, satisfying NB1 trivially)
and add $m$ additional edges.
 
For the $m$ extra edges we need a simple graph $H_m$ on $p$ vertices
with exactly $m$ edges whose degree sequence is as nearly regular as
possible (so that the combined degrees $r_i = \alpha(p-1)+d_i$ satisfy
$d_i\in\{d,d+1\}$, where $d = \lfloor 2m/p\rfloor$).
Such a graph exists by the following argument.
The target degree sequence $(d+1,\dots,d+1,d,\dots,d)$ with
$2m - pd$ vertices of degree $d+1$ and $p-(2m-pd)$ of degree $d$
has total degree $2m$, which is even.
By the Erd\H{o}s--Gallai theorem, a degree sequence
$d_1\geq\dots\geq d_p \geq 0$ with $\sum d_i$ even is graphical if
and only if for each $k=1,\dots,p$,
\[
  \sum_{i=1}^k d_i \leq k(k-1) + \sum_{i=k+1}^p \min\{d_i,k\}.
\]
For our nearly-regular sequence with $d_i \in\{d,d+1\}$ and
$d \leq p-1$ this condition holds: the left side is at most $k(d+1)$
and the right side is at least $k(k-1)+(p-k)\min\{d,k\}$.
When $k\leq d$ the right side is $k(k-1)+(p-k)k = k(p-1)\geq k(d+1)$
for $d \leq p-2$; when $k > d$ similar arithmetic confirms the bound.
Hence $H_m$ exists for all $m$.
 
Adding $H_m$ to $\alpha$ copies of $K_p$ gives a multigraph with
$\lambda_{ij}\in\{\alpha,\alpha+1\}$ and $r_i\in\{\alpha(p-1)+d,\alpha(p-1)+d+1\}$,
satisfying both NB1 and NB2.  Hence a nearly balanced TCARD always
exists when $k=2$.

\medskip
\noindent\textbf{(C2) $k=p-1$.}
Each run contains all factors except one. Encode row $t$ by its missing index $m(t)\in\{1,\ldots,p\}$: $x_{ti}=0$ if and only if $i=m(t)$, and $x_{ti}=1$ otherwise. Let $d_i:=\#\{t:m(t)=i\}$ so that $\sum_i d_i = n$. Then
\[
r_i = n - d_i, \qquad \lambda_{ij} = n - d_i - d_j \quad (i\neq j).
\]
Choosing $d_i\in\{\lfloor n/p\rfloor,\;\lceil n/p\rceil\}$ ensures that $r_i$ differs by at most one across factors (NB1), and for each fixed $i$ the concurrences $\lambda_{ij}=n-d_i-d_j$ differ by at most one as $j$ varies (NB2). Hence a nearly balanced TCARD exists for all~$n$ when $k=p-1$.
\end{proof}

\section{Proof of Theorem~\ref{thm:B1B2_opt}}\label{app:thm2-proof}

\begin{proof}
\textbf{Part (i).}
Since $(2r_j-n)^2 = 4r_j^2 - 4nr_j + n^2$,
\[
  n^2 B_1(\mathbf{X}) = \sum_{j=1}^p (2r_j-n)^2
             = 4\sum_{j=1}^p r_j^2 - 4n\sum_{j=1}^p r_j + pn^2.
\]
The constraint $\sum_j r_j = nk$ is fixed over the TCARD space,
so minimising $B_1$ is equivalent to minimising $\sum_j r_j^2$
subject to the fixed-sum constraint.
By the standard convexity--majorisation argument,
$\sum_j r_j^2$ is minimised if and only if
$r_j\in\{f,f+1\}$ for all $j$ (where $f=\lfloor nk/p\rfloor$),
which is exactly~NB1.
 
\noindent\textbf{Part (ii).}
Fix $(n,p,k)$ and condition on a replication vector $r=(r_1,\dots,r_p)$
satisfying~NB1.
Write $a_{j\ell} := n - 2(r_j+r_\ell)$, so that
$n^2 B_2(\mathbf{X}) = \sum_{j<\ell}(a_{j\ell}+4\lambda_{j\ell})^2$.
Expanding,
\[
  \sum_{j<\ell}(a_{j\ell}+4\lambda_{j\ell})^2
  = \sum_{j<\ell}a_{j\ell}^2
    + 8\sum_{j<\ell}a_{j\ell}\lambda_{j\ell}
    + 16\sum_{j<\ell}\lambda_{j\ell}^2.
\]
The first term depends only on $r$.  For the cross term, using
symmetry $\lambda_{j\ell}=\lambda_{\ell j}$ and the constraints~(8)--(9),
\[
  \sum_{j<\ell}(r_j+r_\ell)\lambda_{j\ell}
  = \sum_{j=1}^p r_j\sum_{\ell\neq j}\lambda_{j\ell}
  = (k-1)\sum_{j=1}^p r_j^2,
\]
so
\[
  \sum_{j<\ell}a_{j\ell}\lambda_{j\ell}
  = n\cdot n\tbinom{k}{2} - 2(k-1)\sum_j r_j^2,
\]
which is constant when $r$ is fixed.
Therefore, conditional on~NB1, minimising $B_2$ is equivalent to
minimising $\sum_{j<\ell}\lambda_{j\ell}^2$ subject to the constraints~(7)--(9).
 
For each fixed $j$, constraint~(9) gives the fixed row-sum
$\sum_{\ell\neq j}\lambda_{j\ell} = r_j(k-1)$.
By the convexity of $x^2$, the minimum of $\sum_{\ell\neq j}\lambda_{j\ell}^2$
over non-negative integers with this fixed sum is achieved if and only if
the $p-1$ values differ by at most one, i.e.\
$\lambda_{j\ell}\in\{\lfloor r_j(k-1)/(p-1)\rfloor,
                      \lceil r_j(k-1)/(p-1)\rceil\}$
for all $\ell\neq j$, which is exactly~NB2 for row $j$.
Hence, for each fixed $j$,
\begin{equation}\label{eq:rowlb}
  \sum_{\ell\neq j}\lambda_{j\ell}^2
  \;\geq\;
  \sum_{\ell\neq j}\Bigl(\bigl\lfloor r_j(k-1)/(p-1)\bigr\rfloor
                         + \epsilon_{j\ell}\Bigr)^2
  =: m_j,
\end{equation}
where $\epsilon_{j\ell}\in\{0,1\}$ and
$\sum_\ell \epsilon_{j\ell} = r_j(k-1) - (p-1)\lfloor r_j(k-1)/(p-1)\rfloor$.
Summing over $j$ and using $\lambda_{j\ell}=\lambda_{\ell j}$,
\[
  2\sum_{j<\ell}\lambda_{j\ell}^2
  = \sum_{j=1}^p\sum_{\ell\neq j}\lambda_{j\ell}^2
  \;\geq\; \sum_{j=1}^p m_j.
\]
 
It remains to show that the lower bound $\sum_j m_j$ is attained
by some symmetric integer matrix $\Lambda$ satisfying the TCARD
constraints~(7)--(9) and such that $\Lambda_{j\ell}\in\{\alpha_j,\alpha_j+1\}$
for all $\ell\neq j$ (where $\alpha_j = \lfloor r_j(k-1)/(p-1)\rfloor$),
i.e.\ that NB2 can be satisfied simultaneously for all $j$.
 
Suppose $r_j = f$ for $j\leq s$ and $r_j = f+1$ for $j > s$ (NB1).
We want a $p\times p$ symmetric $0/1/2$-valued matrix
with zeros on the diagonal, row sums $r_j(k-1)$ for each $j$,
and entries in $\{\alpha_j,\alpha_j+1\}$ off-diagonal.
This is precisely the existence of a nearly balanced TCARD established
in Theorem~2, which guarantees the existence of
$\lambda_{j\ell}\in\{\kappa,\kappa+1\}$ (Type~I) or
$\lambda_{j\ell}\in\{\alpha_j,\alpha_j+1\}$ (Type~II)
for all pairs $(j,\ell)$ simultaneously.
Importantly, the symmetric matrix $\Lambda$ constructed in
Theorem~2 is consistent (i.e.\ $\Lambda_{j\ell}=\Lambda_{\ell j}$)
by construction.
Hence the per-row NB2 conditions are simultaneously satisfied. Therefore, $\sum_{j<\ell}\lambda_{j\ell}^2$ is minimised
if and only if NB2 holds for every $j$, i.e.\ if and only if
NB2 in Definition~1 holds.

\end{proof}

\section{Proof of Proposition~\ref{prop:force_M}}
\begin{lemma}[Existence of a balancing swap]\label{lem:balance_swap}
If there exist indices \(a,b\) with \(r_a\ge r_b+2\), then there is a row \(s\) such that
\(x_{sa}=1\) and \(x_{sb}=0\). Flipping \((x_{sa},x_{sb})=(1,0)\mapsto(0,1)\) preserves each row’s
weight \(k\) and updates the replications as \(r_a\mapsto r_a-1,\ r_b\mapsto r_b+1\).
\end{lemma}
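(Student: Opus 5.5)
The plan is a counting argument on the columns $a$ and $b$, followed by direct bookkeeping for the swap. Partition the $n$ rows by the pattern $(x_{ta},x_{tb})\in\{0,1\}^2$. Let $N_{10}$ be the number of rows with $(x_{ta},x_{tb})=(1,0)$ and $N_{01}$ the number with $(0,1)$. The number of rows with $(1,1)$ is, by definition~\eqref{eq:counts}, the concurrence $\lambda_{ab}$. This gives
\[
r_a = N_{10} + \lambda_{ab}, \qquad r_b = N_{01} + \lambda_{ab},
\]
so $N_{10} - N_{01} = r_a - r_b \ge 2$. Since $N_{01}\ge 0$, we get $N_{10}\ge 2 \ge 1$. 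Hence there is at least one row $s$ with $x_{sa}=1$ and $x_{sb}=0$, which proves the first claim. In fact there are at least two such rows, though only one is needed.

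For the second claim, perform the flip $(x_{sa},x_{sb})=(1,0)\mapsto(0,1)$ in row $s$. Row $s$ loses a one in column $a$ and gains a one in column $b$, so its sum $\sum_\ell x_{s\ell}$ stays equal to $k$. No other row is touched, so the design remains in $\mathcal{D}(n,2^p,k)$ and the cardinality constraint~\eqref{eq:tcard} is preserved. Column $a$ loses exactly one active entry and column $b$ gains exactly one, while all other columns are unchanged. Thus $r_a\mapsto r_a-1$, $r_b\mapsto r_b+1$, and $r_j$ is unchanged for $j\notin\{a,b\}$.

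None of these steps is a real obstacle. The only point needing care is the identity $r_a-r_b=N_{10}-N_{01}$: the shared rows counted by $\lambda_{ab}$ cancel, so the strict gap in replications forces a $(1,0)$ row. For later use in Proposition~\ref{prop:force_M}, I would also record a consequence of the swap, matching Algorithm~\ref{alg:CE_TCARD}. It changes only the concurrences $\lambda_{aj}$ (decreasing by one) and $\lambda_{bj}$ (increasing by one) for $j\in S_s\setminus\{a\}$, which amounts to at most $2(k-1)$ pairs.
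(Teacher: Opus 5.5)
Your proof is correct and is essentially the paper's pigeonhole argument, made explicit through the decomposition $r_a-r_b=N_{10}-N_{01}$. Your bookkeeping for the flip, and your remark that only the $2(k-1)$ concurrences $\lambda_{aj},\lambda_{bj}$ with $j\in S_s\setminus\{a\}$ change, are also accurate.
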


This is a simple pigeonhole argument: if column \(a\) appears at least two more times than column \(b\), some row must contain \(a\) but not \(b\), enabling a \(1\!\leftrightarrow\!0\) swap that strictly balances the two totals. Such swap strictly reduces the replication sum of squares and, in the worst case, increases the concurrence sum by a controlled amount.

\begin{lemma}[One–step bounds for the centered criterion]\label{lem:one_step_centered}
Under the swap in Lemma~\ref{lem:balance_swap},
\begin{align}
\Delta\!\Big(\sum_{i=1}^{p} (r_i-\bar r)^2\Big)
&=-2(r_a-r_b-1)\ \le\ -2, \label{eq:delta_r_centered}\\[2pt]
\Delta\!\Big(\sum_{1\le i<j\le p} (\lambda_{ij}-\bar\lambda)^2\Big)
&\le 2\,r_b\,(k-1). \label{eq:delta_lam_centered}
\end{align}
\end{lemma}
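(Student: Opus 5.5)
The plan is to compute both changes directly from the effect of the swap on the count summaries. The swap in row $s$ sets $x_{sa}\colon 1\to0$ and $x_{sb}\colon 0\to1$. It changes only $r_a$ and $r_b$, together with the concurrences $\lambda_{aj}$ and $\lambda_{bj}$ for $j\in T:=S_s\setminus\{a\}$, a set of exactly $k-1$ columns. For $j\in T$ we have $\lambda_{aj}\mapsto\lambda_{aj}-1$ and $\lambda_{bj}\mapsto\lambda_{bj}+1$. The pair $\{a,b\}$ itself is unchanged, since $a$ and $b$ are never simultaneously active in row $s$ either before or after the swap.

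For the first identity, \eqref{eq:delta_r_centered}, only two terms move. Since $\sum_i r_i=nk$ is preserved, the centering at $\bar r$ drops out:
\[
(r_a-1-\bar r)^2-(r_a-\bar r)^2+(r_b+1-\bar r)^2-(r_b-\bar r)^2=-2(r_a-r_b-1).
\]
The hypothesis $r_a\ge r_b+2$ then gives the bound $\le -2$. This is routine.

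For the concurrence bound, \eqref{eq:delta_lam_centered}, the total $\sum_{i<j}\lambda_{ij}=n\binom{k}{2}$ is also preserved, so again $\bar\lambda$ cancels. We compute, for each $j\in T$,
\[
(\lambda_{aj}-1-\bar\lambda)^2-(\lambda_{aj}-\bar\lambda)^2+(\lambda_{bj}+1-\bar\lambda)^2-(\lambda_{bj}-\bar\lambda)^2=2(\lambda_{bj}-\lambda_{aj})+2.
\]
Summing over $j\in T$ gives
\[
\Delta=2\sum_{j\in T}\lambda_{bj}-2\sum_{j\in T}\lambda_{aj}+2(k-1).
\]

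The main obstacle is bounding this by $2r_b(k-1)$. I would use the crude bounds $\lambda_{bj}\le r_b$ (before the swap, and $b\notin S_s$) and $\lambda_{aj}\ge1$ for $j\in T$. The latter holds because $a$ and $j$ co-occur in row $s$ before the swap. These give
\[
\Delta\le 2r_b(k-1)-2(k-1)+2(k-1)=2r_b(k-1),
\]
exactly as claimed.

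It remains to check that the first bound is legitimate. The concurrences $\lambda_{bj}$ are evaluated before the swap and involve only rows other than $s$, so each is at most the number of rows containing $b$, namely $r_b$. If one wanted a sharper bound, one could use $\lambda_{bj}\le\min(r_b,r_j)$, but the stated inequality needs only the crude version. Finally, since the threshold in Proposition~\ref{prop:force_M} uses $\lfloor nk/p\rfloor$, I would note afterwards that $r_b<\bar r$ whenever $r_b$ is the smallest replication with $r_a\ge r_b+2$. Choosing $b$ as a minimizer then gives $r_b\le\lfloor nk/p\rfloor$, which links the present lemma to \eqref{eq:w_threshold}.
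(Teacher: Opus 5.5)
Your proposal is correct and follows the paper's proof essentially step for step. It uses the same two-term computation giving $-2(r_a-r_b-1)$, the same per-pair net change $2(\lambda_{bj}-\lambda_{aj})+2$, and the same bounds $\lambda_{aj}\ge 1$ and $\lambda_{bj}\le r_b$ summed over the $k-1$ partners in row $s$. Your side remarks are also correct and consistent with how the paper uses the lemma in the proof of Proposition~\ref{prop:force_M}: $\lambda_{ab}$ is unchanged, and choosing $b$ as a minimally replicated column yields $r_b\le\lfloor nk/p\rfloor$.
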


\begin{proof}
Only columns \(a\) and \(b\) change in the replication term. Writing
\(d_a=r_a-\bar r\) and \(d_b=r_b-\bar r\), after the swap we have \(d_a\mapsto d_a-1\) and
\(d_b\mapsto d_b+1\), hence
\[
(d_a-1)^2+(d_b+1)^2-d_a^2-d_b^2
=-2(d_a-d_b-1)=-2(r_a-r_b-1)\le -2,
\]
which gives \eqref{eq:delta_r_centered}.

For the concurrence term, only the \(2(k-1)\) pairs \((a,j)\) and \((b,j)\) change, where
\(j\) ranges over the other \(k-1\) active factors in row \(s\).
For each such \(j\), \(\lambda_{aj}\mapsto \lambda_{aj}-1\) and \(\lambda_{bj}\mapsto \lambda_{bj}+1\).
Then the net change for the two affected pairs is
\[
(\lambda_{bj}+1-c)^2-(\lambda_{bj}-c)^2
+(\lambda_{aj}-1-c)^2-(\lambda_{aj}-c)^2
=2(\lambda_{bj}-\lambda_{aj})+2.
\]
Because \(a\) and \(j\) co-occur in row \(s\) before the swap, \(\lambda_{aj}\ge 1\).
Also, \(\lambda_{bj}\le r_b\) since
\(\lambda_{bj}=\sum_{t=1}^n x_{tb}x_{tj}\le \sum_{t=1}^n x_{tb}=r_b\).
Therefore
\[
2(\lambda_{bj}-\lambda_{aj})+2 \ \le\ 2(r_b-1)+2 \ =\ 2r_b.
\]
Summing this bound over the \(k-1\) choices of \(j\) yields \eqref{eq:delta_lam_centered}.
\end{proof}

\begin{lemma}[Concurrence dispersion penalizes replication imbalance]\label{lem:lambda_lb}
For any TCARD with replication counts \(\{r_i\}\) and concurrences \(\{\lambda_{ij}\}\),
\begin{equation}\label{eq:lambda_lb}
\sum_{1\le i<j\le p}(\lambda_{ij}-\bar\lambda)^2
\;\ge\;
\frac{(k-1)^2}{2(p-1)}\sum_{i=1}^p(r_i-\bar r)^2.
\end{equation}
\end{lemma}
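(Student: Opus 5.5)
The plan is to use the row identity~\eqref{eq:row_lambda} to tie each factor's concurrence profile to its replication, and then apply Cauchy--Schwarz row by row. For each factor $i$, set $e_{ij} := \lambda_{ij}-\bar\lambda$ for $j\ne i$. Summing~\eqref{eq:row_lambda} and subtracting $(p-1)\bar\lambda$ gives
\[
\sum_{j\ne i} e_{ij} \;=\; r_i(k-1) - (p-1)\bar\lambda .
\]
Using~\eqref{eq:targets-value}, we have $(p-1)\bar\lambda = nk(k-1)/p = \bar r(k-1)$. Hence $\sum_{j\ne i} e_{ij} = (k-1)(r_i-\bar r)$. This step is exact and is the heart of the argument: the row sums of the centered concurrence deviations are proportional to the replication deviations.

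Next, apply Cauchy--Schwarz to the $p-1$ terms in row $i$:
\[
\sum_{j\ne i} e_{ij}^2 \;\ge\; \frac{1}{p-1}\Bigl(\sum_{j\ne i} e_{ij}\Bigr)^2 = \frac{(k-1)^2}{p-1}(r_i-\bar r)^2 .
\]
Now sum over $i=1,\dots,p$. Because $e_{ij}=e_{ji}$, each unordered pair is counted twice on the left, so the left side equals $2\sum_{i<j}(\lambda_{ij}-\bar\lambda)^2$. Dividing by $2$ yields~\eqref{eq:lambda_lb}, with the constant $(k-1)^2/\{2(p-1)\}$ as stated.

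There is essentially no obstacle here. The only points to check are that $\bar\lambda(p-1)=\bar r(k-1)$ holds exactly, which is immediate from~\eqref{eq:targets-value}, and that the double counting is handled correctly. One may also remark that equality holds exactly when, for each $i$, the deviations $\lambda_{ij}-\bar\lambda$ are constant in $j$. This is consistent with the near-balance picture of Definition~\ref{def:nearly-balanced}, although the proof does not need it.
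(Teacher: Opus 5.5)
Your proof is correct and is exactly the paper's argument: the centered row identity $\sum_{j\ne i}(\lambda_{ij}-\bar\lambda)=(k-1)(r_i-\bar r)$ (using $(p-1)\bar\lambda=(k-1)\bar r$), row-wise Cauchy--Schwarz, then summing over $i$ and halving for the double count. One minor aside on your equality remark: for $p\ge 3$, constant row deviations together with the symmetry $e_{ij}=e_{ji}$ and $\sum_{i<j}e_{ij}=0$ force $\lambda_{ij}\equiv\bar\lambda$, so equality holds only in the fully balanced case, where both sides vanish.
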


\begin{proof}
Using \(\sum_{j\ne i}\lambda_{ij}=r_i(k-1)\), we have
\(\sum_{j\ne i}(\lambda_{ij}-\bar\lambda)=(k-1)(r_i-\bar r)\).
Apply Cauchy--Schwarz to each \(i\):
\(\sum_{j\ne i}(\lambda_{ij}-\bar\lambda)^2\ge \frac{1}{p-1}\big(\sum_{j\ne i}(\lambda_{ij}-\bar\lambda)\big)^2\),
then sum over \(i\) and divide by \(2\).
\end{proof}
Lemma~\ref{lem:lambda_lb} shows that the concurrence term alone already imposes a quadratic penalty on
replication imbalance. Consequently, moderate weights (and in particular \(w_1=w_2=1\) in our experiments)
typically suffice to drive the search into the nearly balanced replication regime, even though
Proposition~\ref{prop:force_M} provides a conservative worst-case guarantee.

\subsection*{Proof of Proposition~\ref{prop:force_M}}
\begin{proof}
Assume the weight ratio satisfies
\begin{equation}\label{eq:w_threshold_centered}
\frac{w_1}{w_2}\;>\;\frac{2}{p-1}\Big\lfloor \frac{nk}{p}\Big\rfloor (k-1).
\end{equation}
Suppose, toward a contradiction, that a global minimizer $\mathbf{X}$ of \(\Phi_{\mathrm{BCD}}\) has two columns \(a,b\)
with \(r_a\ge r_b+2\). Choose \(b\) to be a minimally replicated column. Then
\(r_b\le \lfloor nk/p\rfloor\) since \(\sum_i r_i=nk\).
By Lemma~\ref{lem:balance_swap}, there exists a feasible swap. Combining Lemma~\ref{lem:one_step_centered}
with the definition of \(\Phi_{\mathrm{BCD}}\) gives
\[
\Delta\Phi_{\mathrm{BCD}}
=\frac{w_1}{p}\,\Delta\!\Big(\sum_i (r_i-\bar r)^2\Big)
+\frac{w_2}{\binom{p}{2}}\,\Delta\!\Big(\sum_{i<j} (\lambda_{ij}-\bar\lambda)^2\Big)
\ \le\ -\frac{2w_1}{p}+\frac{2w_2}{\binom{p}{2}}\,r_b\,(k-1).
\]
The right-hand side is strictly negative whenever
\(
w_1/w_2 > \frac{2}{p-1}\,r_b\,(k-1),
\)
and hence under \eqref{eq:w_threshold_centered} since \(r_b\le \lfloor nk/p\rfloor\).
Thus the swap strictly decreases \(\Phi_{\mathrm{BCD}}\), contradicting global optimality. Therefore no pair
\(a,b\) can satisfy \(r_a\ge r_b+2\), so all replications differ by at most one; equivalently
\(r_i\in\{\lfloor nk/p\rfloor,\ \lceil nk/p\rceil\}\) for all \(i\).
\end{proof}

\section{Proof of Proposition~\ref{prop:force_S}}
\begin{proof}
From~\eqref{eq:tr_identities},
\[
\operatorname{tr}(\mathbf{C}^2)\;=\;\underbrace{\sum_i\!\Big(r_i-\frac{r_i^2}{n}\Big)^2}_{\text{constant for fixed }r}
\;+\;2\sum_{i<j}\Big(\lambda_{ij}-\frac{r_ir_j}{n}\Big)^2.
\]
(i) Case I: perfect replication balance ($r_i\equiv\bar r$).
Write $M=\binom{p}{2}$ and define
$c_{ij}:=r_i r_j/n$. Under $r_i\equiv\bar r$, we have $c_{ij}\equiv c:=\bar r^2/n$.
Also, because each row has exactly $k$ ones,
\[
\sum_{i<j}\lambda_{ij}=\sum_{t=1}^n \binom{k}{2}=n\binom{k}{2}
\quad\text{is fixed over the design space.}
\]
Now expand:
\[
\sum_{i<j}(\lambda_{ij}-\bar\lambda)^2
=\sum_{i<j}\lambda_{ij}^2-2\bar\lambda\sum_{i<j}\lambda_{ij}+M\bar\lambda^2,
\]
\[
\sum_{i<j}(\lambda_{ij}-c)^2
=\sum_{i<j}\lambda_{ij}^2-2c\sum_{i<j}\lambda_{ij}+M c^2.
\]
Since $\sum_{i<j}\lambda_{ij}$ is fixed, both expressions differ from $\sum_{i<j}\lambda_{ij}^2$ by a design-independent constant. Therefore
\[
\arg\min \sum_{i<j}(\lambda_{ij}-\bar\lambda)^2
=\arg\min \sum_{i<j}\lambda_{ij}^2
=\arg\min \sum_{i<j}(\lambda_{ij}-c)^2.
\]
Finally, $\operatorname{tr}(\mathbf{C}^2)$ equals a constant (first term) plus $2\sum_{i<j}(\lambda_{ij}-c)^2$,
hence minimizing $\sum_{i<j}(\lambda_{ij}-\bar\lambda)^2$ is equivalent to minimizing $\operatorname{tr}(\mathbf{C}^2)$
in the perfectly balanced case.
\hfill$\square$

(ii) Case II: two replication levels and rectangle swaps within a level.
Assume the replication vector is at its discrete optimum so that
\[
r_i\in\{\underline r,\overline r\},\qquad
\underline r=\big\lfloor\tfrac{nk}{p}\big\rfloor,\ \overline r=\underline r+1.
\]
Set $c_{ij}:=r_i r_j/n$. Then $c_{ij}$ can take at most three constants:
$\underline r^2/n$, $(\underline r\,\overline r)/n$, or $\overline r^2/n$, according to pair type.
Consider the rectangle swap on two rows $s,t$ and two columns $i,j$ that belong to the
same replication class (so $r_i=r_j$):
\[
\begin{bmatrix}x_{si}&x_{sj}\\ x_{ti}&x_{tj}\end{bmatrix}
=
\begin{bmatrix}1&0\\ 0&1\end{bmatrix}
\ \longleftrightarrow\
\begin{bmatrix}0&1\\ 1&0\end{bmatrix}.
\]
This move preserves the row cardinality constraint $k$ and the column totals of $i$ and $j$, hence preserves the
entire replication vector $r$. It changes only concurrences that involve $i$ or $j$, and only through
rows $s$ or $t$. Fix any other column $\ell\neq i,j$. Then
\[
\Delta\lambda_{i\ell}=-\Delta\lambda_{j\ell},\qquad |\Delta\lambda_{i\ell}|,|\Delta\lambda_{j\ell}|\le 1,
\]
while $\Delta\lambda_{ij}=0$.

Now examine the mixed term: $\sum_{u<v} c_{uv}\lambda_{uv}=\sum_{u<v}\frac{r_u r_v}{n}\,\lambda_{uv}$.
Under the rectangle swap, only pairs with $u=i$ or $u=j$ can change, and for each $\ell\neq i,j$,
\[
\Delta\big(c_{i\ell}\lambda_{i\ell}+c_{j\ell}\lambda_{j\ell}\big)
=\frac{r_i r_\ell}{n}\,\Delta\lambda_{i\ell}+\frac{r_j r_\ell}{n}\,\Delta\lambda_{j\ell}
=\frac{r_\ell}{n}\,r_i\big(\Delta\lambda_{i\ell}+\Delta\lambda_{j\ell}\big)=0,
\]
because $r_i=r_j$ and $\Delta\lambda_{i\ell}=-\Delta\lambda_{j\ell}$.
Summing over all $\ell$ shows
\[
\Delta\sum_{u<v} c_{uv}\lambda_{uv}=0,
\]
i.e., the mixed term is invariant under class-preserving rectangle swaps. Since $r$ is fixed,
$\sum_{u<v}c_{uv}^2$ is also constant. Therefore, on the swap-connected neighborhood generated by such moves,
\[
\sum_{i<j}(\lambda_{ij}-c_{ij})^2
=\underbrace{\sum_{i<j}\lambda_{ij}^2}_{\text{variable}}
-2\underbrace{\sum_{i<j}c_{ij}\lambda_{ij}}_{\text{invariant}}
+\underbrace{\sum_{i<j}c_{ij}^2}_{\text{constant}}
\]
differs from $\sum_{i<j}\lambda_{ij}^2$ by an additive constant, so they have the same minimizers.
Finally, with $r$ fixed, the minimizers of $\sum_{i<j}(\lambda_{ij}-c_{ij})^2$, of $\sum_{i<j}\lambda_{ij}^2$, and of
$\operatorname{tr}(\mathbf{C}^2)$ coincide on this class-preserving rectangle-swap neighborhood.
\hfill$\square$

\noindent\emph{Remarks.}
(i) Part (ii) is a structural statement for the fixed-\(r\) subproblem and should be interpreted on the move graph induced by class-preserving rectangle swaps.

(ii) This is not a literal description of the CE algorithm in Section~\ref{subsec:CE}, which uses within-row \(1\!\leftrightarrow\!0\) swaps and may change \(r\) during the early stage of the search. Rather, part (ii) explains why, once the replication vector has essentially stabilized at its discrete optimum, further improvement is naturally tied to regularizing the concurrence profile.

(iii) When \(p\mid nk\), we have perfect replication balance and \(c_{ij}\equiv c\) for all pairs, so the equivalence becomes global and no neighborhood restriction is needed.

\end{proof}

\section{Proof of Theorem~\ref{thm:UE_equiv}}
\label{app:UE_proof}
\begin{proof}
By $Z_{ti}=2x_{ti}-1$, we have for each $i$,
\begin{equation}\label{eq:intercept-factor}
S_{1,i+1}=\sum_{t=1}^n Z_{ti}=2r_i-n.
\end{equation}
For $i\ne j$,
\begin{equation}\label{eq:factor-factor}
S_{i+1,j+1}
=\sum_{t=1}^n Z_{ti}Z_{tj}
=4\lambda_{ij}-2r_i-2r_j+n.
\end{equation}
From \eqref{eq:intercept-factor},
\begin{equation}\label{eq:mean-IF}
\overline{S}_{1,\cdot}
=\frac{1}{p}\sum_{i=1}^p(2r_i-n)
=2\bar r-n.
\end{equation}
Using \eqref{eq:factor-factor},
\begin{align}
\overline{S}_{\cdot,\cdot}
&=\frac{1}{\binom{p}{2}}
\sum_{1\le i<j\le p}\bigl(4\lambda_{ij}-2r_i-2r_j+n\bigr)\nonumber\\
&=
4\bar\lambda
-\frac{2}{\binom{p}{2}}\sum_{1\le i<j\le p}(r_i+r_j)
+n\nonumber\\
&=4\bar\lambda-4\bar r+n.
\label{eq:mean-FF}
\end{align}
Combining \eqref{eq:intercept-factor} and \eqref{eq:mean-IF}, we have
\begin{equation}\label{eq:IF-SS}
\sum_{i=1}^p\bigl(S_{1,i+1}-\overline{S}_{1,\cdot}\bigr)^2
=4\sum_{i=1}^p(r_i-\bar r)^2.
\end{equation}
From \eqref{eq:factor-factor} and \eqref{eq:mean-FF}, we have
\begin{align}
S_{i+1,j+1}-\overline{S}_{\cdot,\cdot}
&=
\bigl(4\lambda_{ij}-2r_i-2r_j+n\bigr)-\bigl(4\bar\lambda-4\bar r+n\bigr)\nonumber\\
&=
4(\lambda_{ij}-\bar\lambda)-2\bigl[(r_i-\bar r)+(r_j-\bar r)\bigr].
\label{eq:FF-centered}
\end{align}
Let $\Delta_i=r_i-\bar r$ and $\Delta_{ij}=\lambda_{ij}-\bar\lambda$. Then
\begin{align}
\sum_{i<j}\bigl(S_{i+1,j+1}-\overline{S}_{\cdot,\cdot}\bigr)^2
&=\sum_{i<j}\Bigl[16\Delta_{ij}^2
-16\Delta_{ij}(\Delta_i+\Delta_j)
+4(\Delta_i+\Delta_j)^2\Bigr]\nonumber\\
&=16\sum_{i<j}\Delta_{ij}^2
-16\sum_{i<j}\Delta_{ij}(\Delta_i+\Delta_j)
+4\sum_{i<j}(\Delta_i+\Delta_j)^2.
\label{eq:FF-expand}
\end{align}
Since for each $i$, $\sum_{j\ne i}\lambda_{ij}=(k-1)r_i$, we have $\sum_{j\ne i}\Delta_{ij}=(k-1)r_i-(p-1)\bar\lambda$, and $\sum_{i<j}\Delta_{ij}(\Delta_i+\Delta_j) =(k-1)\sum_{i=1}^p \Delta_i r_i$.
Since $\sum_i\Delta_i=0$ and $r_i=\Delta_i+\bar r$, we get
$\sum_{i=1}^p \Delta_i r_i=\sum_{i=1}^p \Delta_i^2$, so
\begin{equation}\label{eq:cross-simplified}
\sum_{i<j}\Delta_{ij}(\Delta_i+\Delta_j)
=(k-1)\sum_{i=1}^p\Delta_i^2.
\end{equation}
Expanding and using $\sum_i\Delta_i=0$,
\begin{equation}\label{eq:pair-Delta}
\sum_{i<j}(\Delta_i+\Delta_j)^2
=(p-2)\sum_{i=1}^p\Delta_i^2.
\end{equation}
Substituting \eqref{eq:cross-simplified} and \eqref{eq:pair-Delta} into \eqref{eq:FF-expand} yields
\begin{align}
\sum_{i<j}\bigl(S_{i+1,j+1}-\overline{S}_{\cdot,\cdot}\bigr)^2
&=16\sum_{i<j}\Delta_{ij}^2
-16(k-1)\sum_{i=1}^p\Delta_i^2
+4(p-2)\sum_{i=1}^p\Delta_i^2\nonumber\\
&=16\sum_{i<j}(\lambda_{ij}-\bar\lambda)^2
+4(p-4k+2)\sum_{i=1}^p(r_i-\bar r)^2.
\label{eq:FF-final}
\end{align}
Adding the intercept--factor part \eqref{eq:IF-SS} gives
\[
UE(s^2)(\mathbf X)
=
16\sum_{i<j}(\lambda_{ij}-\bar\lambda)^2
+
4(p-4k+3)\sum_{i=1}^p(r_i-\bar r)^2,
\]
which proves \eqref{eq:UE_identity}. Under the definition \eqref{eq:Phi_BCD}, choosing $w_1=(p-4k+3)/4p$, $w_2=(p-1)/2p$ yields $UE(s^2)(\mathbf X)=\Phi_{\mathrm{BCD}}(\mathbf X)$. Therefore, when \(p>4k-3\), the two criteria have the same minimizer.
\end{proof}

\section{Proof of Theorem~\ref{thm:BayesD_main}}
\subsection{Proof of Lemma~\ref{lem:E-Frobenius_BayesD}}
\begin{proof}
The matrix $\boldsymbol{\Lambda}-\boldsymbol{\Lambda}_0$ has diagonal entries $(r_i-\bar r)$ and off-diagonal entries
$(\lambda_{ij}-\bar\lambda)$, hence
\[
\|\boldsymbol{\Lambda}-\boldsymbol{\Lambda}_0\|_F^2
=\sum_i(r_i-\bar r)^2 + 2\sum_{i<j}(\lambda_{ij}-\bar\lambda)^2
= \mathcal{V}_1(\mathbf{X})+2\mathcal{V}_2(\mathbf{X}).
\]
Also $\mathbf{rr}^\top-\mathbf{r}_0\mathbf{r}_0^\top=\mathbf{r}_0\mathbf{b}^\top+\mathbf{b}\mathbf{r}_0^\top+\mathbf{bb}^\top$ and $\|\mathbf{uv}^\top\|_F=\|\mathbf{u}\|_2\|\mathbf{v}\|_2$, yielding
\[
\|\mathbf{rr}^\top-\mathbf{r}_0\mathbf{r}_0^\top\|_F
\le 2\|\mathbf{r}_0\|_2\|\mathbf{b}\|_2+\|\mathbf{b}\|_2^2
=2\sqrt p\,\bar r\,\sqrt{\mathcal{V}_1(\mathbf{X})}+\mathcal{V}_1(\mathbf{X}).
\]
Since $\mathbf{E}=(\boldsymbol{\Lambda}-\boldsymbol{\Lambda}_0)-\frac{1}{n}(\mathbf{rr}^\top-\mathbf{r}_0\mathbf{r}_0^\top)$, the upper bound
$U(\mathcal{V}_1,\mathcal{V}_2)$ follows from the triangle inequality and the lower bound
$L(\mathcal{V}_1,\mathcal{V}_2)$ follows from the reverse triangle inequality.
\end{proof}

\subsection{Proof of Theorem~\ref{thm:gap-upper_BayesD}}
\begin{proof}
Write $\mathbf{C}=\mathbf{C}_0+\mathbf{E}$ and set $\boldsymbol{\Delta}:=\mathbf{A}^{-1/2}\mathbf{E}\mathbf{A}^{-1/2}$. Then
\[
G_\alpha(\mathbf{X})=\log\det(\mathbf{A})-\log\det(\mathbf{A+E})=-\log\det(\mathbf{I}+\boldsymbol{\Delta}).
\]
Condition \eqref{eq:neighborhood_BayesD} implies
$\|\boldsymbol{\Delta}\|_2\le \|\mathbf{A}^{-1/2}\|_2^2\|\mathbf{E}\|_2 \le \|\mathbf{E}\|_F/\alpha \le \rho$.
For $\|\boldsymbol{\Delta}\|_2\le\rho<1$, the standard perturbation inequality gives
\[
-\log\det(\mathbf{I}+\boldsymbol{\Delta})\le |\mathrm{tr}(\boldsymbol{\Delta})| + \frac{1}{2(1-\rho)}\|\boldsymbol{\Delta}\|_F^2.
\]
Next, $|\mathrm{tr}(\boldsymbol{\Delta})|=|\mathrm{tr}(\mathbf{A}^{-1}\mathbf{E})|\le \|\mathbf{A}^{-1}\|_F\|\mathbf{E}\|_F$ and
$\|\boldsymbol{\Delta}\|_F\le \|\mathbf{A}^{-1/2}\|_2^2\|\mathbf{E}\|_F=\|\mathbf{E}\|_F/\alpha$.
Finally, apply Lemma~\ref{lem:E-Frobenius_BayesD} to bound $\|\mathbf{E}\|_F$ by $U(\mathcal{V}_1,\mathcal{V}_2)$.
\end{proof}

\subsection{Proof of Theroem~\ref{thm:BayesD_main}}
\begin{proof}
Fix $\mathbf{X}\in\boldsymbol{\Delta}_{n,p,k}$ and write the eigenvalues of $\mathbf{C}$ as
$0=\lambda_1(\mathbf{C})<\lambda_2(\mathbf{C})\le\cdots\le \lambda_p(\mathbf{C})$.
Define the normalized determinant polynomial
\[
F_\mathbf{X}(\alpha):=\alpha^{-1}\det(\mathbf{C}+\alpha \mathbf{I}_p)=\prod_{i=2}^p (\lambda_i(\mathbf{C})+\alpha),\qquad \alpha\ge 0.
\]
For any fixed $\alpha>0$, maximizing $f_\alpha(\mathbf{X})$ over $\mathbf{X}\in\boldsymbol{\Delta}_{n,p,k}$ is equivalent to maximizing
$F_\mathbf{X}(\alpha)$, because $\log(\alpha)$ is common to all competitors.

\paragraph{Proof of (i).}
Let $\mathcal{D}^*:=\arg\max_{\mathbf{X}\in\boldsymbol{\Delta}_{n,p,k}} F_\mathbf{X}(0)$, i.e., the centered $D$-optimal set since
$F_\mathbf{X}(0)=\operatorname{pdet}(\mathbf{C})$.
For each $\mathbf{Y}\in\boldsymbol{\Delta}_{n,p,k}\setminus\mathcal{D}^*$, define
\[
g_\mathbf{Y}(\alpha):=\max_{\mathbf{X}\in\mathcal{D}^*}F_\mathbf{X}(\alpha)-F_\mathbf{Y}(\alpha).
\]
Because $\boldsymbol{\Delta}_{n,p,k}$ is finite, the function $\max_{\mathbf{X}\in\mathcal{D}^*}F_\mathbf{X}(\alpha)$ is continuous, hence $g_Y(\alpha)$ is continuous.
Moreover,
\[
g_\mathbf{Y}(0)=\max_{\mathbf{X}\in\mathcal{D}^*}F_\mathbf{X}(0)-F_\mathbf{Y}(0) \;>\;0.
\]
Thus, for each such $\mathbf{Y}$, there exists $\alpha_\mathbf{Y}^\ast>0$ such that $g_\mathbf{Y}(\alpha)>0$ for all $\alpha\in(0,\alpha_\mathbf{Y}^\ast)$.
Let
\[
\alpha^\ast:=\min_{\mathbf{Y}\in\boldsymbol{\Delta}_{n,p,k}\setminus\mathcal{D}^*}\alpha_\mathbf{Y}^\ast \;>\;0.
\]
Then for every $\alpha\in(0,\alpha^\ast)$ and every $Y\notin\mathcal{D}^*$, we have
$F_\mathbf{Y}(\alpha)<\max_{\mathbf{X}\in\mathcal{D}^*}F_\mathbf{X}(\alpha)$, so no $Y\notin\mathcal{D}^*$ can maximize $F_X(\alpha)$.
Therefore $\arg\max f_\alpha \subseteq \mathcal{D}^*$ for all $\alpha\in(0,\alpha^\ast)$.

\paragraph{Proof of (ii).}
Since $F_\mathbf{X}(\alpha)$ is a polynomial of degree $p-1$,
\[
F_\mathbf{X}(\alpha)=\alpha^{p-1}+c_{\mathbf{X},p-2}\alpha^{p-2}+c_{\mathbf{X},p-3}\alpha^{p-3}+\cdots+c_{\mathbf{X},0},
\]
where the first two coefficients satisfy
\[
c_{\mathbf{X},p-2}=\sum_{i=2}^p\lambda_i(\mathbf{C})=\mathrm{tr}(\mathbf{C}),
\qquad
c_{\mathbf{X},p-3}=\sum_{2\le i<j\le p}\lambda_i(\mathbf{C})\lambda_j(\mathbf{C})
=\tfrac12\{\mathrm{tr}(\mathbf{C})^2-\mathrm{tr}(\mathbf{C}^2)\}.
\]
Under the TCARD row-sum constraint $\sum_{i=1}^p r_i=nk$, we have the identity
\[
\mathrm{tr}(\mathbf{C})=nk-\frac{1}{n}\sum_{i=1}^p r_i^2
= nk-\frac{1}{n}\{p\bar r^2+\mathcal V_1(\mathbf{X})\}.
\]
Hence minimizing $\mathcal V_1(\mathbf{X})$ is equivalent to maximizing $c_{\mathbf{X},p-2}=\mathrm{tr}(\mathbf{C})$.

Now assume $p\mid nk$ so that column-balanced designs exist and the minimum of $\mathcal V_1$ over $\boldsymbol{\Delta}_{n,p,k}$ equals $0$.
Let $\mathbf{X}_\phi$ be a global minimizer of $\Phi_{\mathrm{BCD}}$.
Then necessarily $\mathcal V_1(\mathbf{X}_\phi)=0$, and within the column-balanced subclass it minimizes $\mathcal V_2$.

When $\mathcal V_1(\mathbf{X})=0$ (i.e., $r_i\equiv\bar r$), the diagonal entries of $\mathbf{C}$ are constant across $i$, and
$\mathbf{C}_{ij}=\lambda_{ij}-\bar r^2/n$ for $i\neq j$. Since $\sum_{i<j}(\lambda_{ij}-\bar\lambda)=0$,
minimizing $\mathcal V_2(\mathbf{X})=\sum_{i<j}(\lambda_{ij}-\bar\lambda)^2$ over the column-balanced subclass
is equivalent to minimizing $\sum_{i<j} \mathbf{C}_{ij}^2$, hence to minimizing $\mathrm{tr}(\mathbf{C}^2)$
(the diagonal contribution to $\mathrm{tr}(\mathbf{C}^2)$ is constant under $\mathcal V_1=0$).
Therefore, among all designs with $\mathcal V_1=0$, $\mathbf{X}_\phi$ maximizes $c_{\mathbf{X},p-3}$.

Fix any $\mathbf{X}\in\boldsymbol{\Delta}_{n,p,k}$ with $\mathbf{X}\neq \mathbf{X}_\phi$ and define the polynomial difference
$h_\mathbf{X}(\alpha):=F_{\mathbf{X}_\phi}(\alpha)-F_\mathbf{X}(\alpha)$.
If $\mathcal V_1(\mathbf{X})>0$, then $c_{\mathbf{X}_\phi,p-2}>c_{\mathbf{X},p-2}$, so the leading nonzero coefficient of $h_\mathbf{X}(\alpha)$
(at order $\alpha^{p-2}$) is positive. If $\mathcal V_1(\mathbf{X})=0$ but $\mathcal V_2(\mathbf{X})>\mathcal V_2(\mathbf{X}_\phi)$,
then $c_{\mathbf{X}_\phi,p-2}=c_{\mathbf{X},p-2}$ and $c_{\mathbf{X}_\phi,p-3}>c_{\mathbf{X},p-3}$, so the leading nonzero coefficient of $h_\mathbf{X}(\alpha)$
(at order $\alpha^{p-3}$) is positive. In either case, there exists $\alpha_\mathbf{X}^{\ast\ast}>0$ such that
$h_\mathbf{X}(\alpha)>0$ for all $\alpha>\alpha_\mathbf{X}^{\ast\ast}$.
Because $\boldsymbol{\Delta}_{n,p,k}$ is finite, we may take
\[
\alpha^{\ast\ast}:=\max_{\mathbf{X}\in\boldsymbol{\Delta}_{n,p,k}\setminus\{\mathbf{X}_\phi\}} \alpha_\mathbf{X}^{\ast\ast} < \infty,
\]
so that for all $\alpha>\alpha^{\ast\ast}$ we have $F_{\mathbf{X}_\phi}(\alpha)>F_\mathbf{X}(\alpha)$ for every $\mathbf{X}\neq \mathbf{X}_\phi$.
Thus $\mathbf{X}_\phi$ maximizes $F_\mathbf{X}(\alpha)$, equivalently maximizes $f_\alpha(\mathbf{X})$, for all sufficiently large $\alpha$.
\end{proof}

\section{Additional numerical results}\label{app:add_num_res}

\begin{sidewaysfigure}[htp]
\centering
\includegraphics[width=\linewidth]{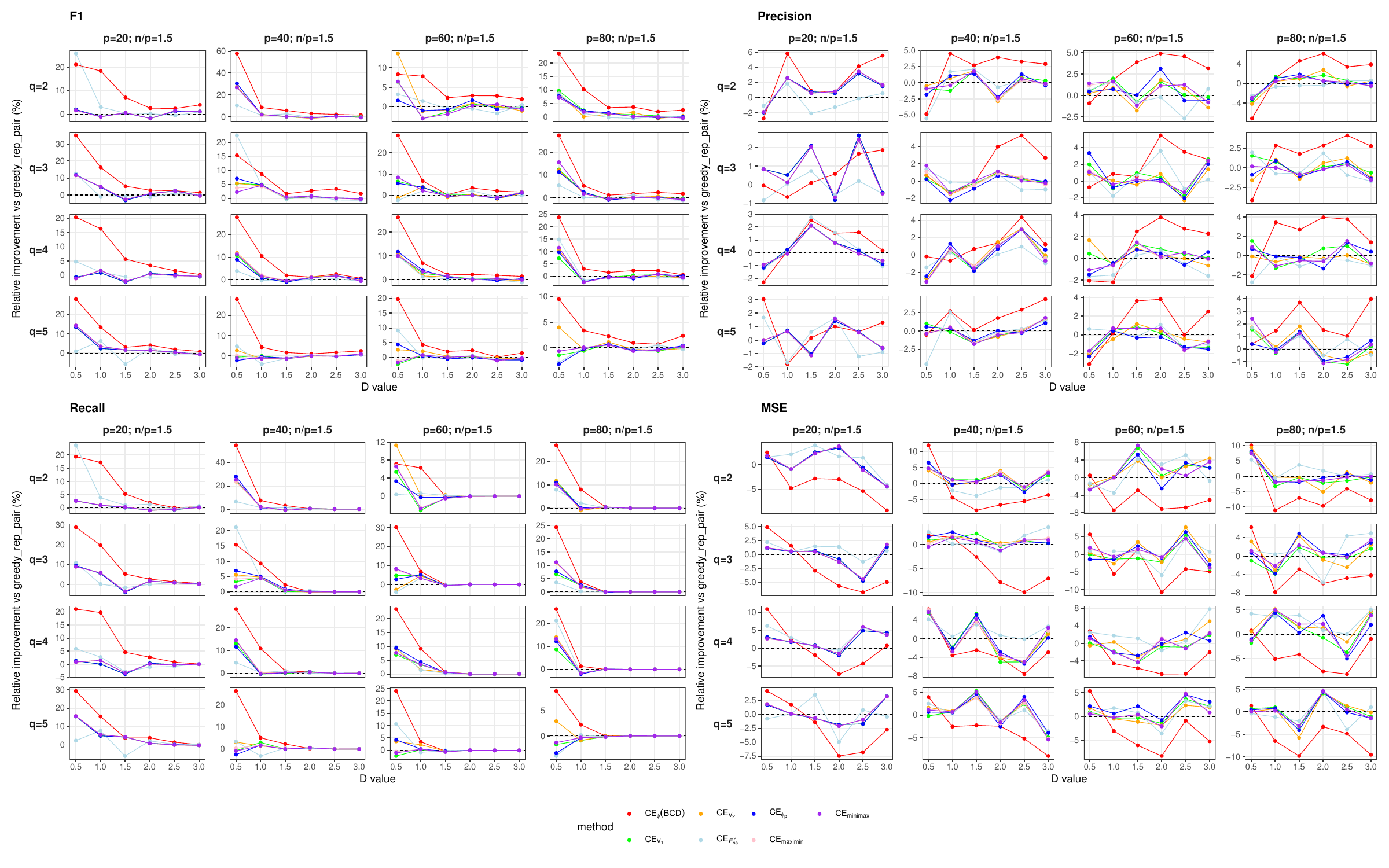}
\caption{Relative improvement over the greedy-rep-pair baseline in F1, precision, recall, and MSE across experimental settings ($k/p=0.1, n/p=1.5$)}
\label{fig:w1_lineplot_k0.1_n1.5}
\end{sidewaysfigure}

\begin{sidewaysfigure}[htp]
\centering
\includegraphics[width=\linewidth]{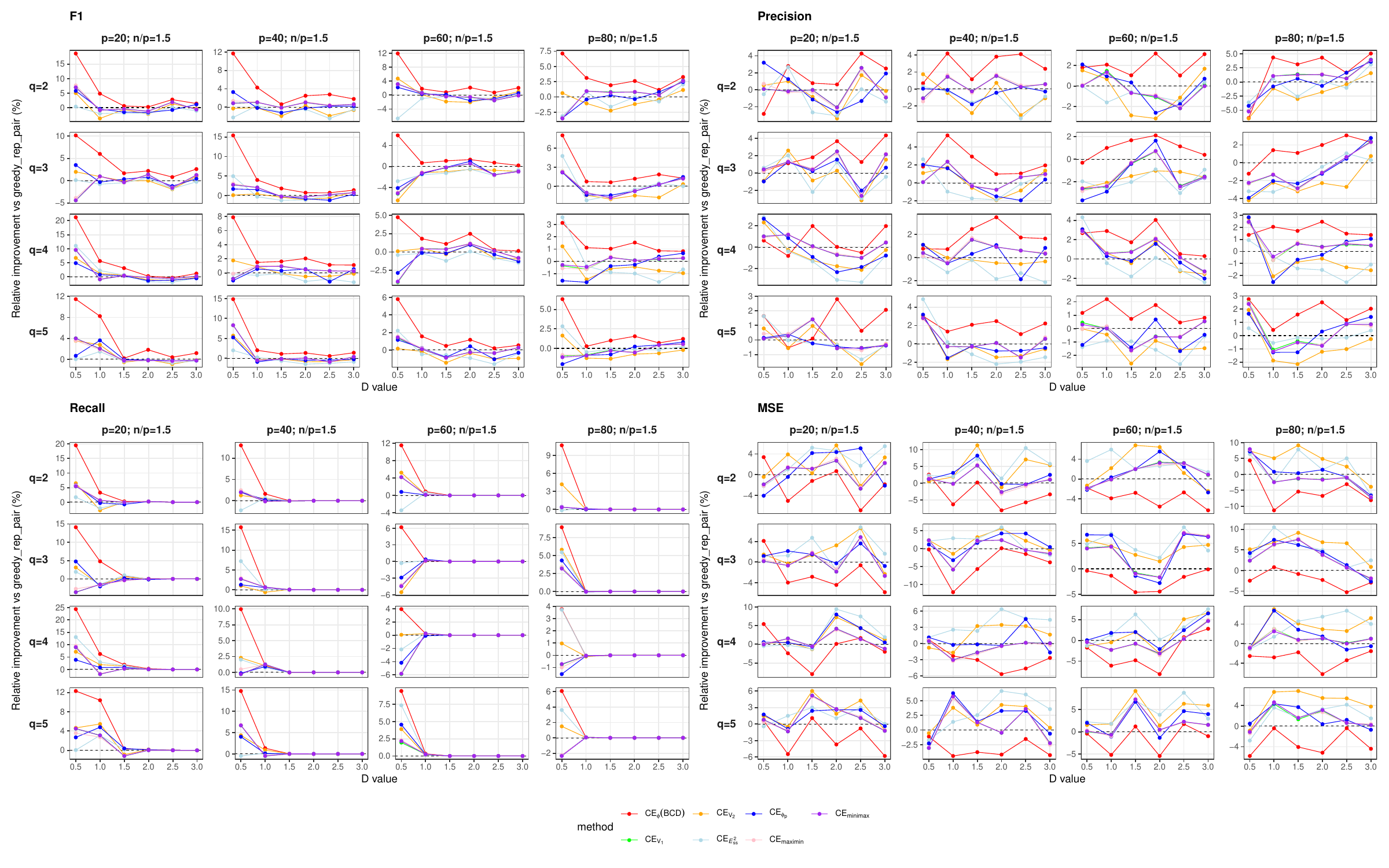}
\caption{Relative improvement over the greedy-rep-pair baseline in F1, precision, recall, and MSE across experimental settings ($k/p=0.5, n/p=1.5$)}
\label{fig:w1_lineplot_k0.5_n1.5}
\end{sidewaysfigure}

\begin{sidewaysfigure}[htp]
\centering
\includegraphics[width=\linewidth]{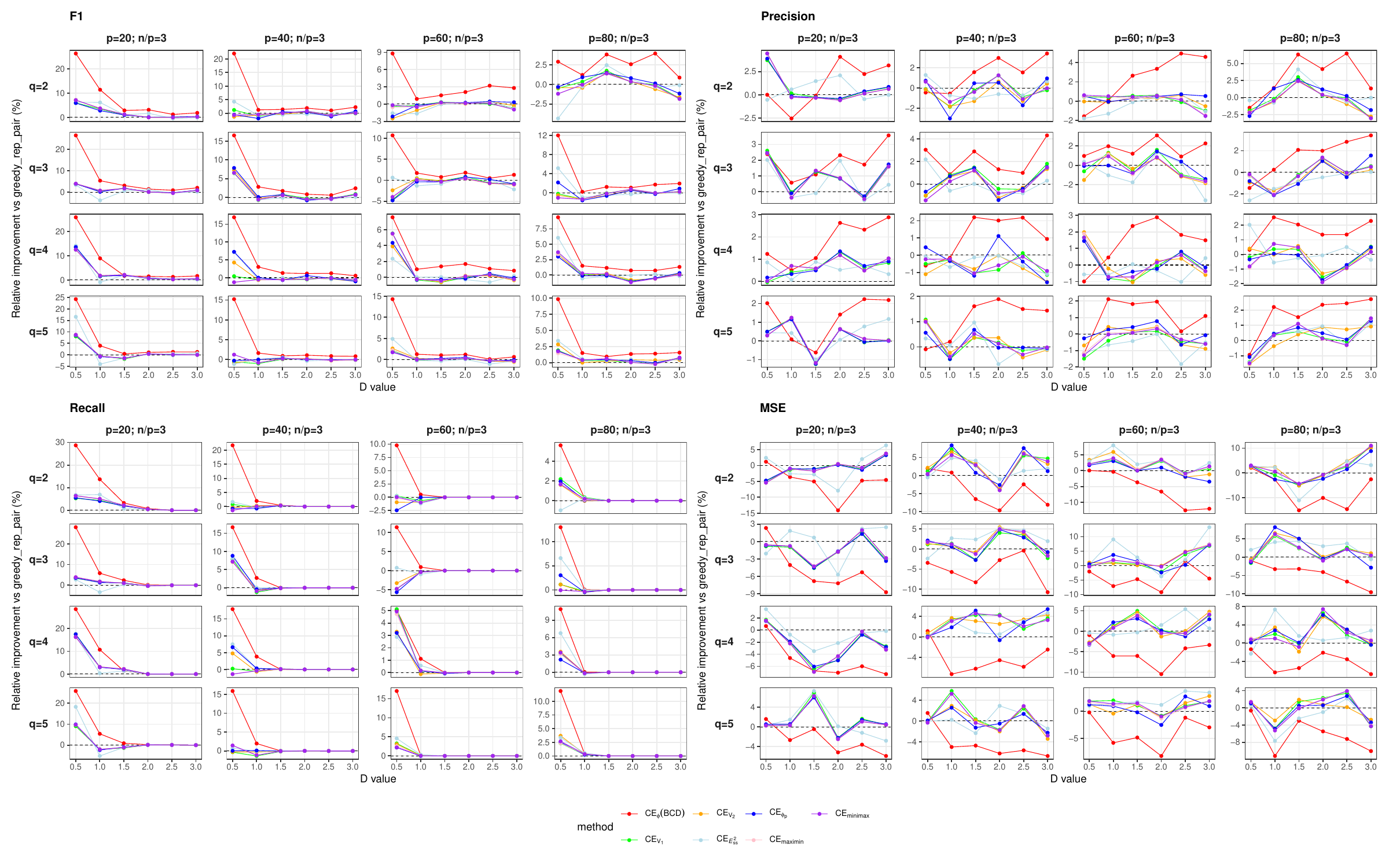}
\caption{Relative improvement over the greedy-rep-pair baseline in F1, precision, recall, and MSE across experimental settings ($k/p=0.1, n/p=3$)}
\label{fig:w1_lineplot_k0.1_n3}
\end{sidewaysfigure}

\begin{sidewaysfigure}[htp]
\centering
\includegraphics[width=\linewidth]{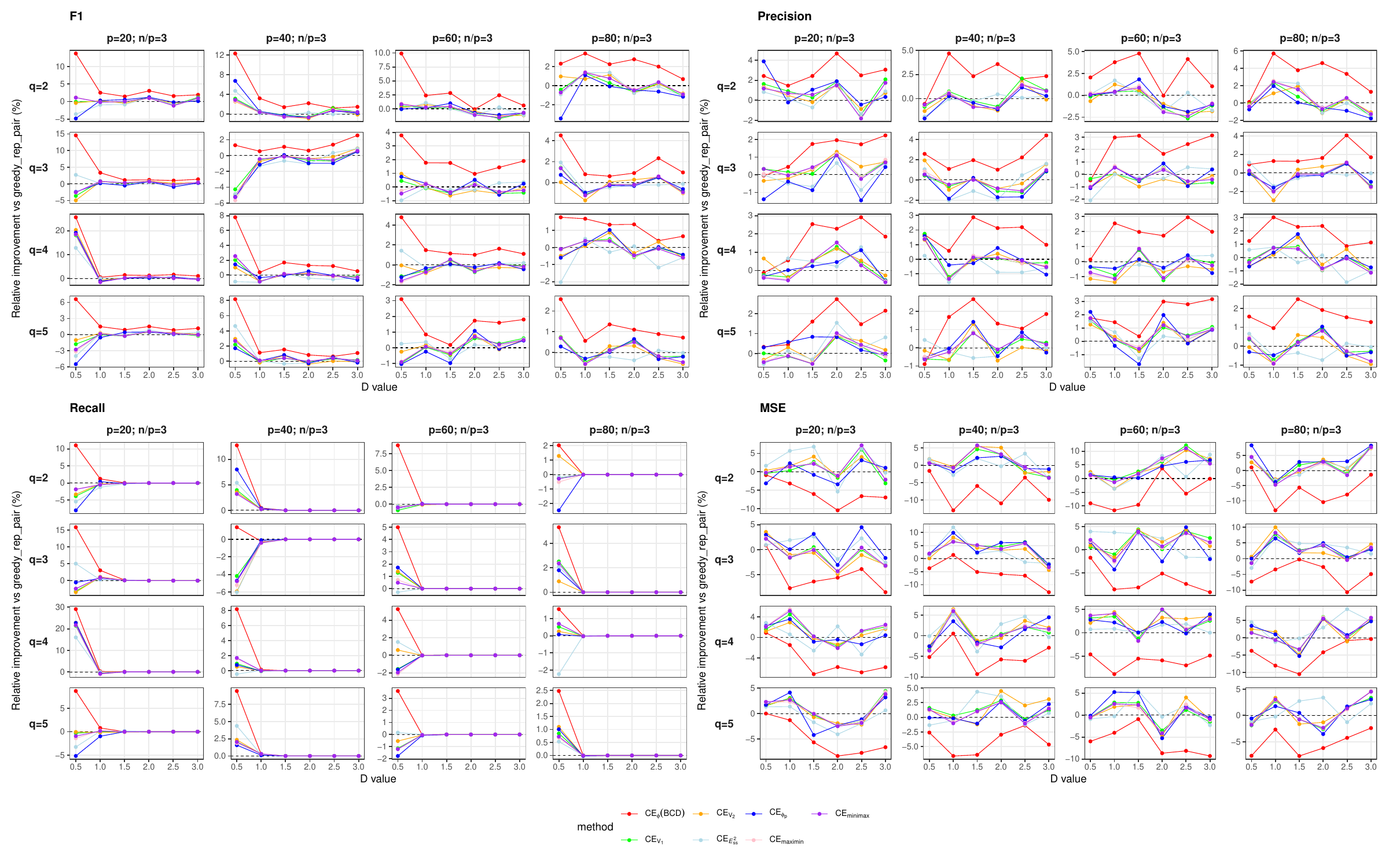}
\caption{Relative improvement over the greedy-rep-pair baseline in F1, precision, recall, and MSE across experimental settings ($k/p=0.25, n/p=3$)}
\label{fig:w1_lineplot_k0.25_n3}
\end{sidewaysfigure}

\begin{sidewaysfigure}[htp]
\centering
\includegraphics[width=\linewidth]{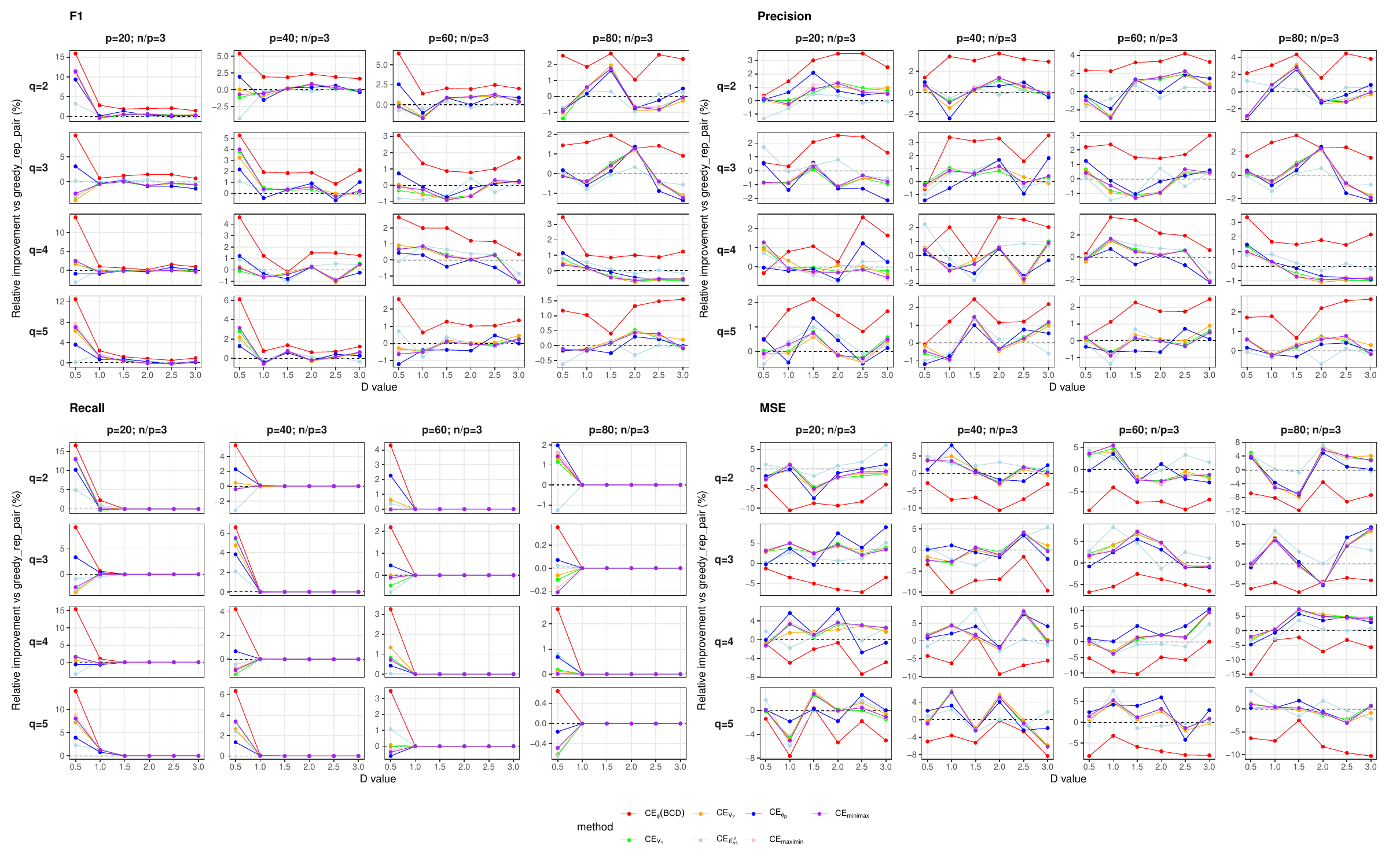}
\caption{Relative improvement over the greedy-rep-pair baseline in F1, precision, recall, and MSE across experimental settings ($k/p=0.5, n/p=3$)}
\label{fig:w1_lineplot_k0.5_n3}
\end{sidewaysfigure}

\subsection{Surrogate tuning at $q^\star = k$}\label{app:qeqk_tuning}

To assess the practical value of the surrogate choice $q^\star = k$, we repeat the simulation-based tuning procedure (Algorithm~2) with the screening plan restricted to
$\mathcal{Q} = \{k\}$, so that $w_1$ is calibrated using only
the projection order $q = k$.  The resulting design is then
evaluated at all projection orders $q \in \{2, 3, 4, 5\}$.
We focus on the strong-constraint regime $k/p = 0.1$ with
$n/p = 1.5$, since this is the only regime in which the surrogate $q^\star = k$ falls within the evaluation range: $k = 2, 3, 4, 5$ for $p = 20, 30, 40, 50$, respectively.

Figure~\ref{fig:qeqk_tuning} reports the relative improvement of each method over \textit{greedy-rep-pair}, decomposed into F1, precision, recall, and MSE.  A clear diagonal pattern emerges: CE-$\Phi_{\mathrm{BCD}}$ shows substantial F1 and recall gains at the cell where $q = k$ but does not improve over the baseline at the remaining projection orders within each~$p$. This indicates that the $w_1$ calibration is sharply projection-order-specific: the weight that optimizes screening performance at one value of~$q$ does not transfer to other values. At the mismatched cells, the tuned design neither
helps nor hurts, behaving similarly to the untuned default. The surrogate $q^\star = k$ is therefore best understood as a low-risk option rather than a broadly effective heuristic:
genuinely beneficial when the active set size is close to~$k$,
and unlikely to cause harm otherwise.

\begin{sidewaysfigure}[htp]
\centering
\includegraphics[width=\linewidth]{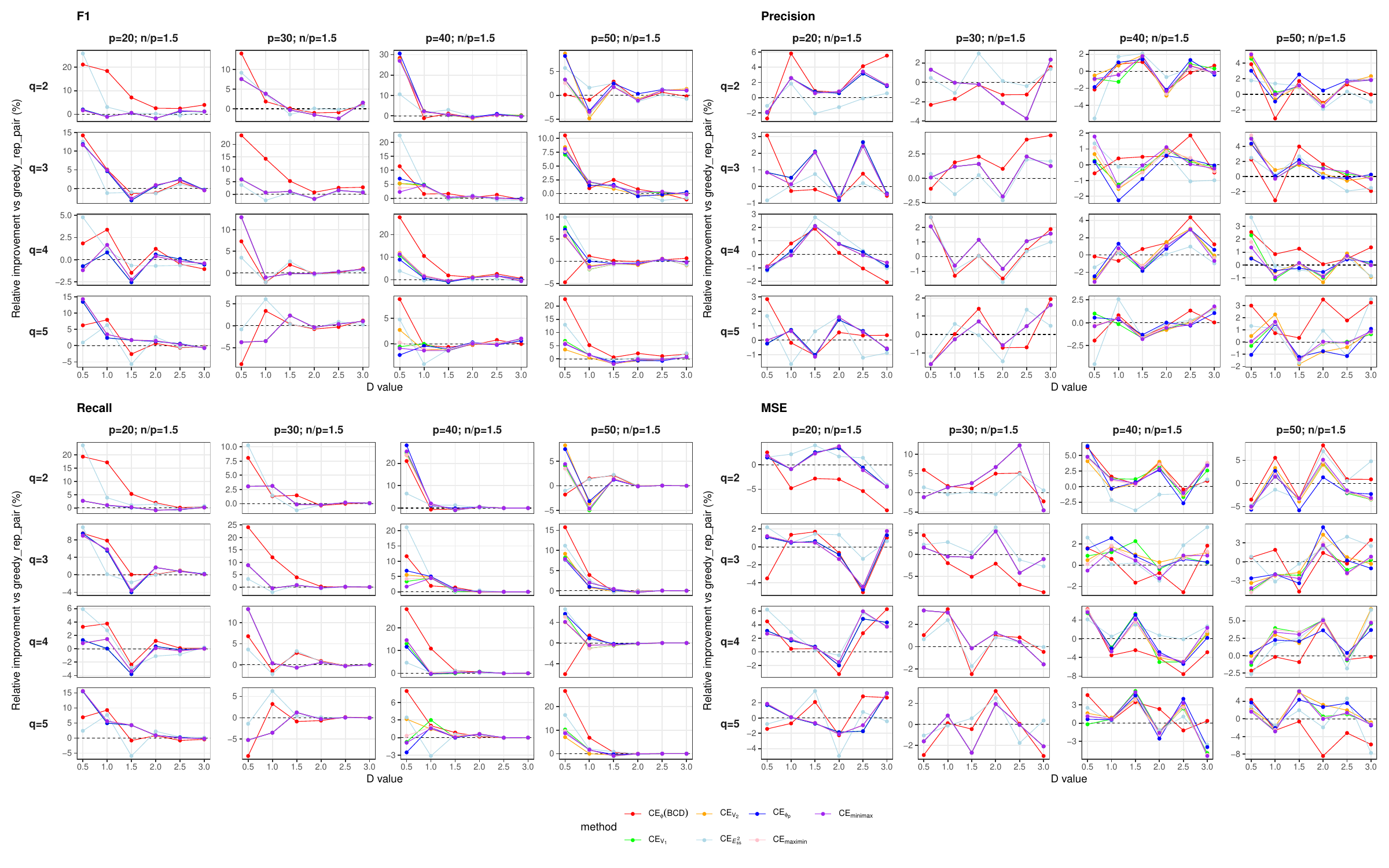}
\caption{Relative improvement over the greedy-rep-pair baseline in F1, precision, recall, and MSE across experimental settings ($k/p=0.1, n/p=1.5$)}
\label{fig:qeqk_tuning}
\end{sidewaysfigure}

\newpage 

\newpage
\end{document}